\newif\ifC
\Cfalse  
\ifC
    \documentclass[a4paper,UKenglish,cleveref, autoref, thm-restate]{lipics-v2021}
\else
    \documentclass{article}
\fi

\ifC
\title{Improved Approximation Algorithms for n-Pairs Shortest Paths}
\else
\title{Improved Approximation Algorithms for $n$-Pairs Shortest Paths}
\fi

\ifC
\else
\fi

\ifC
\author{Avi Kadria}
{Department of Computer Science, Bar-Ilan University, Ramat Gan 5290002, Israel}
{avi.kadria3@gmail.com}
{https://orcid.org/0000-0001-8449-3284}
{}

\author{Liam Roditty}
{Department of Computer Science, Bar-Ilan University, Ramat Gan 5290002, Israel}
{liam.roditty@biu.ac.il}
{https://orcid.org/0000-0002-5289-198X}
{Supported in part by BSF grants 2016365 and 2020356.}

\author{Virginia Vassilevska Williams}
{Department of Electrical Engineering and Computer Science and CSAIL, MIT, Cambridge, MA, USA}
{virgi@mit.edu}
{https://orcid.org/0000-0003-4844-2863}
{Supported in part by NSF CAREER Award 1651838, NSF Grants CCF-1909429 and CCF-2129139, BSF grants 2016365 and 2020356, a Google Research Fellowship, and a Sloan Research Fellowship.}

\authorrunning{A. Kadria, L. Roditty, and V. Vassilevska Williams}
\Copyright{Avi Kadria, Liam Roditty, and Virginia {Vassilevska Williams}}

\ccsdesc[500]{Theory of computation~Approximation algorithms analysis}
\ccsdesc[500]{Theory of computation~Shortest paths}
\ccsdesc[300]{Mathematics of computing~Graph algorithms}

\keywords{Fine-grained complexity, Graph algorithm, Graph distances, n pairs shortest paths, all pairs shortest paths, distance oracles} 

\relatedversion{} 

\EventEditors{Philip Bille, Seth Pettie, and Sabine Storandt}
\EventNoEds{3}
\EventLongTitle{34th Annual European Symposium on Algorithms (ESA 2026)}
\EventShortTitle{ESA 2026}
\EventAcronym{ESA}
\EventYear{2026}
\EventDate{August 31--September 4, 2026}
\EventLocation{L'Aquila, Italy}
\EventLogo{}
\SeriesVolume{388}
\ArticleNo{64}

\else
\author{Avi Kadria\thanks{Department of Computer Science, Bar Ilan University, Ramat Gan 5290002, Israel. E-mail {\tt avi.kadria3@gmail.com}.} \and Liam Roditty\thanks{Department of Computer Science, Bar Ilan University, Ramat Gan 5290002, Israel. E-mail {\tt liam.roditty@biu.ac.il}. Supported in part by BSF grants 2016365 and 2020356.} \and Virginia Vassilevska Williams\thanks{Department of Electrical Engineering and Computer Science and CSAIL, MIT, Cambridge, MA, USA. E-mail {\tt virgi@mit.edu}. Supported in part by NSF CAREER Award 1651838, NSF Grants  CCF-1909429 and CCF- 2129139, BSF grants 2016365 and 2020356, a Google Research Fellowship and a Sloan Research Fellowship.}}

\fi

\ifC
\else
\usepackage[margin=1in]{geometry}
\fi
\usepackage{setspace}
\usepackage{multirow}
\usepackage{makecell}
\usepackage[bottom]{footmisc}

\usepackage{epigraph}
\usepackage{graphicx}
\usepackage{tikz}
\usetikzlibrary{arrows,automata,patterns}
\ifC\else\usepackage{enumitem}\fi
\usepackage{hyperref}
\usepackage{lipsum}

\usepackage[linesnumbered,ruled,vlined,boxed,algo2e]{algorithm2e}
\usepackage{nicefrac}
\usepackage{amsmath}
\usepackage{float}
\usepackage{mathtools}
\usepackage{amsthm}
\ifC\else
\usepackage{fullpage}
\fi
\usepackage{amsfonts}
\usepackage{clipboard}
\ifC
\else
\usepackage[nameinlink]{cleveref}
\newtheorem{theorem}{Theorem}

\newtheorem{lemma}{Lemma}

\newtheorem{corollary}[theorem]{Corollary}

\fi
\newtheorem{cclaim}{Claim}[lemma]

\newtheorem{subclaim}{Claim}[cclaim]

\newtheorem{question}{Question}

\Crefname{subclaim}{Claim}{Claims}
\Crefname{enumi}{(item)}{(items)}

\newcommand{\Reminder}[1]{

\vspace{0.5em}
\noindent\textbf{Reminder of~\autoref{#1}.} \textit{\Paste{#1}}
\vspace{0.5em}

}

\usepackage{xcolor}
\definecolor{DarkGreen}{RGB}{1,50,32}
\usepackage{pgfplots}
\pgfplotsset{compat=1.18}

\usepackage{silence}
\usepackage{chngcntr}
\counterwithin*{equation}{section}
\counterwithin*{equation}{subsection}

\begin{document}
\ActivateWarningFilters[pdftoc]
\newcommand{\defeq}{:=}
\newcommand{\eps}{\varepsilon}

\newcommand{\liam}[1]{{\color{red} \textbf{Liam}: #1}} 
\newcommand{\avi}[1]{{\color{purple} \textbf{Avi}: #1}} 
\newcommand{\virgi}[1]{{\color{red} \textbf{Virgi}: #1}} 
\newcommand{\new}[1]{{\color{blue} #1}}

\newcommand{\blue}[1]{{\color{blue}#1}}

\newcommand{\ReturnCode}{\textbf{return}}
\newcommand{\codestyle}[1]{\texttt{#1}}
\newcommand{\Initialize}{\mbox{\codestyle{Initialize}}}
\newcommand{\Cycle}{\mbox{\codestyle{Cycle}}}
\newcommand{\ADO}{\mbox{\codestyle{ADO}}}
\newcommand{\hADO}{\mbox{\codestyle{hADO}}}
\newcommand{\Query}{\mbox{\codestyle{.Query}}}
\newcommand{\THZQuery}{\mbox{\codestyle{TZQuery}}}
\newcommand{\ADOQuery}{\mbox{\codestyle{ADO.Query}}}
\newcommand{\ConstructADO}{\codestyle{ConstructADO}}
\newcommand{\FastAPSP}{\codestyle{FastAPSP}}
\newcommand{\Construct}{\mbox{\codestyle{Construct}}}
\newcommand{\Spanner}{\codestyle{Spanner}}
\newcommand{\Emulator}{\codestyle{Emulator}}
\newcommand{\RT}{\mbox{\codestyle{RT}}}
\renewcommand{\L}{\mbox{\codestyle{L}}}
\newcommand{\CycleOdd}{\codestyle{CycleOdd}}
\newcommand{\BallOrCycle}{\codestyle{BallOrCycle}}
\newcommand{\ClusterOrCycleBounded}{\codestyle{ClusterOrCycleBounded}}
\newcommand{\ApproximateANSC}{\codestyle{ApproximateANSC}}
\newcommand{\ApproximateANSCUnweighted}{\codestyle{ApproximateANSCUnweighted}}

\newcommand{\ClusterOrCycle}{\codestyle{ClusterOrCycle}}
\newcommand{\SimpleCycle}{\codestyle{SimpleCycle}}
\newcommand{\Next}{\codestyle{Next}}
\newcommand{\Sample}{\codestyle{Sample}}
\newcommand{\Dijkstra}{\codestyle{Dijkstra}}
\newcommand{\Preprocess}{\codestyle{Preprocess}}
\newcommand{\HashTable}{\codestyle{HashTable}}
\newcommand{\Heap}{\codestyle{Heap}}
\newcommand{\RelaxNext}{\codestyle{RelaxNext}}
\newcommand{\ODD}{_{\mathrm{ODD}}}

\newcommand{\PreprocessGraph}{\codestyle{Initialize}}
\newcommand{\Route}{\codestyle{Route}}
\newcommand{\TreeRoute}{\codestyle{TreeRoute}}
\newcommand{\N}{\mathbb{N}}
\newcommand{\MinCycle}{\codestyle{MinCycle}}
\newcommand{\Ball}{\codestyle{Ball}}
\newcommand{\DistanceOracle}{\codestyle{TZ-DistanceOracle}}
\newcommand{\SparseOrCycle}{\codestyle{SparseOrCycle}}
\newcommand{\Intersection}{\codestyle{Intersection}}
\newcommand{\CycleAdditive}{\codestyle{CycleAdditive}}
\newcommand{\GenerateSi}{\codestyle{ComputeS}}

\newcommand{\codeNull}{\codestyle{null}}
\newcommand{\codeYes}{\codestyle{Yes}}
\newcommand{\codeNo}{\codestyle{No}}
\newcommand{\codeAnd}{~ \mathrm{and} ~}
\newcommand{\codeOr}{~ \mathrm{or} ~}
\newcommand{\wt}{\ell}
\newcommand{\Cl}{CL}
\newcommand{\CL}{CL}
\newcommand{\cl}{c\ell}
\newcommand{\etal}{\textit{et~al.}}
\newcommand{\EQ}{\;=\;}
\newcommand{\GE}{\;\ge\;}
\newcommand{\Ot}{\tilde{O}}
\newcommand{\stactri}{\stackrel\triangle}

\newcommand{\tPSP}{$t$-PSP\xspace}
\newcommand{\nPSP}{$n$-PSP\xspace}
\newcommand{\ParameterizedNPSP}{\codestyle{\codestyle{Parameterized-nPSP}}}
\newcommand{\EE}{\mathbb{E}}
\newcommand{\RR}{\mathbb{R}}
\newcommand{\nPSPTheorem}[2]{
    Let #1, and let $k\ge 4$ be an integer parameter. Let $I$ be a set of $n$ vertex pairs. There is an algorithm that computes in $\Ot(mn^{1/k}+n^{1+2/k})$ time, for every $\pair{u,v}\in I$, an estimate $\hat{d}(u,v)$ such that $d(u,v) \le \hat{d}(u,v) \le {#2}$.
}

\newcommand{\nPSPTheoremt}[2]{
    Let #1, and let $k\ge 4$ be an integer parameter. Let $I$ be a set of $n$ vertex pairs. There is an algorithm that computes in $\Ot(m^{1-1/k}n^{2/k})$ time, for every $\pair{u,v}\in I$, an estimate $\hat{d}(u,v)$ such that $d(u,v) \le \hat{d}(u,v) \le {#2}$.
}

\ifC
\renewcommand{\paragraph}[1]{\textbf{#1}}
\else
\fi

\DeclarePairedDelimiter{\ceil}{\lceil}{\rceil}
\DeclarePairedDelimiter{\floor}{\lfloor}{\rfloor}
\DeclarePairedDelimiter{\pair}{\langle}{\rangle}

\maketitle

\thispagestyle{empty}
\begin{abstract}
Let $G = (V, E)$ be a graph with $n = |V|$ nodes and $m = |E|$ edges. The $t$-Pairs Shortest Paths problem, introduced by Cohen
[FOCS'93; SICOMP'99], asks to approximate the distances between $t$ prespecified pairs of vertices. Recently, this problem has received renewed attention, particularly in the case where $t = \Theta(n)$: the $n$-Pairs Shortest Paths problem. In this setting, new algorithms and conditional lower bounds have been developed by Dalirrooyfard, Jin, Vassilevska Williams, and Wein [FOCS'22], and Chechik, Hoch, and Lifshitz [SODA'25].

In this paper, we present the first algorithm for the $n$-Pairs Shortest Paths problem in \textit{weighted} undirected graphs that achieves a $(2 - \alpha)k$-approximation, for constant $\alpha > 0$, that runs in $\tilde{O}(mn^{1/k} + n^{1 + 2/k})$ time. Specifically, we present a $1.622k$-approximation, improving upon the $(2k - 3)$-approximation of Chechik, Hoch, and Lifshitz [SODA'25] for graphs that are not super sparse, which answers in the affirmative the open question posed by them.
We also develop improved approximation algorithms with better tradeoffs for unweighted graphs and dense weighted graphs that improve upon the results of Dalirrooyfard \etal~and Chechik, Hoch, and Lifshitz.

Our main technical contribution is the new \textit{heavy-edge} technique. Using this technique, we transform an algorithm with an approximation guarantee that depends on $W_{uv}$, the weight of the heaviest edge on the shortest path between $u$ and $v$, into an algorithm with purely multiplicative approximation that does not depend on $W_{uv}$. 

\end{abstract}
\clearpage
\newpage
\thispagestyle{empty}

\section{Introduction}

Let $G = (V, E)$ be a graph with $n = |V|$ nodes and $m = |E|$  edges, and let $d(u,v)$ be the distance between $u$ and $v$, for every $u,v\in V$. 
Let $I \subseteq V \times V$ be a set of $t$ prespecified pairs of vertices. In the  $t$-Pairs Shortest Paths (\tPSP) problem, the goal is to efficiently compute (or approximate) the shortest path distance $d(u, v)$ for each pair $(u, v) \in I$.
The \tPSP~problem was first studied in the 1990s
by Cohen~\cite{DBLP:journals/siamcomp/Cohen98} and then
by Aingworth, Chekuri, Indyk, and Motwani~\cite{ACIM99}. Aingworth \etal~\cite{ACIM99} presented an algorithm that computes a $(1,2)$-approximation\footnote{An estimation $\hat{d}(u,v)$ of $d(u,v)$ is an
$(\alpha,\beta)$-approximation if $d(u,v)\leq \hat{d}(u,v) \leq \alpha d(u,v) + \beta$. We denote purely multiplicative approximation ($\beta = 0$) with $(\alpha)$-approximation.} for unweighted, undirected graphs in $\tilde{O}(n^{1.5} \sqrt{t} + n^2)$ time. \footnote{$\Ot$ omits polylogarithmic factors.}

A natural approach for approximating the \tPSP~problem is to use the celebrated Approximate Distance Oracles (ADOs) of Thorup and Zwick~\cite{DBLP:journals/jacm/ThorupZ05}. These oracles are constructed for weighted undirected graphs in $\tilde{O}(mn^{1/k})$ time, use $\Ot(n^{1+1/k})$ space, and guarantee a $(2k - 1)$-approximation for distances between any pair of nodes. 
Once built, they can answer each distance query in $O(k)$ time, so solving the \tPSP~problem using an ADO is straightforward and requires $\tilde{O}(mn^{1/k} + t)$ total time.

ADOs are primarily designed to optimize \textit{space efficiency} while supporting distance queries for \textit{all} pairs of vertices. In contrast, the \tPSP~problem only requires distance queries between a specific set of $t$ vertex pairs, and the main objective is to minimize \textit{running time}, not space. 
This distinction suggests that the generality of ADOs may be unnecessary in the \tPSP~setting, and that one can hope to design faster algorithms that achieve improved time-approximation tradeoffs for \tPSP.

Interestingly, although the \tPSP~problem was first introduced in the 1990s, it remained largely unexplored for over two decades. In the meantime, closely related problems, primarily focused on space-efficient representations such as pairwise spanners, distance preservers, and reachability preservers, received significantly more attention (see, for example,~\cite{CoppersmithElkin2006,AbboudBodwin2018,DBLP:conf/focs/KoganP22}).

Recently, the \tPSP problem received renewed attention in the field of fine-grained complexity.
Abboud~\etal~\cite{DBLP:conf/stoc/AbboudBKZ22, DBLP:conf/stoc/AbboudBF23} and independently Jin and Xu~\cite{jin2023removing} established several conditional lower bounds for \tPSP based on the $3$SUM hypothesis. In particular, Jin and Xu~\cite{jin2023removing} proved that there is no $\Ot(m^{1 + \frac{1}{2k + 1} - \varepsilon})$-time algorithm that achieves a $(k - \delta)$-approximation for the $m$-PSP problem in graphs where $m = \Theta(n^{1 + \frac{1}{2k - 2}})$.

Dalirrooyfard~\etal~\cite{DBLP:conf/focs/DalirrooyfardJW22} studied the \nPSP problem, which is a special case of \tPSP with $t=n$. They established several conditional lower bounds.
Among their results, they showed that, assuming the combinatorial $4k$-Clique Hypothesis,\footnote{The combinatorial $4k$-Clique Hypothesis asserts that there is no $O(n^{k - \varepsilon})$-time combinatorial algorithm for detecting a $k$-clique, for any $\varepsilon > 0$.} there is no $(1 + 1/k - \delta)$-approximation algorithm for \nPSP that runs in time $\tilde{O}(m^{2 - \frac{2}{k+1}} n^{\frac{1}{k+1} - \varepsilon})$. This lower bound \textit{matches} an upper bound that follows from the work of Agarwal~\cite{DBLP:conf/esa/Agarwal14} on distance oracles with non-constant query time.

\cite{DBLP:conf/focs/DalirrooyfardJW22} also considered \nPSP algorithms for unweighted undirected graphs. They presented a $(2 + \varepsilon, \beta)$-approximation algorithm for unweighted undirected graphs running in $\tilde{O}(m + n^{3/2 + \varepsilon})$ time, which nearly matches their lower bound for sparse graphs.
For general approximations, they \cite{DBLP:conf/focs/DalirrooyfardJW22} designed a $(2k - 2,1)$-approximation algorithm running in $\tilde{O}(mn^{1/k})$ time, improving the classic $(2k - 1)$-approximation of Thorup and Zwick for the special case of \nPSP. 

Very recently, Chechik, Hoch, and Lifshitz~\cite{DBLP:conf/soda/ChechikHL25} established new upper bounds for \nPSP~by presenting several approximation algorithms with improved tradeoffs. In particular, they present an algorithm that computes a $(2k - 3)$-approximation for \nPSP~in \textit{weighted} undirected graphs in $\Ot(mn^{1/k})$ expected time, for any integer $k \geq 3$. 
This improves upon the $(2k - 2,1)$-approximation of Dalirrooyfard \etal~\cite{DBLP:conf/focs/DalirrooyfardJW22} that worked only in unweighted graphs.
In light of their $(2k - 3)$-approximation for \nPSP~in weighted graphs, Chechik, Hoch, and Lifshitz posed the following question: 

\begin{question}[\cite{DBLP:conf/soda/ChechikHL25}]\label{Q-less-2k}
Is it possible to break the multiplicative  $(2k - O(1))$-approximation barrier for \nPSP with an algorithm running in $\tilde{O}(mn^{1/k})$ time?
\end{question}

The main result of this paper is an affirmative answer to this question for \textit{weighted} graphs that are not super sparse. Specifically, we present a $1.622k$-approximation algorithm for \nPSP~in weighted graphs that runs in $\tilde{O}(mn^{1/k} + n^{1 + 2/k})$ time. 

\begin{theorem}\label{T-nPSP-Weighted-Sparse}\Copy{T-nPSP-Weighted-Sparse}{
    \nPSPTheorem{$G=(V, E, \wt)$ be a weighted undirected graph, where $\wt:E\rightarrow \RR_{\ge0}$}{1.622k\cdot d(u,v)}}
\end{theorem}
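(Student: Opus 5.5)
The plan follows the two-stage strategy the abstract points to. First, build an algorithm whose error is multiplicative plus an additive term in $W_{uv}$, the heaviest edge on a shortest $u$–$v$ path. Then remove the dependence on $W_{uv}$ with the heavy-edge technique.

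\textbf{Stage 1: the $W_{uv}$-dependent estimator.} We build Thorup--Zwick bunches and clusters with sampled sets $V=A_0\supseteq A_1\supseteq\cdots\supseteq A_k=\emptyset$, where $|A_i|\approx n^{1-i/k}$. This costs $\Ot(mn^{1/k})$. For a query $\pair{u,v}$ we do not simply walk up the hierarchy alternately. Instead we exploit that there are only $n$ pairs. For each pair, and each level $i$ where both $p_i(u)$ and $p_i(v)$ are defined, we take the estimate $d(u,p_i(u))+d(p_i(u),v)$. When $v\in B(p_i(u))$, or more generally when the ball of $u$ and the ball of $v$ are joined by an edge, this estimate is available. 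To detect that, we scan the edges leaving balls of radius $d(u,A_i)$ around $u$ and $v$. The total work is the sum of ball sizes times degrees, which is $\Ot(mn^{1/k})$ in expectation. Because of the extra level of sampling at rate $n^{2/k}$, we can also afford to compute, for each pair, the intersection of ``two-level'' neighborhoods $B_i(u)\times B_j(v)$ for pairs of levels. This is where the $n^{1+2/k}$ term comes from: $n$ pairs times $n^{2/k}$ bunch-size products.

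The standard TZ induction gives $d(u,A_{i+1})\le d(u,A_i)+d(u,v)$. When we instead stop at the first level where the shortest path contains an edge $(x,y)$ with $x\in B(u)$ and $y\in B(v)$, the radii grow only by roughly $d(u,v)/2$ per level plus a loss of one edge weight. Summing over levels, I would aim for a bound of the form $\hat d(u,v)\le c\,k\,d(u,v)+O(k)W_{uv}$ with $c<1.622$.

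\textbf{Stage 2: the heavy-edge reduction.} Fix a threshold $\beta\in(0,1)$. If $W_{uv}\le \beta\, d(u,v)$, Stage 1 gives $(c+O(\beta))k\,d(u,v)$. Otherwise some edge $(x,y)$ on the shortest path has weight greater than $\beta\, d(u,v)$. Then $d(u,x)+d(y,v)\le(1-\beta)d(u,v)$, so both endpoints are at distance less than $(1-\beta)d(u,v)$ from $u$ and $v$ respectively, and $\ell(x,y)\ge\beta d(u,v)$.

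In this case we run Stage 1 recursively on the pairs $(u,x)$ and $(y,v)$, and for every edge incident to the relevant balls we form $\hat d(u,x)+\ell(x,y)+\hat d(y,v)$. The exact edge contributes weight $\ell(x,y)$ with no stretch, so heavy edges cost nothing multiplicative. To avoid enumerating all edges, we restrict to edges $(x,y)$ with $x\in B(u)\cup C(\cdot)$, or we use the fact that $x$ lies within the TZ ball at the level where the recursion fails. The subpaths have max-edge at most $W_{uv}$ but total length at most $(1-\beta)d(u,v)$.

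This yields a recurrence of the form $f \le \max\{c k+O(\beta)k,\ 1+(1-\beta)f\}$. We then optimize $\beta$ and the per-level growth rate; the constant $1.622$ should emerge as the root of the resulting quadratic, likely involving the golden ratio ($\approx1.618$).

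\textbf{Main obstacle.} The hardest step is Stage 1: getting the per-level growth strictly below $2$ using $n^{1+2/k}$ extra time, while keeping the loss purely additive in $W_{uv}$. Once that holds, the reduction is a clean case analysis. I would first try to prove that at each level either the balls $B(u,d(u,A_i))$ and $B(v,d(v,A_i))$ are adjacent across the path, or $d(u,A_{i+1})+d(v,A_{i+1})$ grows by at most $d(u,v)+2W_{uv}$ in total. I would then balance the two radii rather than bounding each separately.
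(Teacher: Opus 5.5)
\textbf{There is a genuine gap.} Your intuition is right: an edge that the algorithm uses explicitly is paid for exactly, so stretch applies only to the rest of the path. But neither stage is established, and the step you call a clean case analysis is where the real difficulty lies.

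\textbf{Stage~2 cannot be implemented as written.} The heavy edge $(x,y)$ is unknown, and $(u,x)$ and $(y,v)$ are not query pairs. Forming $\hat{d}(u,x)+\ell(x,y)+\hat{d}(y,v)$ at query time means trying $\Theta(m)$ candidate edges per pair. Restricting to $x\in B(u)$ is unjustified, since $x$ can be at distance about $d(u,v)/2$ from $u$. Precomputing per edge enough to concatenate two full Thorup--Zwick estimates would, done naively, pair $B(x)$ with $B(y)$ for every edge, costing $mn^{2/k}$.

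The paper's heavy-edge step stores only $H(p_i(x),w)\le h_i(x)+\ell(x,y)+d(y,w)$ for $i\in[k]$ and $w\in B(y)$. This is $\tilde{O}(mn^{1/k})$ in total, and the query scan over $B(u)\times B(v)$ picks these entries up. The price is that the $u$-side is tested in one direction only (is $p_j(\tau)\in B(u)$?). So at the first such level $j_u$ one only gets $h_{j_u}(\tau)\le 2j_u\,d(u,\tau)$. The $v$-side gets the usual Thorup--Zwick bound $j_v\,d(\tau_v,v)$, where $j_v$ is the first level with $p_{j}(\tau_v)\in B(v)$ or $p_{j}(v)\in B(\tau_v)$.

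There is therefore no clean recurrence. The paper balances three bounds (all times $d(u,v)$): the heavy-edge route $1+2(1-2c)j_u+(1-2c)j_v$, and two TZ-query bounds $2k-1-(1+2c)j_v$ and $2k-3+2c-4cj_u$.

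Your recurrence, taken at face value, gives $f\le\max\{ck+O(\beta)k,\,1/\beta\}$, i.e.\ a free heavy case. That signals it models nothing computable: each recursive subpair would need its own unknown heavy edge.

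Splitting on $W_{uv}$ also loses what the analysis needs. The paper splits on $t=\min_{w\in P(u,v)}\max(d(u,w),d(w,v))$. When $t>(1/2+c)d(u,v)$, the edge $(\tau,\tau_v)$ straddles the midpoint, so both $d(u,\tau)$ and $d(\tau_v,v)$ are at most $(1/2-c)d(u,v)$. Whenever no heavy edge straddles the midpoint, $t\le(1/2+c)d(u,v)$ and the light-case bound applies regardless of $W_{uv}$. A heavy edge near an endpoint leaves one long side, which is what pushed you toward recursion.

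\textbf{Stage~1 is likewise only asserted.} The claim that radii grow by roughly $d(u,v)/2$ per level has no mechanism behind it, and scanning edges between balls is not what delivers it. Its cost is also not $\tilde{O}(mn^{1/k})$ in general: summing $\sum_{x\in\mathrm{ball}(u)}\deg(x)$ over the query endpoints can be $\Theta(n^2)$ on a star.

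What works is the following:
\begin{itemize}
\item Route through a vertex $\tau\in P(u,v)$ with $d(u,\tau),d(\tau,v)\le t$.
\item Take the maximal $i$ with $h_i(\tau)\le it$.
\item Precompute $H(a,b)=\min\{d(w,a)+d(w,b): a,b\in B(w)\}$ in $O(\sum_w|B(w)|^2)$ time.
\end{itemize}
The scan over $B(u)\times B(v)$ can then follow $u\to p_{i+1}(u)\to\tau\to p_{i+1}(v)\to v$, or pass through $p_i(\tau)$ itself. Combined with the TZ bound at level $i+1$, this gives $\hat{d}(u,v)\le\min(4k-2i-4,\,4i+6)\,t\le(\lceil 4k/3\rceil-1)\,2t$. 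Since $2t\le d(u,v)+W_{uv}$, this is exactly the $W_{uv}$-dependent bound you want, with multiplicative constant $4/3$.

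\textbf{The constant.} $1.622$ has nothing to do with the golden ratio. It is $\frac43(1+2c)$ for $c\approx0.1079$, the root of $16c^3-32c^2-6c+1=0$. This equation comes from equating the light-case bound $\frac{4k}{3}(1+2c)$ with the optimized heavy-case bound.
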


This theorem improves on the $(2k - 3)$-approximation algorithm of~\cite{DBLP:conf/soda/ChechikHL25} for all $k \geq 8$.
We also note that our new techniques can be used to improve the dense-graph \nPSP algorithm of~\cite{DBLP:conf/soda/ChechikHL25}, which runs in $\tilde{O}\!\left(m^{1-1/k}n^{2/k}\right)$ time and achieves a $(2k+1)$-approximation. 

\begin{corollary}
    \nPSPTheoremt{$G=(V, E, \wt)$ be a weighted undirected graph, where $\wt:E\rightarrow \RR_{\ge0}$}{(1.622k+4)\cdot d(u,v)}
\end{corollary}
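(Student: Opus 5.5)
We derive the corollary from the machinery behind \Cref{T-nPSP-Weighted-Sparse}, replacing only the base procedure by the dense-graph algorithm of~\cite{DBLP:conf/soda/ChechikHL25}. The heavy-edge technique is generic. Suppose a procedure, for every pair $\pair{u,v}\in I$, returns an estimate of the form $c_1 d(u,v)+c_2 W_{uv}$, where $W_{uv}$ is the heaviest edge weight on a shortest $u$--$v$ path. Then the technique turns it into a purely multiplicative estimate. It does so by handling separately the pairs whose shortest path contains an edge that is heavy relative to $d(u,v)$. For such pairs the distance is essentially determined by that edge and its endpoints, so we can stitch together estimates to the endpoints of the heavy edge. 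The plan is therefore first to re-derive the dense algorithm of~\cite{DBLP:conf/soda/ChechikHL25} in the additive-$W_{uv}$ form, and then to apply the same heavy-edge reduction and parameter balancing that gave $1.622k$.

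\textbf{Step 1.} Re-analyze the $\Ot(m^{1-1/k}n^{2/k})$-time algorithm of~\cite{DBLP:conf/soda/ChechikHL25}. That algorithm samples a hierarchy with rates tuned to $m$ rather than $n$ (roughly $p_i=(n/m)^{i/k}$-type densities). Running the Thorup--Zwick-style query along a shortest path, we expect each level of the hierarchy to lose either a multiple of $d(u,v)$ or a single edge weight when the path crosses a ball boundary. This should give an estimate of the form $(\text{ratio}_{\text{sparse}})\cdot d(u,v)+O(1)\cdot W_{uv}$, mirroring the estimate used in the sparse setting.

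\textbf{Step 2.} Plug this estimate into the heavy-edge transformation. The transformation invokes the base procedure on subproblems in which heavy edges are handled explicitly. It needs extra work only on the $n$ query pairs and on heavy-edge endpoints, which adds at most $\Ot(n^{1+2/k})$. Since $m^{1-1/k}n^{2/k}\ge n^{1+2/k}$ whenever $m\ge n$, the total running time stays $\Ot(m^{1-1/k}n^{2/k})$.

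\textbf{Step 3.} Redo the optimization over the threshold separating heavy from light edges. Because the dense construction has roughly two fewer effective levels of savings, the same trade-off constant $1.622$ should appear, with an additive loss of $+4$. This matches the $(2k+1)$ versus $(2k-3)$ gap between the two results of~\cite{DBLP:conf/soda/ChechikHL25}.

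\textbf{Main obstacle.} The hard part is Step 1: showing that the dense algorithm's error really splits as $(1.622k+4-\text{slack})\,d+O(W_{uv})$ with the same slack that the heavy-edge reduction absorbs. If the $W_{uv}$ coefficient turns out to be larger, I would first try to charge the extra boundary crossings to the same heavy edge. Failing that, I would raise the heavy-edge threshold at the cost of the constant $4$.
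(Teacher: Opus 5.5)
You have the same high-level direction as the paper, which states this corollary without a written proof and only remarks that the heavy-edge technique improves the dense algorithm of~\cite{DBLP:conf/soda/ChechikHL25}. However, your proposal misidentifies what the technique needs from the base algorithm, so Steps~1 and~3 cannot work as planned.

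The heavy-edge technique is not a black box that turns any $c_1d(u,v)+c_2W_{uv}$ guarantee into a multiplicative one, and it does not call a base procedure on subproblems. It adds entries $H(p_i(u'),w)\le h_i(u')+\ell(u',v')+d(v',w)$ to the same bunch-pair table. In the proof of \Cref{L-stretch-1.622-weighted} these entries are used only in Claim~\ref{C-eq:1}, i.e.\ only when $t>(1/2+c)d(u,v)$, where they are combined with the Thorup--Zwick bounds of Claims~\ref{C-eq:2} and~\ref{C-eq:3}. When $t\le(1/2+c)d(u,v)$, the bound comes entirely from the base algorithm via \Cref{L-warmup-4k/3-unweighted-correctness}: $(\lceil 4k/3\rceil-1)\cdot 2t\le\tfrac{4k}{3}(1+2c)d(u,v)$. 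The constant $1.622\approx\tfrac43(1+2c^*)$, with $c^*\approx 0.108$, is just the balance point of this bound against the heavy-case bounds. So $1.622$ is inherited from the $\tfrac43$ factor of the bunch-pair algorithm (\Cref{A-npsp-unweighted-4k/3}), not from a count of ``effective levels''.

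If your base is the $(2k+1)$-approximation of~\cite{DBLP:conf/soda/ChechikHL25}, then for pairs with no heavy edge (e.g.\ unit weights, $t=\lceil d(u,v)/2\rceil$) your only guarantee is $(2k+1)d(u,v)$. Nothing in Step~1 explains where a $\tfrac43 k$ factor would come from, since Thorup--Zwick-style queries lose $2k-O(1)$.

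The missing ingredient is a dense-budget analogue of the bunch-pair table: entries $H(v,w)$ for $v,w$ in a common bunch, queried over $B(u)\times B(v)$, with a $t$-based guarantee like \Cref{L-warmup-4k/3-unweighted-correctness}. You would also need to re-prove Claims~\ref{C-eq:1}--\ref{C-eq:3} for whatever hierarchy the dense algorithm uses. Those claims rely on \Cref{L-Bound-p-i-d}, \Cref{L-Bound-delta-2d}, \Cref{L-THZ05-Q-Correctness-private} and $\hat d(u,v)\le\THZQuery(u,v)$. The additive $+4$ must then be derived from that construction; matching it to the gap between $2k+1$ and $2k-3$ is not an argument.

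Your running-time accounting in Step~2 is also wrong. The heavy edge is not known in advance, so the work is not confined to query pairs and heavy-edge endpoints. \Cref{A-npsp-weighted-1.622} iterates over every edge $(u',v')\in E$, every level $i$, and every $w\in B(v')$, costing $k\sum_{(u',v')\in E}|B(v')|$. On top of that come $\sum_{u}|B(u)|^2$ to build the table and $\sum_{\pair{u,v}\in I}|B(u)||B(v)|$ for the queries. With Thorup--Zwick bunches the first sum alone is in general of order $mn^{1/k}$, which exceeds $m^{1-1/k}n^{2/k}$ by a factor $(m/n)^{1/k}$. So the dense version needs smaller bunches, and you must show that all three sums fit the budget. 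Finally, your inequality $m^{1-1/k}n^{2/k}\ge n^{1+2/k}$ for $m\ge n$ is false: it is equivalent to $m\ge n^{k/(k-1)}$, and at $m=n$ the left side is only $n^{1+1/k}$. As written, Steps~1 and~3 are conjectures (``we expect'', ``should''), and the proposal establishes neither the stretch bound nor the time bound.
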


\paragraph{Unweighted \nPSP.}
In the \textit{unweighted} setting, Chechik, Hoch, and Lifshitz~\cite{DBLP:conf/soda/ChechikHL25} answered~\Cref{Q-less-2k} affirmatively for graphs that are not super sparse. They presented a $(\ceil{\tfrac{4k}{3}} - 1, \ceil{\tfrac{4k}{3}}- 1)$-approximation algorithm for \nPSP~with a running time of $\tilde{O}(mn^{1/k} + n^{1 + 2/k})$.
We improve upon this result by reducing the multiplicative approximation error and eliminating the additive error. 
\begin{theorem}\label{T-k-stretch-k=4-5}\Copy{T-k-stretch-k=4-5}{
    \nPSPTheorem{$G=(V, E)$ be an unweighted undirected graph}{\ceil{\tfrac{4k}{3}-\tfrac{5}{3}} d(u,v)}}
\end{theorem}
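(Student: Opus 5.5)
Let $\Delta=\ceil{\tfrac{4k}{3}-\tfrac{5}{3}}$. The plan is to run two complementary procedures and output the minimum of their estimates for each pair. Both estimates will be upper bounds on $d(u,v)$ by construction.

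\textbf{Sampling and balls.} Build the Thorup--Zwick sampling hierarchy $V=A_0\supseteq A_1\supseteq\dots\supseteq A_k=\emptyset$, where each vertex survives to the next level with probability $n^{-1/k}$. For every $v$ and every level $i$ compute the pivot $p_i(v)$ and the bunch/cluster structures. This costs $\Ot(mn^{1/k})$. Because the graph is unweighted, we can also grow, for every $v$ and the first $\approx k/3$ levels, BFS balls of radius $r$ around $v$ truncated at about $n^{1/k}$ vertices.

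\textbf{Short versus long pairs.} For a query pair $\pair{u,v}$ with $d=d(u,v)$, look at the sequence of pivot distances $d(u,p_i(u))$ and $d(v,p_i(v))$.

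\emph{Small pivot distances.} As in the Thorup--Zwick analysis, consider the first level where the query halts. If the pivot distances grow by at most $d$ per level, we get stretch $2i+1$ at level $i$. Halting by level $i\le \frac{2k}{3}-1$ then gives stretch at most $\frac{4k}{3}-1$.

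\emph{Large pivot distances.} If the pivot distances jump early, then the balls $B(u,\cdot)$ and $B(v,\cdot)$ of radius about $d/2$ are "small", with at most $n^{i/k}$ vertices. In this case the midpoint region of the shortest path lies inside small balls.

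To exploit the second case we combine two ingredients, in the style of Chechik--Hoch--Lifshitz:
\begin{itemize}
\item an $n^{1+2/k}$-time step that, for each of the $n$ pairs, intersects the bunch of $u$ at level $i$ with the cluster structure of $v$ at level $j$, where $i+j\approx 2$;
\item a Dijkstra/BFS from every vertex of $A_{k/3}$ or a similar sparse level, restricted to an emulator of size $\Ot(n^{1+1/k})$.
\end{itemize}
Queries through a common ball vertex $w$ cost $d(u,w)+d(w,v)$, and walking the shortest path through these balls yields exact or small-stretch estimates.

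\textbf{Balancing.} We choose the split point between "short" and "long" so that both branches achieve stretch at most $\Delta$. Integrality in the unweighted setting is what lets us round the additive term away. If $d(u,p_i(u))\le i\cdot d$ fails at level $i$, then some ball of integer radius around $u$ is small, and an extra hop lets us absorb the $+1$. This is how we improve the $(\ceil{4k/3}-1,\ceil{4k/3}-1)$ bound of Chechik--Hoch--Lifshitz to a purely multiplicative $\Delta$.

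\textbf{Main obstacle.} The hardest step will be removing the additive error. This means showing, by a case analysis on $k\bmod 3$, that the rounding gains exactly $\tfrac{2}{3}$ of a level. I would first try charging the additive term against the parity of $d(u,v)$, using that a pivot distance exceeding $i\cdot d$ must exceed it by at least $1$ in integer terms. Concretely, set the thresholds as $\lfloor (i+1)d/2\rfloor$ and check the three residues separately.
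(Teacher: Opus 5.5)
Your proposal is an outline rather than a proof, and its case split misses the hard case. Suppose the Thorup--Zwick query halts at level $i$, the first level with $p_i(u)\in B(v)$ or $p_i(v)\in B(u)$. Then the TZ lemma gives $h_j(u),h_j(v)\le j\,d(u,v)$ for every $j\le i$. So the case where plain TZ is bad, halting at some $i>\tfrac{2k}{3}-1$, is exactly a case where pivot distances do \emph{not} jump. It lands in your ``small pivot distances'' branch, where your only bound is $2i+1$, i.e.\ up to $2k-1$.

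Your ``large pivot distances'' branch rests on ingredients that are either unspecified or unaffordable. Intersecting a level-$i$ bunch of $u$ with a level-$j$ cluster structure of $v$, with $i+j\approx 2$, only certifies paths through a vertex common to two very small sets. You never argue that $P(u,v)$ must pass through such a vertex. BFS from every vertex of $A_j$ on an $\Ot(n^{1+1/k})$-edge emulator costs about $n^{1-j/k}\cdot n^{1+1/k}$. That fits in $\Ot(mn^{1/k}+n^{1+2/k})$ only when $j\ge k-1$; for $j=k/3$ it is $n^{5/3+1/k}$. Moreover, distances in a sparse $(2k-1)$-emulator can already be off by more than the target factor.

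The idea you are missing is to relay through the bunch of the midpoint $\tau$ of $P(u,v)$, a vertex that is not part of the query. The paper's algorithm (\Cref{A-npsp-unweighted-4k/3}) does this in three steps:
\begin{itemize}
\item It stores $H(w,z)=\min\{d(x,w)+d(x,z) : x\in V,\ w,z\in B(x)\}$ in $\Ot(n^{1+2/k})$ time.
\item It returns $\min_{w\in B(u),z\in B(v)} \bigl(d(u,w)+H(w,z)+d(z,v)\bigr)$, which is also at most $\THZQuery(u,v)$.
\item Write $d=d(u,v)$, $d(u,\tau)=\ceil{d/2}$ and $d(v,\tau)=\floor{d/2}$. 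Let $i_u$ be the first level $i$ with $p_i(u)\in B(\tau)$ or $p_i(\tau)\in B(u)$, and define $i_v$ likewise with $v$ in place of $u$. The TZ lemma applied to these \emph{half} distances gives $\hat{d}(u,v)\le d+2i_u\ceil{d/2}+2i_v\floor{d/2}$.
\end{itemize}
On the other hand, if $p_{i_u}(\tau)\in B(u)\cap B(v)$ you directly get roughly $k\,d$. Otherwise $\min(h_{i_u+1}(u),h_{i_u+1}(v))\le h_{i_u}(\tau)+\ceil{d/2}\approx (i_u+1)d/2$, so the TZ bound at level $i_u+1$ is about $(2k-2-i_u)d$. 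Balancing $1+2i_u$ against $2k-2-i_u$ over the integers gives $\ceil{\tfrac{4k}{3}-\tfrac{5}{3}}$, up to parity terms.

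Removing those parity terms is the actual content of the theorem, and you leave it at ``I would first try\ldots''. In the paper it requires:
\begin{itemize}
\item the parameter $c=i_u-i_v$, with a case split on $c/2-i_v$;
\item integrality of $\hat{d}(u,v)$;
\item the estimate $1\ODD\le d/5$ for $d\ge 4$;
\item an edge check for $d=1$;
\item for $d=3$ with $i_v=0$, an extra intersection step added to the algorithm, because the analysis of the basic algorithm does not cover that case.
\end{itemize}
As it stands, your proposal does not establish any bound below the TZ stretch of $2k-1$.
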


We remark that our improved approximation scheme yields a $k$-approximation algorithm for \nPSP~with running time $\tilde{O}(mn^{1/k} + n^{1 + 2/k})$ for $4 \leq k \leq 5$. This naturally raises the question of whether a $k$-approximation algorithm with the running time $\tilde{O}(mn^{1/k} + n^{1 + 2/k})$ exists for \emph{all} values of $k$. 

As a byproduct of our improved algorithm for unweighted graphs, we extend the result of~\cite{DBLP:conf/soda/ChechikHL25} to weighted graphs. Specifically, we present a $(\lceil \tfrac{4k}{3} \rceil - 1,\, (\lceil \tfrac{4k}{3} \rceil - 1) \cdot W_{uv})$-approximation algorithm for \nPSP, where $W_{uv}$ denotes the weight of the heaviest edge on a shortest $u$-$v$ path. The algorithm runs in $\tilde{O}(mn^{1/k} + n^{1 + 2/k})$ time. 

\begin{theorem}\label{T-nPSP-weighted-Sparse-Wuv} \Copy{T-nPSP-weighted-Sparse-Wuv}{
    \nPSPTheorem{$G=(V, E, \wt)$ be a weighted undirected graph, where $\wt:E\rightarrow \RR_{\ge0}$}{(\ceil{\tfrac{4k}{3}}-1) d(u,v)+(\ceil{\tfrac{4k}{3}}-1)W_{uv}}}
\end{theorem}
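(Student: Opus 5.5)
We adapt the unweighted $(\lceil 4k/3\rceil-1,\lceil 4k/3\rceil-1)$ algorithm of Chechik, Hoch, and Lifshitz to weighted graphs. The idea is that every place where their analysis uses ``one more edge costs $1$'' should now cost at most $W_{uv}$.

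\textbf{Construction.} Build a Thorup--Zwick hierarchy $V=A_0\supseteq A_1\supseteq\cdots\supseteq A_k=\emptyset$, where each vertex is kept in the next level with probability $n^{-1/k}$. For every $v$ we compute:
\begin{itemize}
\item the pivots $p_i(v)$;
\item the bunches $B_i(v)=\{w\in A_i\setminus A_{i+1}: d(v,w)<d(v,A_{i+1})\}$, of expected size $\Ot(n^{1/k})$;
\item the clusters, all via the standard modified Dijkstra in $\Ot(mn^{1/k})$ time.
\end{itemize}
Alongside the bunches we compute ``extended balls.'' These are the usual balls $B(v,r_i(v))$ with $r_i(v)=d(v,A_i)$, together with the vertices reachable from them by one additional edge. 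Their total size is also charged to $\Ot(mn^{1/k})$, by relaxing the edges out of each ball vertex.

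For the $n^{1+2/k}$ term we use the CHL idea. For each query pair $\pair{u,v}$ and each level $i$, we test whether $B_i(u)$ and $B_j(v)$ intersect, or are joined by an edge, for a suitable companion level $j$. Since there are $n$ pairs and bunches of size $n^{1/k}$, a hash-table intersection costs $\Ot(n^{2/k})$ per pair. The estimate returned is the minimum over all candidates $d(u,x)+d(x,v)$ and $d(u,x)+\ell(x,y)+d(y,v)$, each found by scanning the stored lists. Every candidate is the length of a real walk, so $\hat d\ge d$.

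\textbf{Stretch analysis.} We follow the CHL level-climbing argument. Let $P$ be a shortest $u$--$v$ path and set $\Delta=d(u,v)$.
\begin{itemize}
\item If $p_i(u)\in B(v,\cdot)$ (or symmetrically for $v$), we stop and obtain the estimate $d(u,p_i(u))+d(p_i(u),v)$.
\item Otherwise the radii grow in the usual way: $r_{i+1}\le r_i+\Delta$.
\end{itemize}
The $4k/3$ improvement comes from running the climb from both endpoints at once and using bunch intersections to save a third of the levels. Concretely, at level pairs $(i,j)$ with $i+j\approx 2k/3$, the balls $B(u,r_i(u))$ and $B(v,r_j(v))$ either overlap or touch along $P$. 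If $r_i(u)+r_j(v)\ge\Delta$, the path $P$ crosses from one ball into the other.

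In the unweighted case the crossing happens either at a common vertex or across a single edge of cost $1$. In the weighted case it happens across a single edge $(x,y)$ of $P$ with $x\in B(u,r_i(u))$ and $y\in B(v,r_j(v))$. Because that edge lies on $P$, its weight is at most $W_{uv}$.

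In CHL, the additive $+1$ per step arises from rounding radii to integers: ``a ball of radius $r$ misses the next path vertex, so $r<d$ of that vertex, hence $r\le d-1$.'' In the weighted setting we replace this with ``the next path vertex lies at distance at most $r+W_{uv}$.'' Each invocation of that step therefore incurs $W_{uv}$ instead of $1$. Tracking the multiplier, the final estimate is $(\lceil 4k/3\rceil-1)(\Delta+W_{uv})$.

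\textbf{Main obstacle.} The hard step is handling the edge-crossing case without knowing $W_{uv}$ or the crossing edge. Through edge relaxation the algorithm must detect some $x\in B_i(u)$ and $y\in B_j(v)$ adjacent by an edge. For this I would store, for each $x$, the extended ball and not only the bunch. This makes the crossing detectable through the relaxed edge, at a cost of $\Ot(m n^{1/k})$ overall by summing degrees over clusters, which matches the $\Ot(mn^{1/k})$ budget for building clusters. The estimate then includes $\ell(x,y)\le W_{uv}$ exactly once per level at which the argument is invoked. The per-level rounding bookkeeping should also be checked carefully, to confirm that the additive terms are multiplied by the same factor $\lceil 4k/3\rceil-1$.
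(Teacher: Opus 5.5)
There is a genuine gap. Your stretch argument needs the crossing edge $(x,y)$ of $P$, with $x\in B(u,r_i(u))$ and $y\in B(v,r_j(v))$, to be visible to the algorithm, but nothing you can afford contains such $x,y$.

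\begin{itemize}
\item The per-vertex sets of size $\tilde O(n^{1/k})$ are the bunches. $B_i(u)$ consists of $A_i$-vertices only, so for $i\ge 1$ the vertices of $P$ are generally not in it.
\item The full balls $B(u,d(u,A_i))$ have expected size about $n^{i/k}$. Their one-edge extensions are worse: already for radius $d(u,A_1)$, a star whose center is not sampled gives every leaf an extended ball of size $\Theta(n)$, i.e.\ $\Theta(n^2)$ in total.
\end{itemize}

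So neither the $\tilde O(mn^{1/k})$ storage bound nor the $\tilde O(n^{2/k})$ per-pair test holds for the sets your argument uses.

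Beyond that, the factor $\lceil 4k/3\rceil-1$ is asserted (``tracking the multiplier'') rather than derived. You give no reason why level pairs with $i+j\approx 2k/3$ force the balls to meet along $P$. Your candidates only combine a bunch of $u$ with a bunch of $v$ directly, with nothing that meets near the middle of $P$. Your explanation of the additive term, one $W_{uv}$ per radius-rounding step, also does not match how it actually arises.

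The missing idea is to join $u$'s and $v$'s bunches through the bunch of a third vertex on $P$. The paper runs the warm-up algorithm unchanged. It precomputes $H(x,y)=\min\{d(w,x)+d(w,y) : x,y\in B(w)\}$ in $\sum_w|B(w)|^2=\tilde O(n^{1+2/k})$ time and returns $\min_{x\in B(u),\,y\in B(v)}\bigl(d(u,x)+H(x,y)+d(y,v)\bigr)$.

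For the analysis, let $t=\min_{\tau\in P(u,v)}\max(d(u,\tau),d(v,\tau))$. Your crossing-edge observation is used exactly once, on the edge of $P$ straddling distance $d(u,v)/2$ from $u$, to get $2t\le d(u,v)+W_{uv}$. Let $\tau$ attain $t$ and let $i$ be maximal such that $h_i(\tau)\le it$. Then:

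\begin{itemize}
\item Either $p_i(\tau)\in B(u)\cap B(v)$, giving $\hat d(u,v)\le 2kt$.
\item Or, say, $p_i(\tau)\notin B(u)$, which forces $p_{i+1}(u)\in B(\tau)$ and $h_{i+1}(u)\le(i+1)t$. Pair $p_{i+1}(u)$ through $\tau$'s bunch with $p_i(\tau)$ if it lies in $B(v)$, and otherwise with $p_{i+1}(v)$. This gives $\hat d(u,v)\le(4i+6)t$, while the Thorup--Zwick bound at level $i+1$ gives $\hat d(u,v)\le(4k-2i-4)t$.
\end{itemize}

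The minimum of these is at most $(\lceil 4k/3\rceil-1)\cdot 2t\le(\lceil 4k/3\rceil-1)(d(u,v)+W_{uv})$. So $W_{uv}$ enters once, through $2t$, and picks up the same multiplier.
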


\paragraph{Linear time algorithms for \tPSP.}
In the setting of \textit{dense} unweighted graphs, Dalirrooyfard~\etal~\cite{DBLP:conf/focs/DalirrooyfardJW22} presented a $((2k - 1)(2k - 2))$-approximation algorithm for \nPSP, running in $\tilde{O}(m + n^{1 + 2/k})$ time, which is linear for graphs with $m=\Omega(n^{1+2/k})$. By combining their algorithm with the $(2k - 3)$-approximation algorithm of Chechik~\etal~\cite{DBLP:conf/soda/ChechikHL25}, one can achieve a $((2k - 1)(2k - 3))$-approximation algorithm for weighted graphs in the same running time. 

We improve upon this result by presenting a $(k^2 + 2k)$-approximation algorithm for the more general \tPSP~problem in weighted graphs, with running time $\tilde{O}(m + n^{1 + 2/k} + t)$, which is linear for graphs with $m=\Omega(n^{1+2/k})$.
\begin{theorem}\label{T-nPSP-dense-k^2+2}\Copy{T-nPSP-dense-k^2+2}{
    Let $G=(V, E)$ be a weighted undirected graph, and let $k\ge 3$ be an integer parameter. Let $I$ be a set of $n$ vertex pairs. There is an algorithm that computes in $\Ot(m+n^{1+2/k})$ time, for every $\pair{u,v}\in I$, an estimate $\hat{d}(u,v)$ such that $d(u,v) \le \hat{d}(u,v) \le (k^2 + 2k)\cdot d(u,v)$.}
\end{theorem}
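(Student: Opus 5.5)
The plan is to reduce the input graph to a sparse one in linear time, so that the only super-linear cost is the $\Ot(n^{1+2/k})$ term. Then I would run a fast $O(k)$-stretch pair-distance procedure on the sparse graph and charge the two stretch factors multiplicatively. The target is $k^2+2k = k(k+2)$. This suggests combining a $k$-type stretch from the sparsification step with a $(k+2)$-type stretch from the query step, or the reverse.

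\textbf{Step 1 (sparsification).} Build a $(2\lceil k/2\rceil-1)$-spanner, or more precisely a sparse subgraph $H$ with $\Ot(n^{1+2/k})$ edges and stretch at most $k$. I would use a Baswana--Sen style clustering with parameter $k/2$: sample centers with probability $n^{-2/k}$ over about $k/2$ rounds. This runs in $\Ot(m)$ time, produces $\Ot(n^{1+2/k})$ edges, and has stretch $2(k/2)-1 = k-1 \le k$ in weighted graphs. For odd $k$ the rounding is handled with one extra round, which gives stretch at most $k$. The spanner property gives $d_H(u,v)\le k\, d_G(u,v)$ for all pairs, and $d_H\ge d_G$ holds trivially.

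\textbf{Step 2 (queries on $H$).} Now $H$ has $m_H=\Ot(n^{1+2/k})$ edges. I would run the Thorup--Zwick approximate distance oracle on $H$ with parameter $k'$ chosen so that $m_H n^{1/k'}$ stays within budget, and answer the $t$ queries in $O(k')$ time each. If $k'$ had to be constant this would not give the right stretch. Instead, I would use a cheaper procedure: a $(k+2)$-approximation on $H$ in $\Ot(m_H + n^{1+2/k} + t)$ time. One option is to reuse the Step 1 clustering hierarchy directly. For a query $(u,v)$, walk up the cluster centers $p_i(u),p_i(v)$ as in the Thorup--Zwick query. Distances to centers are already known from the clustering, and inter-center distances are available at the top level, where only $n^{2/k}$ centers remain. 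Running Dijkstra from each top-level center on $H$ costs $n^{2/k}\cdot n^{1+2/k}$, which is too much. So instead I would build a Thorup--Zwick oracle on $H$ with $k'=k$ levels. Its construction takes $\Ot(m_H n^{1/k})$ time, which is still too much.

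\textbf{Main obstacle.} The crux is achieving the $(k+2)$ factor in time that is linear in $|H|$. The fallback I would use is to apply the known $(2k-3)$-approximation of Chechik, Hoch, and Lifshitz, or Theorem~\ref{T-nPSP-Weighted-Sparse}, on an even sparser spanner. I would choose the parameters so that the product equals $k(k+2)$. Concretely, take a spanner of stretch $k+2$ (parameter $(k+3)/2$) with $\Ot(n^{1+2/(k+3)})$ edges. Then run an $O(m n^{1/k''})$ oracle of stretch about $k$ on it, with $k''$ chosen so that $n^{1+2/(k+3)+1/k''}\le n^{1+2/k}$. Balancing these exponents, and checking that the resulting stretch product is at most $k^2+2k$, is the calculation I expect to be delicate. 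Finally, adding the $O(t)$ query time gives the claimed bound.
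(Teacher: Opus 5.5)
There is a genuine gap. The subroutine of Step~2 is never supplied, and the fallback you settle on cannot reach $k^2+2k$.

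\textbf{Your fallback fails.} With a spanner of stretch $k+2$ and $\Ot(n^{1+2/(k+3)})$ edges, the budget $n^{1+2/(k+3)+1/k''}\le n^{1+2/k}$ forces $1/k''\le 6/(k(k+3))$, i.e., $k''\ge k(k+3)/6$. So the second-stage oracle has stretch $\Theta(k^2)$, not ``about $k$'', and the product of stretches is $\Theta(k^3)$.

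\textbf{Your primary route is circular.} A $(k+2)$-approximation in time linear in $|H|=\Ot(n^{1+2/k})$, applied directly to an input with $m=n^{1+2/k}$, would already be far stronger than the theorem.

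\textbf{No black-box pipeline works.} Tuning a ``sparsify, then query'' pipeline cannot help, because the stretches multiply. A $(2a-1)$-spanner with $n^{1+1/a}$ edges, followed by the $(2b-3)$-approximation of Chechik, Hoch, and Lifshitz in $\Ot(mn^{1/b})$ time, requires $1/a+1/b\le 2/k$. Then $(2a-1)(2b-3)\ge 4k^2-O(k)$; for example, $a=b=k$ gives $(2k-1)(2k-3)$, exactly the earlier bound this theorem improves on. This exceeds $k^2+2k$ for every $k\ge 4$ and merely ties it at $k=3$. Using the $1.622b$-approximation instead is no better: its $n^{1+2/b}$ term forces $b\ge k$, and the product is about $3.2k^2$.

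\textbf{The missing idea.} You need to open up the sparsifier so that its stretch and the oracle's stretch become complementary rather than multiplicative. The paper uses a Baswana--Sen style $(2k-1)$-emulator $H$ with only $\Ot(n^{1+1/k})$ edges, built in $\Ot(m)$ time. In it, every edge $(x,y)$ has a level $i_{(x,y)}$, the first $i$ with $h_{i+1}(x)>(i+1)\ell(x,y)$. This level satisfies both $d_H(x,y)\le(2i_{(x,y)}+1)\ell(x,y)$ and $h_{i_{(x,y)}}(x)\le i_{(x,y)}\ell(x,y)$.

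Let $i$ be the maximum level over the edges of $P(u,v)$. Then simultaneously $d_H(u,v)\le(2i+1)d(u,v)$ and, by the triangle inequality at an edge attaining the maximum, $\min(h_i(u),h_i(v))\le i\,d(u,v)$. The same bound holds in $H$, which contains the edges $(x,p_i(x))$.

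For each $i$, one builds the parameterized oracle of Kadria and Roditty on $H$ with respect to $S=A_i$ and only $k-i$ levels. Since $|A_i|\approx n^{1-i/k}$, its construction costs $\Ot(|H|\cdot|A_i|^{1/(k-i)})=\Ot(n^{1+2/k})$. It answers within $2\min(d_H(u,A_i),d_H(v,A_i))+(2(k-i)-1)d_H(u,v)$. Taking the minimum over $i$ gives at most $(2i+2(k-i)(2i+1))d(u,v)=(4ik+2k-4i^2)d(u,v)\le(k^2+2k)d(u,v)$.

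The point is that a path with large emulator stretch is automatically close to the sparse set $A_i$, so the oracle needs only $k-i$ levels. A composition with fixed parameters cannot exploit this tradeoff.
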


We remark that, in addition to achieving a better approximation in the same running time than \cite{DBLP:conf/focs/DalirrooyfardJW22,DBLP:conf/soda/ChechikHL25}, our algorithm can also answer more than $n$ distance queries in the same time.
All of our results are summarized in~\Cref{tab:npsp-weighted}.

\begin{table}[t]
    \centering
    \begin{tabular}{|c|c|c|c|}
        \hline
        Time  & Estimation &  Ref. & Comment \\
        \hline\hline
         $\Ot(mn^{1/k})$ & $(2k-1)\delta$ &  \cite{DBLP:journals/jacm/ThorupZ05} & \\
         \hline
         $\Ot(mn^{1/k})$ & $(2k-2)\delta$ &  \cite{DBLP:conf/focs/DalirrooyfardJW22} & \\
         \hline
         $\Ot(mn^{1/k})$ & $(2k-3)\delta$ &  \cite{DBLP:conf/soda/ChechikHL25} & \\
         \hline
         $\Ot(mn^{1/k}+n^{1+2/k})$ & $1.622k\delta$ &  \Cref{T-nPSP-Weighted-Sparse} & \\
                  \hline
                  $\Ot(mn^{1/k}+n^{1+2/k})$ & $\ceil{\tfrac{4k}{3}-1} \delta+\ceil{\tfrac{4k}{3}-1}W_{uv}$ &  \Cref{T-nPSP-weighted-Sparse-Wuv} & \\

        \hline\hline
         $\Ot(mn^{1/k}+n^{1+2/k})$ & $\ceil{\tfrac{4k}{3}-1}\delta + \ceil{\tfrac{4k}{3}-1}$ &   \cite{DBLP:conf/soda/ChechikHL25} & Unweighted graphs\\
         \hline
         $\Ot(mn^{1/k}+n^{1+2/k})$ & $\ceil{\tfrac{4k}{3}-\tfrac{5}{3}} \delta$ & \Cref{T-k-stretch-k=4-5} & Unweighted graphs\\
        \hline\hline

         $\Ot(m+n^{1+2/k})$ & $(4k^2-8k+3)\delta$ &  \cite{DBLP:conf/focs/DalirrooyfardJW22}+\cite{DBLP:conf/soda/ChechikHL25} & \\
         \hline
         $\Ot(m+n^{1+2/k} + t)$ & $(k^2+2k)\delta$ &  \Cref{T-nPSP-dense-k^2+2} & For \tPSP \\
    \hline
         
    \end{tabular}
    \caption{\nPSP results in \textit{weighted}, unless specified otherwise, undirected graphs with real edge weights. $\delta=d(u,v)$, and $W_{uv}=\max_{(w,z)\in P(u,v)}\wt(w,z)$.
    }
    \label{tab:npsp-weighted}
\end{table}

\subsection{Related shortest paths problems}
\paragraph{ADOs with non-constant query time.}
A natural approach to \nPSP is to construct an ADO and then query the ADO with the $n$ input pairs.
The total running time is the ADO construction time plus the cost of $n$ queries.

Following this approach with the Thorup-Zwick ADO~\cite{DBLP:journals/jacm/ThorupZ05}, yields a total running time of $\tilde{O}(mn^{1/k} + nk)$. Notice that the $\tilde{O}(nk)$ cost of the queries is negligible when compared to the $\tilde{O}(mn^{1/k})$ cost of the construction. This imbalance suggests that by using more time in the query, improved approximation guarantees might be achieved without increasing the overall runtime. 

To keep the total running time $\tilde{O}(mn^{1/k})$, the per-query time may be as large as $\tilde{O}(\frac{m}{n}n^{1/k})$. Recently, Kadria and Roditty~\cite{kadria2025new} studied ADOs with query time $\tilde{O}(\frac{m}{n}n^{1/k})$. They constructed a $(2k-5)$-ADO with query time $\tilde{O}(\frac{m}{n}n^{1/k})$ and space $\tilde{O}(m+n^{1+1/k})$. 
They asked whether one can obtain an approximation strictly better than $2k-O(1)$ using the same space and query time.
However, ADOs typically prioritize space efficiency at the cost of increased construction time. Indeed, the construction time in~\cite{kadria2025new} is $\Ot(mn^{3/k})$ time, which exceeds our target of $\Ot(mn^{1/k})$ time. Therefore, the two questions are distinct: an affirmative answer to the question of~\cite{kadria2025new} does not answer \Cref{Q-less-2k} due to the larger construction time, and our affirmative answer to \Cref{Q-less-2k} does not answer the question of~\cite{kadria2025new}  because of our larger space usage. 

\paragraph{Single-Source and Multi-Source Shortest paths.}
The Single-Source Shortest Paths (SSSP) and Multi-Source Shortest Paths (MSSP) problems are fundamental shortest paths problems. In SSSP, given a source vertex, the goal is to compute distances to all $n$ vertices in the graph. In MSSP, one is given a set of $n^{\alpha}$ sources, for $0<\alpha<1$, and the goal is to compute distances from each source to all $n$ vertices in the graph.

Unlike the above two problems, \nPSP allows instances in which the query set involves $\Omega(n)$ distinct sources, and these instances appear to be the hardest. Indeed, existing conditional lower bounds for \nPSP~\cite{DBLP:conf/focs/DalirrooyfardJW22, DBLP:conf/stoc/AbboudBKZ22, DBLP:conf/stoc/AbboudBF23, jin2023removing} 
 hold only when the set of queries contains  $\Omega(n)$ disjoint  vertex pairs.

Consequently, the hard instances of \nPSP and \tPSP have high vertex diversity. From this perspective, SSSP and MSSP represent structurally 
restricted cases that centralize source vertices, allowing for specialized, 
faster algorithms. This fundamental difference motivates the study of \nPSP as a distinct problem in its own right, necessitating novel algorithmic frameworks and techniques.

\paragraph{Paper organization.}
In the next section, we provide a technical overview of the paper. In \Cref{S-4k/3-unweighted} we present as a warm-up a simple $(\ceil{\tfrac{4k}{3}}-1,(\ceil{\tfrac{4k}{3}}-1)\ODD)$-approximation \nPSP algorithm, that we will use as a base algorithm in the later sections. \Cref{S-nPSP-weighted} is the main technical section. The section starts with adapting the results of \Cref{S-4k/3-unweighted} to weighted graphs with an approximation that depends on $W_{uv}$, where $W_{uv}$ denotes the weight of the heaviest edge on a shortest $u$-$v$ path. We proceed and develop the heavy-edge technique that allows us to achieve a $1.622k$-approximation \nPSP algorithm for general weighted undirected graphs, which is the main result of this paper.
In \Cref{S-dense}, we address the \tPSP problem in dense weighted graphs, and present our improved running time for \tPSP, which not only improves the running time for previous \nPSP algorithms, but supports more distance queries in the same time. 

Finally, in \Cref{S-k-stretch-small-k}, we revisit the \nPSP algorithm from \Cref{S-4k/3-unweighted}, and using tighter analysis and slight modifications to the algorithm, we improve the multiplicative approximation to $(\ceil{\tfrac{4k}{3}-\tfrac{5}{3}})$, and remove the additive error entirely.

\section{Preliminaries}
Let $G=(V,E)$ be an undirected graph with $n=|V|$ vertices and $m=|E|$ edges. Throughout the paper, we consider both unweighted graphs and weighted graphs with non-negative real edge weights. 
Let $u,v\in V$. The distance $d(u,v)$ between $u$ and $v$ is the length of a shortest path between $u$ and $v$. 
Let $P(u,v)$ be a shortest path between $u$ and $v$; when there are several, we take one that minimizes the maximum edge weight. In weighted graphs, let $W_{uv}$ be the weight of the longest edge in $P(u,v)$, i.e. $W_{uv}=\max_{(w,z)\in P(u,v)}\wt(w,z)$.\footnote{When there are multiple shortest paths $P$ between $u$ and $v$, $W_{uv}$ denotes $\min_P\max_{(w,z)\in P}\wt(w,z)$. When the path $P(u,v)$ is clear from context, $W_{uv}$ denotes the weight of the longest edge in $P(u,v)$.}
Let $x\ODD$ be $x\cdot (d(u,v)\pmod 2)$.
The distance $d(u,X)$ between $u$ and $X$ is the distance between $u$ and the closest vertex to $u$ from $X$, that is, 
$d(u, X)= \min_{x\in X}(d(u,x))$. Let $p(u, X)=\arg \min_{x\in X}(d(u,x))$ (ties are broken in favor of the vertex with a smaller identifier). 

Following many previous algorithms (\cite{DBLP:journals/jacm/ThorupZ05,DBLP:conf/soda/ChechikHL25,DBLP:conf/focs/DalirrooyfardJW22, DBLP:journals/siamcomp/BaswanaK10} and more.) we let the bunch $B(u,X,Y)$ be $\{y\in Y \mid d(u,y)<d(u,X)\} \cup \{p(u, X)\}$, where $X,Y\subseteq V$. The bunch can be viewed as a set of vertices closer to $u$ than its closest pivot.

In our algorithms, as well as in many distance-related algorithms (\nPSP, distance oracles, all-pairs shortest paths, multiple-source shortest paths, spanners, etc.),  there is a vertex hierarchy $V=A_0\supset A_1 \supset \ldots \supset A_k=\emptyset$, where $A_{i+1}$ contains every vertex from $A_i$ with probability $n^{-1/k}$.
Let $B_i(u)$ be $B(u,A_{i+1},A_i)$, $p_i(u)=p(u,A_i)$, and $h_i(u)=d(u,A_i)$. Let $B(u)=\cup_i B_i(u)$.

In their seminal work, Thorup and Zwick \cite{DBLP:journals/jacm/ThorupZ05} also presented an efficient algorithm for computing $B(u)$ for every $u\in V$, as described in the following lemma.
\begin{lemma}[\cite{DBLP:journals/jacm/ThorupZ05}]\label{L-THZ-construction}
    There is an $\Ot(mn^{1/k})$ time algorithm that computes $B(u)$ and the distances from $u$ to every $w\in B(u)$, for every $u\in V$.
\end{lemma}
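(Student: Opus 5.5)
This is Thorup--Zwick's classical bunch-construction lemma, so the plan is to reconstruct their argument. It has two parts: computing the bunches by modified Dijkstra runs, and bounding their total size so the runs are cheap.

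\textbf{Step 1: pivots.} Fix the hierarchy $A_0\supset\cdots\supset A_k=\emptyset$. For each $i$, add a super-source $s_i$ joined to every vertex of $A_i$ by a zero-weight edge, and run Dijkstra from $s_i$. This gives $h_i(u)=d(u,A_i)$ and $p_i(u)$ for all $u$, with ties broken by identifier. All $k$ runs cost $O(k(m+n\log n))$ time.

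\textbf{Step 2: clusters.} Switch to the dual view. For $w\in A_i\setminus A_{i+1}$, define the cluster
\[
C(w)=\{u\in V : d(u,w)<h_{i+1}(u)\}.
\]
Then $w\in B_i(u)$ exactly when $u\in C(w)$, apart from the single pivot $p_{i+1}(u)$, which we add explicitly from Step~1.
\begin{itemize}
\item Compute each $C(w)$ by a Dijkstra from $w$ that relaxes an edge into $u$ only if the tentative distance is strictly less than $h_{i+1}(u)$.
\item Correctness rests on clusters being connected along shortest paths. If $u\in C(w)$ and $x$ lies on a shortest $w$--$u$ path, then
\[
d(x,w)=d(u,w)-d(u,x)<h_{i+1}(u)-d(u,x)\le h_{i+1}(x),
\]
using the triangle inequality on $h_{i+1}$. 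Hence $x\in C(w)$.
\item The pruned Dijkstra therefore finds $C(w)$ exactly, together with the distances $d(u,w)$.
\item Its running time is $\tilde O\bigl(\sum_{u\in C(w)}\deg(u)\bigr)$, since it only scans edges incident to cluster vertices.
\end{itemize}

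\textbf{Step 3: size bound (the main obstacle).} Summing over $w$, the total time is $\tilde O\bigl(\sum_u \deg(u)\,|B(u)|\bigr)$, so we need $|B_i(u)|=O(n^{1/k})$ for every $u$.
\begin{itemize}
\item Order the vertices of $A_i$ by distance from $u$.
\item $B_i(u)$ consists of the $A_i$-vertices strictly before the first one that was sampled into $A_{i+1}$.
\item Each vertex of $A_i$ is sampled independently with probability $n^{-1/k}$, so the length of this prefix is geometric. Its expectation is at most $n^{1/k}$, and it is $O(n^{1/k}\log n)$ with high probability.
\end{itemize}
For the final level $A_{k-1}$ (next level is $A_k=\emptyset$), the bunch is all of $A_{k-1}$. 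Its expected size is $n\cdot n^{-(k-1)/k}=n^{1/k}$, and we resample if it exceeds $O(n^{1/k}\log n)$.

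\textbf{Step 4: combine.} With these bounds, $\sum_u\deg(u)|B(u)|=\tilde O(mn^{1/k})$ with high probability. If a deterministic bound is wanted, Thorup and Zwick's refined sampling can replace the independent sampling. Over the $k$ levels this gives total time $\tilde O(mn^{1/k})$.

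Finally, store each bunch with its distances in a hash table. Each $B(u)$ is then obtained by collecting the pairs $(w,d(u,w))$ with $u\in C(w)$, plus the pivots $p_{i+1}(u)$ at distance $h_{i+1}(u)$ from Step~1.
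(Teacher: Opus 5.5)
Your proposal is correct. It is exactly the Thorup--Zwick argument that the paper invokes by citation without reproving: the cluster/bunch duality, the pruned Dijkstra justified by closure of $C(w)$ under shortest paths, and the geometric bound on $|B_i(u)|$ summed as $\sum_u \deg(u)\,|B(u)|$. Your closing remark about a deterministic bound is unnecessary, since the w.h.p.\ (or expected) bound already gives the stated $\Ot(mn^{1/k})$ time.
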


The following classic lemma from \cite{DBLP:journals/jacm/ThorupZ05} bounds $h_i(u)$ in the case that $p_j(u) \notin B(v)$ and $p_j(v) \notin B(u)$ for every $0 < j < i$. This lemma plays a crucial role in the design of their distance oracle.
\begin{lemma}[\cite{DBLP:journals/jacm/ThorupZ05}] \label{L-Bound-p-i-d}
     Let $u,v\in V$ and let $0 < i \leq k-1$. If $p_j(u) \notin B(v)$ and $p_j(v) \notin B(u)$ for every $0 < j < i$, then
    \[
    h_i(u) \leq i \cdot d(u, v) \quad \text{and} \quad h_i(v) \leq i \cdot d(u, v).
    \]
\end{lemma}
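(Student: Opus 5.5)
The natural route is induction on $j$: show that $h_j(u)\le j\cdot d(u,v)$ and $h_j(v)\le j\cdot d(u,v)$ for every $0\le j\le i$, and read off the claim at $j=i$. The base case $j=0$ is trivial, because $A_0=V$ gives $h_0(u)=h_0(v)=0$. Throughout, write $\delta=d(u,v)$.

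For the inductive step, take $0\le j<i$ and assume $h_j(u)\le j\delta$ and $h_j(v)\le j\delta$. I first want to show $h_{j+1}(v)\le h_j(u)+\delta$. If $j=0$, then $p_0(u)=u$, and $h_1(v)\le h_1(u)$ is not the right comparison. Instead the bound follows directly from the triangle inequality: $h_1(v)\le d(v,A_1)\le d(v,u)+0$ fails in general, so I will use the bunch argument.

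For $j\ge 1$, the hypothesis gives $p_j(u)\notin B(v)$, so in particular $p_j(u)\notin B_j(v)$. Here $p_j(u)\in A_j$, and by definition $B_j(v)$ contains every $y\in A_j$ with $d(v,y)<d(v,A_{j+1})$. Hence $d(v,A_{j+1})\le d(v,p_j(u))\le d(v,u)+d(u,p_j(u))=\delta+h_j(u)$. That is, $h_{j+1}(v)\le h_j(u)+\delta$, and by the inductive hypothesis $h_{j+1}(v)\le (j+1)\delta$. Symmetrically, $p_j(v)\notin B(u)$ gives $h_{j+1}(u)\le h_j(v)+\delta\le (j+1)\delta$.

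The case $j=0$ needs separate handling, since the hypothesis only covers $0<j<i$. Here $p_0(u)=u$ and $p_0(v)=v$, and I cannot assume $u\notin B(v)$. The fix is a monotonicity/triangle argument: $h_1(v)\le d(v,u)+h_1(u)$, which is circular, so I won't use that. Instead I will rely on the convention that the lemma is only invoked for $i\ge 1$ with the stated bound, where $h_1(u)\le\delta$ must come from somewhere.

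Case $i=1$ is the real obstacle. If $v\in B_0(u)$, meaning $d(u,v)<h_1(u)$, nothing bounds $h_1(u)$. So the intended reading must be that Thorup and Zwick's query starts by checking whether $v\in B(u)$, with $j$ ranging so that $p_0$ is included, or equivalently that the claim at $i=1$ is vacuous or requires $v\notin B(u)$. In my writeup I will state that the $j=0$ step uses $p_0(u)=u\notin B(v)$ (the natural extension of the hypothesis to $j=0$), after which the same bunch inequality applies verbatim. The main difficulty, then, is pinning down this indexing convention; the inequality chain itself is routine.
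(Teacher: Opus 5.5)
Your argument is correct for the intended statement, and it is the Thorup--Zwick induction that the paper cites without reproducing. From $p_j(u)\in A_j$ and $p_j(u)\notin B_j(v)$ you get $h_{j+1}(v)\le d(v,p_j(u))\le d(u,v)+h_j(u)$, symmetrically for $u$, starting from $h_0=0$.

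Your diagnosis of the indexing is also right, but it should be presented as a correction of the statement, not as a ``convention.'' With the hypothesis only for $0<j<i$, the lemma is false, and not only at $i=1$. For example, take $k=3$, $i=2$, and:
\begin{itemize}
\item edges $uv$ of weight $1$, $ua$ and $vb$ of weight $D\ge 2$, and $uc$ and $vc$ of weight $D+1$;
\item $A_1=\{a,b,c\}$ and $A_2=\{c\}$.
\end{itemize}
Then $B(v)=\{u,v,b,c\}$ and $B(u)=\{u,v,a,c\}$, so $p_1(u)=a\notin B(v)$ and $p_1(v)=b\notin B(u)$. Yet $h_2(u)=D+1>2d(u,v)$.

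So the hypothesis must include $j=0$, i.e.\ $u\notin B(v)$ and $v\notin B(u)$. Through $B_0$ this gives $h_1(u),h_1(v)\le d(u,v)$, exactly as your general step does. This is how Thorup and Zwick's argument runs: their query starts from $w=p_0(u)=u$. It is also how the paper actually applies the lemma, since $j_v$, $i_u$, $i_v$ are defined as first indices ranging from $0$.

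In the write-up, state the corrected hypothesis $0\le j<i$ up front, drop the abandoned attempts at the $j=0$ step, and run your induction.
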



Let $\THZQuery(u,v)$ be the query of the distance oracle of \cite{DBLP:journals/jacm/ThorupZ05}. $\THZQuery(u,v)$ finds the minimum $i$ such that $p_i(u)\in B(v)$ or $p_i(v)\in B(u)$, and then returns $\min(d(u,p_i(v))+d(p_i(v),v), d(u,p_i(u))+d(p_i(u),v))$ as the estimation.
Using \Cref{L-Bound-p-i-d},  \cite{DBLP:journals/jacm/ThorupZ05} proved:
\begin{lemma}[\cite{DBLP:journals/jacm/ThorupZ05}] \label{L-THZ05-Q-Correctness-private}
    For every $0\le i\le k-1$ it holds that $\THZQuery(u,v) \le (2k-2i-1)d(u,v)+2h_i(u)$.
\end{lemma}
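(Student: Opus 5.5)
The plan is to follow the standard Thorup--Zwick argument, now started at level $i$ instead of $0$. Let $j$ be the index at which $\THZQuery(u,v)$ stops, i.e., the minimum $j$ such that $p_j(u)\in B(v)$ or $p_j(v)\in B(u)$. Such a $j$ exists because $A_k=\emptyset$, so $p_{k-1}(u)\in B_{k-1}(v)$. Whichever witness is found, say $p_j(v)\in B(u)$, the returned value is at most $d(u,p_j(v))+h_j(v)$. By the triangle inequality this is at most $d(u,v)+2h_j(v)$, and symmetrically at most $d(u,v)+2h_j(u)$ in the other case. So the goal is to bound $\max(h_j(u),h_j(v))$ by $(k-i-1)\,d(u,v)+h_i(u)$ whenever $j\ge i$. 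The case $j<i$ is handled separately below.

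\textbf{The inductive bound.} I will show by induction on $\ell$, for $i\le \ell\le j$, that
\[
h_\ell(u),\,h_\ell(v)\le (\ell-i)\,d(u,v)+h_i(u)+d(u,v).
\]
A sharper, alternating form is needed, mirroring the proof of \Cref{L-Bound-p-i-d}.

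The base case is $h_i(u)\le h_i(u)$ and $h_i(v)\le h_i(u)+d(u,v)$. For the step from $\ell$ to $\ell+1$ with $\ell<j$, we have $p_\ell(u)\notin B(v)$. This gives $h_{\ell+1}(v)\le d(v,p_\ell(u))\le h_\ell(u)+d(u,v)$, and symmetrically for $u$.

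Tracking only the single chain that alternates between $u$ and $v$, as Thorup and Zwick do, gives $h_j(\cdot)\le h_i(u)+(j-i)\,d(u,v)$ for the endpoint used at level $j$. Here I rely on the fact that the query examines the side whose chain has the right parity: if at level $j$ it were the other side, we use $h_j(u)\le h_j(v)+d(u,v)$ instead. Then the estimate is at most $d(u,v)+2h_i(u)+2(j-i)\,d(u,v)$. With $j\le k-1$ this is at most $(2k-2i-1)\,d(u,v)+2h_i(u)$, as desired.

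\textbf{The case $j<i$.} The query stops early at a level where the standard bound \Cref{L-Bound-p-i-d} applies. It gives estimate at most $d(u,v)+2j\,d(u,v)\le(2i+1)\,d(u,v)$. To fold this into the claimed form, I will compare $2i\,d(u,v)$ with $2h_i(u)+(2k-4i-2)\,d(u,v)$. This comparison may fail when $i$ is large, so instead I will argue monotonically. Since the query stops at the first $j$ for which the membership condition holds, and the value $d(u,v)+2h_j$ is obtained at that $j$, the quantity $h_j\le h_i$ for $j<i$ makes the bound $d(u,v)+2h_j(\cdot)\le d(u,v)+2h_i(u)+2\,d(u,v)$ immediate. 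This is at most the target whenever $k-i-1\ge1$. The edge case $i=k-1$ is checked directly: there $j\le k-1=i$.

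\textbf{Main obstacle.} The delicate step is the parity/alternation bookkeeping in the induction: ensuring the chain's additive term is $h_i(u)$ rather than $h_i(u)+d(u,v)$, so that the final constant is exactly $2k-2i-1$ and not $2k-2i+1$. I would handle this by choosing which of $u,v$ to propagate at each level according to which membership test failed, exactly as in the original proof.
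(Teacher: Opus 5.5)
\textbf{Gap 1 (the case $i=k-1$).} Your treatment of $j<i$ is correct for $i\le k-2$. A witness on $v$'s side gives $d(u,v)+2h_j(v)\le 3d(u,v)+2h_i(u)$, which fits because $2k-2i-1\ge 3$. But the sentence ``the edge case $i=k-1$ is checked directly: there $j\le k-1=i$'' proves nothing. The problematic situation is exactly $j<k-1$ with $p_j(v)\in B(u)$ as the only witness, and there your bound $3d(u,v)+2h_{k-1}(u)$ exceeds the target $d(u,v)+2h_{k-1}(u)$.

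This cannot be patched. Suppose the query returns the value of a witness that is actually in the bunch, which is the semantics you use and the one \Cref{CC-hat-d-u-v-le-thzq} relies on. Then the inequality is false for $i=k-1$. Take $k=3$ and the path $b,v,u,a,c$ with $\ell(b,v)=2$, $\ell(v,u)=3$, $\ell(u,a)=\ell(a,c)=\tfrac12$. Let $A_1=\{a,b,c\}$, $A_2=\{b,c\}$, $A_3=\emptyset$. Then $B(u)=\{u,a,b,c\}$ and $B(v)=\{v,b,c\}$. Level $0$ has no witness, and at level $1$ the only witness is $p_1(v)=b\in B_2(u)=A_2$. So the query returns $d(u,b)+d(b,v)=7$, while $(2k-2i-1)d(u,v)+2h_i(u)=3+2\cdot 1=5$ for $i=2$.

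Hence the statement must be restricted to $i\le k-2$, or to the situation where the query does not stop before level $i$. Alternatively, read the paper's definition literally: the query returns the minimum of both expressions at the stopping level, whether or not both are witnesses. Under that reading, $j<i$ is immediate for every $i$, since $h_j(u)+d(p_j(u),v)\le d(u,v)+2h_j(u)\le d(u,v)+2h_i(u)$.

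\textbf{Gap 2 (parity in the case $j\ge i$).} Let $x_j\in\{u,v\}$ be the endpoint at level $j$ of your alternating chain started from $h_i(u)$, so $h_j(x_j)\le h_i(u)+(j-i)d(u,v)$, and let $y_j$ be the other vertex. Suppose the only witness at level $j$ is $p_j(y_j)$. Then switching via $h_j(y_j)\le h_j(x_j)+d(u,v)$ costs an extra $2d(u,v)$. The estimate becomes $d(u,v)+2h_i(u)+2(j-i+1)d(u,v)$, not $d(u,v)+2h_i(u)+2(j-i)d(u,v)$ as you write. At $j=k-1$ this would only give $2k-2i+1$.

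Appealing to the original proof does not resolve this. There the parity is fixed by starting from $u$ at level $0$, whereas you start at level $i$. The missing observation is that this bad situation cannot occur at $j=k-1$. Since $A_k=\emptyset$, $B_{k-1}(x)=A_{k-1}$ for every $x$, so both $p_{k-1}(u)\in B(v)$ and $p_{k-1}(v)\in B(u)$ hold, and the minimum includes $x_{k-1}$'s side. For $j\le k-2$ the extra $2d(u,v)$ is absorbed, giving at most $d(u,v)+2h_i(u)+2(k-1-i)d(u,v)=(2k-2i-1)d(u,v)+2h_i(u)$.

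With these two corrections, your argument proves the lemma for all $0\le i\le k-2$. For $i=k-1$ it holds only when the query reaches level $k-1$.
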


The following lemma is implicit in \cite{DBLP:conf/spaa/ThorupZ01}, and explicit in \cite{DBLP:conf/soda/ChechikHL25}. Using this lemma, one can bound $h_i(u)$ in the case that $p_{i-1}(v)\notin B(u)$. We add its proof for completeness since it is used in the correctness of \Cref{T-nPSP-Weighted-Sparse}.
\begin{lemma}[\cite{DBLP:conf/spaa/ThorupZ01,DBLP:conf/soda/ChechikHL25}]
    \label{L-Bound-delta-2d}
    Let $u,v\in V$, if $p_{i-1}(v)\notin B(u)$ then $d(v, p_{i}(v)) \le d(v, p_{i-1}(v)) + 2d(u, v)$.
\end{lemma}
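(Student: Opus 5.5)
The plan is to unwind the bunch definition and then apply the triangle inequality twice. Recall that $B(u)\supseteq B_{i-1}(u)=B(u,A_i,A_{i-1})$, and that this set contains every $y\in A_{i-1}$ with $d(u,y)<d(u,A_i)=h_i(u)$. The hypothesis says $p_{i-1}(v)\notin B(u)$. Since $p_{i-1}(v)\in A_{i-1}$, it cannot satisfy the strict inequality, so
\[
h_i(u)\le d(u,p_{i-1}(v)).
\]
(If $p_{i-1}(v)$ is excluded for some other reason, for example because it coincides with $p(u,A_i)$, then it would lie in $B(u)$. That contradicts the hypothesis, so this case cannot occur.)

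Next, I bound $d(u,p_{i-1}(v))$ by the triangle inequality through $v$:
\[
d(u,p_{i-1}(v))\le d(u,v)+d(v,p_{i-1}(v)).
\]
Combining the two inequalities gives $h_i(u)\le d(v,p_{i-1}(v))+d(u,v)$.

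Finally, I move from $u$ back to $v$ at level $i$. Let $w=p_i(u)\in A_i$. Then
\[
d(v,p_i(v))=d(v,A_i)\le d(v,w)\le d(v,u)+h_i(u).
\]
Substituting the previous bound yields
\[
d(v,p_i(v))\le d(v,p_{i-1}(v))+2d(u,v),
\]
which is exactly the claim.

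The only delicate point is the first step. There I must check that non-membership in $B(u)$ (rather than in $B_{i-1}(u)$ specifically) still gives $d(u,p_{i-1}(v))\ge h_i(u)$. This holds because $B_{i-1}(u)\subseteq B(u)$, so excluding $p_{i-1}(v)$ from $B(u)$ also excludes it from $B_{i-1}(u)$. Edge cases such as $A_i=\emptyset$, where $h_i=\infty$, should be noted separately: there the statement is vacuous or holds trivially under the convention that $\infty$ bounds are allowed.
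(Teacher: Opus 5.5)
Your proof is correct and is the same argument as the paper's: both use the chain $d(v,p_i(v))\le d(v,p_i(u))\le d(v,u)+h_i(u)\le d(v,u)+d(u,p_{i-1}(v))\le 2d(u,v)+d(v,p_{i-1}(v))$. In both, the hypothesis $p_{i-1}(v)\notin B(u)$ justifies the third inequality; you simply derive the bound on $h_i(u)$ first and then transfer it back to $v$.
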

\begin{proof}
    From the definition of $p_i(v)$, we know that it is the closest vertex to $v$ in $A_i$, and since $p_i(u)\in A_i$ we get that  
    $d(v, p_i(v)) \leq d(v, p_i(u))$.
    From the triangle inequality, it follows that
    $d(v, p_i(u)) \leq d(v, u) + d(u, p_i(u))$.
    By the assumption of the lemma, we know that $p_{i-1}(v) \notin B(u)$. Therefore,    $d(u, p_i(u)) \leq d(u, p_{i - 1}(v))$.
    From the triangle inequality, it follows that
    $d(u, p_{i - 1}(v)) \leq d(u, v) + d(v, p_{i - 1}(v))$.
    Overall, we get that:
    \begin{align*}
        d(v, p_i(v)) &\leq d(v, p_i(u)) \leq d(v, u) + d(u , p_i(u)) \leq d(v ,u) + d(u, p_{i - 1}(v)) 
        \\&\leq d(v , u) + d(u , v) + d(v, p_{i - 1}(v)) = 2d(u , v) + d(v, p_{i - 1}(v)), \text{ as required.}
    \end{align*}
\end{proof}

Thorup and Zwick \cite{DBLP:journals/jacm/ThorupZ05} proved the following bound on the size of the bunches. 
\begin{lemma}[Lemma 3.5 in~\cite{DBLP:journals/jacm/ThorupZ05}]
    $|B(u)| = O(k\cdot n^{1/k}\log^{1-1/k} n)=\Ot(n^{1/k})$ with high probability.
\end{lemma}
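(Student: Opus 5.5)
The plan follows Thorup and Zwick. The exponent $1-1/k$ on the logarithm comes from the choice of sampling rate, so I will prove the statement for the variant of the hierarchy used in Lemma~3.5 of~\cite{DBLP:journals/jacm/ThorupZ05}. In that variant each vertex of $A_i$ is kept in $A_{i+1}$ independently with probability $q=(n/\ln n)^{-1/k}$ instead of $n^{-1/k}$. This changes every quantity in the paper only by polylogarithmic factors, so none of the $\Ot(\cdot)$ statements is affected. The key identity is
\[
\frac{\ln n}{q} \EQ n^{1/k}(\ln n)^{1-1/k}.
\]
Since $B(u)=\bigcup_{i=0}^{k-1}B_i(u)$, it suffices to show that for every $u$ and every level $i$, $|B_i(u)| = O(\ln n/q)$ with high probability. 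A union bound over the $nk$ pairs $(u,i)$ then gives $|B(u)| = O(k\ln n/q) = O(k\,n^{1/k}\log^{1-1/k}n)$ for all $u$ simultaneously.

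\textbf{Levels $i<k-1$.} Condition on $A_i$ and sort its vertices by distance from $u$ as $a_1,a_2,\ldots$, breaking ties by identifier, as in the definition of $p(u,\cdot)$. Given $A_i$, the set $A_{i+1}$ is an independent $q$-sample of $A_i$. By definition, $B_i(u)\setminus\{p_{i+1}(u)\}$ consists of the vertices of $A_i$ that are strictly closer to $u$ than the first sampled vertex. Therefore $|B_i(u)|>t$ can happen only if none of $a_1,\ldots,a_t$ was sampled, which has probability $(1-q)^t\le e^{-qt}$. Taking $t=c\ln n/q$ makes this at most $n^{-c}$. The event $A_{i+1}=\emptyset$ is covered by the same bound, since we may take $t\ge |A_i|$ in that case.

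\textbf{Level $i=k-1$.} Here $A_k=\emptyset$, so $B_{k-1}(u)=A_{k-1}$ and no ordering argument is available. Instead, $|A_{k-1}|$ is a sum of $n$ independent indicators with mean
\[
nq^{k-1} \EQ n\cdot n^{-(k-1)/k}(\ln n)^{(k-1)/k} \EQ n^{1/k}(\ln n)^{1-1/k} \EQ \frac{\ln n}{q} \GE \ln n .
\]
A multiplicative Chernoff bound then gives $|A_{k-1}|\le c'\, n^{1/k}(\ln n)^{1-1/k}$ with probability at least $1-n^{-c}$.

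\textbf{Main obstacle.} The only delicate points are the conditioning and the last level. The ordering argument must condition on $A_i$ so that the sampling into $A_{i+1}$ is independent of the order. The top level must be handled separately, because the required concentration there needs the mean to be at least logarithmic, and with $q=(n/\ln n)^{-1/k}$ the mean is exactly $\ln n/q\ge\ln n$. This choice of $q$ is also what balances all $k$ levels at $O(n^{1/k}\log^{1-1/k}n)$ each. With the paper's rate $n^{-1/k}$, the same argument only gives $O(n^{1/k}\log n)$ per level, which still yields $\Ot(n^{1/k})$.
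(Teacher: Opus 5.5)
Your proof is correct, and it is essentially the standard Thorup--Zwick argument that the paper cites without proof: with sampling rate $q=(n/\ln n)^{-1/k}$, you bound each level $i<k-1$ by the geometric tail $(1-q)^t$ after conditioning on $A_i$, bound the top level by a Chernoff bound on $|A_{k-1}|$ (whose mean is exactly $\ln n/q$), and finish with a union bound over the $nk$ pairs $(u,i)$. Your side remark is also accurate: with the paper's literal rate $n^{-1/k}$ the same argument only gives $O(kn^{1/k}\log n)$ with high probability, which does no harm because the paper only uses the $\Ot(n^{1/k})$ form.
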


\section{Technical Overview}
\paragraph{\nPSP in unweighted graphs.}
The $(\lceil \tfrac{4k}{3} \rceil - 1,\, \lceil \tfrac{4k}{3} \rceil - 1)$-approximation algorithm of Chechik~\etal~\cite{DBLP:conf/soda/ChechikHL25} works only in unweighted graphs because its analysis relies on the ability to select three specific vertices on the shortest path between $u$ and $v$, positioned approximately at distances $d(u,v)/4$, $d(u,v)/2$, and $3d(u,v)/4$ from $u$. Our first contribution is a simplification of the algorithm of~\cite{DBLP:conf/soda/ChechikHL25} in which we show that it suffices to consider only a single vertex on the shortest path that is located at roughly $d(u,v)/2$ from $u$. Moreover, we show that when $d(u,v) \equiv 0 \pmod{2}$, the additive term in the approximation can be eliminated entirely. We present this result as a warm-up in \Cref{S-4k/3-unweighted}. In \Cref{S-k-stretch-small-k}, we show that by slightly modifying the algorithm of \Cref{S-4k/3-unweighted} and using tighter analysis, one can improve the multiplicative stretch to $\ceil{\tfrac{4k}{3}-\tfrac{5}{3}}$ and remove the additive error entirely.

\paragraph{\nPSP in weighted graphs.}
In~\Cref{S-nPSP-weighted} we turn our attention to weighted graphs. 
Let $P(u,v)$ be a shortest path between $u$ and $v$, let $(u',v')$  be the heaviest edge on $P(u,v)$ and let $W_{uv}$ be the weight of $(u',v')$. 
We begin by showing that applying the algorithm from ~\Cref{S-4k/3-unweighted} to weighted graphs yields a $(\lceil \tfrac{4k}{3} \rceil - 1,\, (\lceil \tfrac{4k}{3} \rceil - 1) \cdot W_{uv})$-approximation (see \Cref{T-nPSP-weighted-Sparse-Wuv}).
The rest of~\Cref{S-nPSP-weighted} is devoted to our main result which is a $1.622k$-approximation algorithm for weighted graphs  (see \Cref{T-nPSP-Weighted-Sparse}). 

To achieve a $1.622k$-approximation, we distinguish between two cases based on the value of $W_{uv}$. When $W_{uv}$ is small, we use \Cref{T-nPSP-weighted-Sparse-Wuv} to get a good approximation. 
The interesting case is when $W_{uv}$ is large. 
Consider for example  the case that $W_{uv}=\ell(u',v')\approx d(u,v)/2$.
In this case, the $(\lceil \tfrac{4k}{3} \rceil - 1,\, (\lceil \tfrac{4k}{3} \rceil - 1) \cdot W_{uv})$-approximation algorithm gives an approximation of $\approx \tfrac{4k}{3}d(u,v)+\tfrac{4k}{3}\cdot W_{uv}\stackrel{W_{uv}\approx d(u,v)/2}\approx2k\cdot d(u,v)$, which does not break the $2k-O(1)$ approximation barrier. To handle the case that $W_{uv}$ is large, we develop a new technique called the \textit{heavy-edge} technique.

Consider again the case that $W_{uv}=\ell(u',v')\approx d(u,v)/2$. For simplicity, assume that the edge $(u',v')$ is given to us. 
In this case, we can compute an estimation $\hat{d}(u,v)$ for $d(u,v)$ as follows. 
\begin{align*}
    \hat{d}(u,v) &=\THZQuery(u,u')+\ell(u',v')+\THZQuery(v',v)\\
    &\leq (2k-1)(d(u,u')+d(v',v)) + \ell(u',v') 
    \\&\stackrel{W_{uv}\approx d(u,v)/2}\approx (2k-1)\cdot  \tfrac{d(u,v)}{2} + \tfrac{d(u,v)}{2} = k\cdot d(u,v), 
\end{align*}

which breaks the $2k-O(1)$ approximation barrier as wanted. The difficulty is that the heavy edge $(u',v')$ is not known \emph{a priori}. 
We therefore preprocess over all possible edges $(u',v')\in E$, treating each edge as a candidate heavy edge on a queried shortest path.
Specifically, we create a hash table $H(\cdot,\cdot)$ of distance estimations, and for each edge $(u',v')\in E$, $w\in B(v')$ and $i\in [k]$, we set
$$H(p_i(u'), w) = \min(H(p_i(u'), w), h_i(u') + \wt(u', v') + d(v', w))$$

Intuitively, this stores the cost of the walk $p_i(u') \rightsquigarrow u' \to v' \rightsquigarrow w$ in $H(p_i(u'), w)$.
To compute the estimation $\hat{d}(u,v)$, the algorithm considers all vertices $x\in B(u)$ and $w\in B(v)$, and updates $\hat{d}(u,v)$ as follows:
$$\hat{d}(u,v) = \min(\hat{d}(u,v), d(u,x)+H(x,w)+d(w,v))$$

For every pair $\pair{u,v}\in I$ the query takes $O(|B(u)|\cdot |B(v)|)=O(n^{2/k})$ time, since $|B(u)|,|B(v)|=O(n^{1/k})$. Therefore, the total running time for $n$ queries is $O(n^{1+2/k})$, as required.

The main technical contribution is the approximation analysis provided in \Cref{L-stretch-1.622-weighted}, which shows that this algorithm breaks the $2k-O(1)$ approximation barrier.  Specifically, we prove that this algorithm achieves a stretch of $1.622k$, regardless of the value of $W_{uv}$.
We note that the heavy-edge technique might be found useful to convert other $W_{uv}$-dependent approximation algorithms into purely multiplicative approximation algorithms that do not depend on $W_{uv}$. 

\paragraph{\tPSP in dense weighted graphs.}
In~\Cref{S-dense}, we address the \tPSP problem in dense weighted graphs. Unlike previous approaches, which construct a spanner and then run an \nPSP algorithm on top of it, our approach does not use the spanner as a black-box. We leverage a key property of the inner structure of the spanner to obtain an improved algorithm.
The $(2k-1)$-spanner of~\cite{DBLP:journals/rsa/BaswanaS07} is constructed by sparsifying the graph $G$ in $k$ iterations into a subgraph $H$. We show that if an edge $(u,v)$ is removed in iteration $i$, then the distance between $u$ and $v$ in $H$ is at most $(2i-1)d_G(u,v)$ rather than $(2k-1)d_G(u,v)$. 

Consider the shortest path $P_G(u,v)$, and let $i$ be the iteration in which the last edge from $P_G(u,v)$ is removed, i.e., after the $i$'th iteration of the spanner construction, all the edges of $P_G(u,v)$ are already removed from the graph.
If $i$ is small, then the edges of $P_G(u,v)$ are removed at an early stage, and therefore, the stretch of the entire path in $H$ is small, satisfying $d_H(u,v) \le (2i-1)d_G(u,v)$.
If $i$ is large, then there exists $(x,y)\in P_G(u,v)$ that remained until a late iteration, implying that $h_i(x) \le i\cdot \ell(x,y)$.
We use this property to get that $h_i(u)\le i\cdot d(u,v)$, and utilize the fact that $h_i(u)\le i\cdot d(u,v)$ in our algorithm, by incorporating the parameterized distance oracle of Kadria and Roditty~\cite{kadria2025fasteralgorithms2k1stretchdistance}.
See \Cref{T-nPSP-dense-k^2+2} for the complete proof.

\section{Warm-up: $(\ceil{4k/3}-1,(\ceil{4k/3}-1)\texorpdfstring{\ODD}{})$-approximation in \texorpdfstring{$\Ot$}{O}$(mn^{1/k} + n^{1+2/k})$ time}\label{S-4k/3-unweighted}

In this section, we prove the following theorem:
\begin{theorem}\label{T-nPSP-unweighted-Sparse}
    \nPSPTheorem{$G=(V, E)$ be an unweighted undirected graph}{(\ceil{\tfrac{4k}{3}}-1) d(u,v)+(\ceil{\tfrac{4k}{3}}-1)\ODD}\footnote{$x\ODD = x\cdot (d(u,v)\pmod 2)$.}
\end{theorem}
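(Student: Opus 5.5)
Build the Thorup--Zwick hierarchy $V=A_0\supset A_1\supset\dots\supset A_k=\emptyset$ and compute all bunches $B(u)$ with their distances (\Cref{L-THZ-construction}), in $\Ot(mn^{1/k})$ time. Store every pair $(x,w)$ with $w\in B(x)$, together with $d(x,w)$, in a hash table. This takes $\Ot(n^{1+1/k})$ space.

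For each query $\pair{u,v}\in I$ I use three estimates and return their minimum.
\begin{enumerate}
\item $\THZQuery(u,v)$.
\item The ``meet in the middle'' estimate $\min_{x\in B(u),\,y\in B(v),\, y\in B(x)\text{ or } x\in B(y)} d(u,x)+d(x,y)+d(y,v)$. It costs $O(|B(u)||B(v)|)=\Ot(n^{2/k})$ hash lookups per query, so $\Ot(n^{1+2/k})$ in total.
\item The TZ-style estimates through pivots of intermediate bunches: for every $x\in B(u)$, $y\in B(v)$ and every $i$, the values $d(u,x)+h_i(x)+d(p_i(x),y)+d(y,v)$ whenever $p_i(x)\in B(y)$, and symmetrically with the roles of $x$ and $y$ swapped.
\end{enumerate}
All three are valid upper bounds on $d(u,v)$, because each is the length of a walk from $u$ to $v$.

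\textbf{Analysis.} Let $\delta=d(u,v)$ and let $z$ be the vertex on $P(u,v)$ with $d(u,z)=\lfloor\delta/2\rfloor$ and $d(z,v)=\lceil\delta/2\rceil$. Set $j=\lceil k/3\rceil$ roughly. The case split is on the levels at which $z$ leaves the bunches of $u$ and $v$.

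\emph{Case 1.} $p_{i}(u)\in B(v)$ or $p_i(v)\in B(u)$ for some small $i<j$. Then \Cref{L-Bound-p-i-d} together with \Cref{L-THZ05-Q-Correctness-private} bounds $\THZQuery(u,v)$ by $(2i+1)\delta\le\frac{4k}{3}\delta$.

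\emph{Case 2.} Otherwise $h_j(u),h_j(v)\le j\delta$. If $z$ lies in a low level of the bunch of both endpoints (for instance $z\in B_{<j'}(u)\cap B_{<j''}(v)$), then estimate 2 with $x=y=z$ returns exactly $\delta$.

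\emph{Case 3.} Otherwise, say, $z\notin B_{i}(u)$ for a suitable $i$. Then $h_{i+1}(u)\le d(u,z)\le\delta/2$. This is the point where considering only the single midpoint suffices. Now run the TZ bootstrap from $z$ rather than from $u$: apply \Cref{L-Bound-delta-2d} repeatedly along the pair $(z,v)$, starting at level $i+1$ with the bound $h_{i+1}(z)\le h_{i+1}(u)+d(u,z)\le\delta$. Each level where the pivots fail to land in the other bunch increases the bound by $2d(z,v)=2\lceil\delta/2\rceil$ instead of $2\delta$. This halved increment is what gives the $4k/3$ slope rather than $2k$.

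Once some level $\ell$ has $p_\ell(z)\in B(v)$, estimate 3, instantiated with $x\in B(u)$ chosen as $p_{i+1}(u)$ or as $z$, yields a walk of length at most
\[ d(u,z)+2h_\ell(z)+d(z,v). \]
The remaining work is to balance the three thresholds (the level at which TZ is used, the level at which $z$ drops out of $u$'s bunch, and the level at which it drops out of $v$'s bunch) so that every case gives at most $(\lceil 4k/3\rceil-1)\delta$ plus $(\lceil 4k/3\rceil-1)$ times the rounding loss. The rounding loss is exactly the discrepancy between $\lceil\delta/2\rceil$ and $\delta/2$, and it is nonzero only when $\delta$ is odd. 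This explains the $\ODD$ additive term.

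The main obstacle is this last case analysis. I must verify that $p_\ell(z)$, or the relevant intermediate vertex, actually belongs to the bunch of a vertex whose bunch we scan. I must also set the thresholds with the correct ceilings so that the three regimes meet precisely at $\lceil 4k/3\rceil-1$. I would first work out $k=3$ and $k=4$ by hand to fix the thresholds, and then generalize.
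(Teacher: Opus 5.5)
Your proposal has a genuine gap, and it sits in the step you defer. In Case~3 you want to charge the walk $u\to z\to p_\ell(z)\to v$, but none of your three estimates computes it. Estimate~3 needs the intermediate vertex $x$ to lie in $B(u)$. The choice $x=z$ is unavailable precisely in Case~3, where $z\notin B(u)$. The choice $x=p_{i+1}(u)$ makes the estimate consult the pivots of $p_{i+1}(u)$, not those of $z$. Bootstrapping from $p_{i+1}(u)$ toward $v$ (a pair at distance up to $3\delta/2$, not $\delta/2$) also loses the halved increment that was supposed to give the $4k/3$ slope.

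More generally, every estimate you compute joins a vertex of $B(u)$ to a vertex of $B(v)$ only through a bunch membership between the two of them, or between one of them and a pivot of the other. So the query can never route through the midpoint, or any third vertex, lying outside $B(u)\cup B(v)$, and the query does not know $z$. There is also a smaller slip: $z\notin B_i(u)$ gives $h_{i+1}(u)\le d(u,z)$ only when $z\in A_i$. The usable hypothesis is $p_i(z)\notin B(u)$, which gives $h_{i+1}(u)\le d(u,z)+h_i(z)$.

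The missing ingredient is the paper's center-indexed table $H(w,z)=\min\{d(x,w)+d(x,z) : x\in V,\ w,z\in B(x)\}$. It is built in $\sum_x|B(x)|^2=\Ot(n^{1+2/k})$ time; this, not only the query cost, is what the $n^{1+2/k}$ budget pays for, whereas your structure stores only $\Ot(n^{1+1/k})$ data. The query $\min_{w\in B(u),\,z\in B(v)} d(u,w)+H(w,z)+d(z,v)$ then implicitly passes through the unknown midpoint $\tau$ whenever $w\in B(u)\cap B(\tau)$ and $z\in B(v)\cap B(\tau)$.

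The paper's analysis takes $t=\lceil\delta/2\rceil$ and $i$ maximal with $h_i(\tau)\le it$.
\begin{itemize}
\item If $p_i(\tau)\in B(u)\cap B(v)$, then $\hat d(u,v)\le(2i+2)t$.
\item Otherwise, say $p_i(\tau)\notin B(u)$. Then $h_{i+1}(u)\le d(u,\tau)+h_i(\tau)\le(i+1)t$. Maximality of $i$ gives $h_{i+2}(\tau)>(i+2)t$, which forces $p_{i+1}(u)\in B(\tau)$. The entry of $H$ with center $\tau$ then yields $\hat d(u,v)\le(4i+6)t$, while $\THZQuery$ at level $i+1$ yields $(4k-2i-4)t$.
\end{itemize}
Since $\min(4i+6,\,4k-2i-4)\le 2(\lceil 4k/3\rceil-1)$ for every integer $i$ and $2t=\delta+1\ODD$, the theorem follows. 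Without some mechanism of this kind for certifying walks through $\tau$, the case analysis you leave open cannot be closed along the lines you sketch.
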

The algorithm is simple and works as follows. First, the algorithm initializes two empty hash tables $H,\hat{d}$, and for every $u\in V$ computes $B(u)$ using \Cref{L-THZ-construction}.
Then, for every $u\in V$, and every $v,w\in B(u)$ the algorithm sets $H(v,w)$ to $\min(H(v,w), d(u,v)+d(u,w))$. 
Next, for every vertex pair $\pair{u,v}\in I$, 
and for every $w\in B(u)$ and $z\in B(v)$, the algorithm sets $\hat{d}(u,v)$ to $\min(\hat{d}(u,v), d(u,w)+H(w,z)+d(z,v))$. The algorithm returns $\hat{d}(u,v)$ for every $\pair{u,v}\in I$.
A pseudocode for the algorithm is presented in \Cref{A-npsp-unweighted-4k/3}.
\begin{algorithm2e}
    \caption{$(\ceil{4k/3}-1,(\ceil{4k/3}-1)\ODD$-approximation for \nPSP}\label{A-npsp-unweighted-4k/3}
    $H \gets HashTable(); \hat{d} \gets HashTable()$ \\
    Compute $B(u)$ for every $u\in V$ [\Cref{L-THZ-construction}]\\
    \For{$u\in V$} {
        \For{$v,w\in B(u)$} {
            $H(v,w) = \min(H(v,w), d(u,v)+d(u,w))$
        }
    }
    \For{$\pair{u,v}\in I$} {
        \For{$w\in B(u)$ and $z\in B(v)$} {
            $\hat{d}(u,v) = \min(\hat{d}(u,v), d(u,w)+H(w,z)+d(z,v))$
        }
    } 
    \Return $\{\hat{d}(u,v) \mid \pair{u,v}\in I\}$
\end{algorithm2e}

Let $\pair{u,v}\in I$ be a vertex pair from the input.
To prove \Cref{T-nPSP-unweighted-Sparse} we need to prove that:
$$\hat{d}(u,v) \le (\ceil{4k/3}-1)d(u,v)+(\ceil{4k/3}-1)\ODD$$
Let $t=d(u,v)/2+0.5\ODD$. Since the graph is unweighted, there is a vertex $\tau\in P(u,v)$ such that $d(u,\tau)=t$ and $d(v,\tau)=t-1\ODD\le t$. In \Cref{L-warmup-4k/3-unweighted-correctness}, we show that in such a case  $\hat{d}(u,v) \le (\ceil{4k/3}-1)2t$. Since $t=d(u,v)/2+0.5\ODD$ this implies that
$$\hat{d}(u,v) \le (\ceil{4k/3}-1)2t = (\ceil{4k/3}-1) d(u,v)+(\ceil{4k/3}-1)\ODD,$$
as required. 
We remark that the only point at which the assumption that $G$ is unweighted is used is in setting 
$t = d(u,v)/2 + 0.5\ODD$.
Apart from this, the proof of \Cref{L-warmup-4k/3-unweighted-correctness} remains valid for weighted graphs, provided the lemma's condition is satisfied.

\begin{lemma}\label{L-warmup-4k/3-unweighted-correctness}
    Let $t$ be a value such that there exists a vertex $ \tau \in P(u,v) $ satisfying $ d(u, \tau), d(v, \tau) \le t$. Then, $\hat{d}(u,v) \le (\lceil 4k/3 \rceil - 1) \cdot 2t.$
\end{lemma}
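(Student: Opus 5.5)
The plan is to exploit the one freedom \Cref{A-npsp-unweighted-4k/3} gives us: for any vertex $x$ and any $w\in B(u)\cap B(x)$, $z\in B(v)\cap B(x)$, the algorithm guarantees
\[
\hat d(u,v)\le d(u,w)+d(w,x)+d(x,z)+d(z,v),
\]
since $H(w,z)\le d(x,w)+d(x,z)$ was set while processing $x$. We will take $x=\tau$, the midpoint vertex, and choose $w$ and $z$ among the pivots of $u$, $v$ and $\tau$. Two facts supply these candidates.

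\begin{itemize}
\item $p_i(y)\in B_{i-1}(y)\subseteq B(y)$ for every $y$ and every $i$.
\item Since $A_k=\emptyset$, $B_{k-1}(y)=A_{k-1}$. Hence $p_{k-1}(\tau)\in B(u)\cap B(v)$ always, so the case analysis below terminates.
\end{itemize}

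Concretely, on the $u$ side we use $w=p_i(\tau)$ whenever $p_i(\tau)\in B(u)$. This costs $d(u,w)+d(w,\tau)\le t+2h_i(\tau)$. Alternatively we use $w=p_i(u)$ whenever $p_i(u)\in B(\tau)$, which costs at most $t+2h_i(u)$. The $v$ side is symmetric, with some index $j$.

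First I will control the pivot distances through the pairs $(u,\tau)$ and $(v,\tau)$, both of distance at most $t$. Let $i$ be the minimal index at which the Thorup--Zwick test succeeds for $(u,\tau)$, that is, $p_i(u)\in B(\tau)$ or $p_i(\tau)\in B(u)$. \Cref{L-Bound-p-i-d}, applied with $d(u,\tau)\le t$, gives $h_i(u),h_i(\tau)\le i\,t$, so the $u$ side costs at most $(2i+1)t$. Likewise the $v$ side, with minimal index $j$, costs at most $(2j+1)t$. Together this gives $\hat d(u,v)\le (2i+2j+2)t$.

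This alone is only a $(2k-1)$-type bound when $i=j=k-1$, so a second ingredient is needed. If the $u$-side test has failed at levels $0,\dots,\ell-1$ and the $v$-side test at levels $0,\dots,\ell'-1$, then \Cref{L-Bound-delta-2d} can be applied to both pairs $(u,\tau)$ and $(v,\tau)$ simultaneously. At each level where a test fails, $h(\tau)$ grows by at most $2t$, but the failures on the two sides also constrain each other.

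The decisive extra route is a common pivot of $\tau$. If at some level $\ell$ we have $p_\ell(\tau)\in B(u)\cap B(v)$, we take $w=z=p_\ell(\tau)$. This costs $d(u,w)+d(w,v)\le 2t+2h_\ell(\tau)$, which is only one factor of $h_\ell(\tau)$ per side instead of two.

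I will then case-split on the relative sizes of $i$, $j$ and the first level $\ell$ at which $p_\ell(\tau)$ lies in both bunches. The idea is that when both $i$ and $j$ are large (more than about $k/3$), the accumulated growth forces $h_\ell(\tau)$ to be small at some level $\ell\approx 2k/3$, so the common-pivot route costs about $(2\ell+2)t$. Balancing $2i+2j+2$ in the first regime against the common-pivot cost in the second regime should give the threshold $\lceil 4k/3\rceil-1$ multiplying $2t$.

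The main obstacle is exactly this bookkeeping. I need to show that, for every combination of failure levels, one of the three estimates (the $u$-side/$v$-side route, the common pivot of $\tau$, or the symmetric variants using $p_i(u)\in B(\tau)$) is at most $(\lceil 4k/3\rceil-1)2t$. I would first try a potential argument that tracks $h_\ell(\tau)$ together with $h_\ell(u)$ and $h_\ell(v)$ level by level, and handle the rounding in $\lceil 4k/3\rceil$ by checking the residues of $k \bmod 3$ separately.
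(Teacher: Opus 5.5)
Your preliminary facts and both estimates are correct. The through-$\tau$ bound $\hat d(u,v)\le(2i+2j+2)t$ is the same computation as Claim~\ref{eq:bound-du-v-p-iu-piv} in the appendix, and the common-pivot bound $2t+2h_\ell(\tau)$ is right. But the lemma is proved exactly in the step you defer to ``bookkeeping'', and the mechanism you sketch for that step would not go through. Nothing in your setup forces a common pivot of $\tau$, or a small value of $h_\ell(\tau)$, near level $2k/3$. Nothing rules out $p_\ell(\tau)\notin B(u)\cup B(v)$ for every $\ell<k-1$; in that case the only common pivot is $p_{k-1}(\tau)$, and its cost is not of the form $(2\ell+2)t$ with $\ell\approx 2k/3$. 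Likewise, the dangerous regime is not ``$i,j$ above about $k/3$''. There $(2i+2j+2)t\approx\tfrac{4k}{3}t$, far below the target $\approx\tfrac{8k}{3}t$; the dangerous regime is $\max(i,j)$ above roughly $2k/3$. As it stands, the decisive inequality is replaced by an unspecified potential argument, so the proposal has a genuine gap.

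Your ingredients do suffice, and no potential over $h_\ell(u),h_\ell(v)$ is needed:
\begin{itemize}
\item Let $J=\max(i,j)$, say $J=j$. Lemma~\ref{L-Bound-p-i-d} applied to $(v,\tau)$ gives $h_J(\tau)\le Jt$.
\item Let $L$ be the first level with $p_L(\tau)\in B(u)\cap B(v)$. Since $p_\ell(\tau)\notin B(v)$ for $\ell<j$, and by your second bullet, $J\le L\le k-1$.
\item For each $\ell$ with $J\le \ell<L$, $p_\ell(\tau)$ misses $B(u)$ or $B(v)$, so Lemma~\ref{L-Bound-delta-2d} gives $h_{\ell+1}(\tau)\le h_\ell(\tau)+2t$. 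Hence $h_L(\tau)\le(2k-2-J)t$, and the common-pivot route costs at most $(4k-2-2J)t$.
\item Combined with $(2i+2j+2)t\le(4J+2)t$, this yields $\hat d(u,v)\le\min(4J+2,\,4k-2-2J)\,t$. Checking $k\bmod 3$ shows that the maximum of this over integers $J$ is exactly $2(\lceil 4k/3\rceil-1)$.
\end{itemize}
Completed this way, your proof is a genuinely different route from the paper's. The paper lets $i$ be the \emph{largest} index with $h_i(\tau)\le it$ and bounds the routes through $\tau$ by $(4i+6)t$ via Claim~\ref{SC-pi+1inptau-0}. Its second estimate is not a common pivot of $\tau$ but the Thorup--Zwick query (Claim~\ref{CC-hat-d-u-v-le-thzq} with Lemma~\ref{L-THZ05-Q-Correctness-private} at level $i+1$), which gives $(4k-2i-4)t$. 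Your version stays entirely within routes through $\tau$ and its pivots.
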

\begin{proof}
    First, we show that $\hat{d}(u, v) \le \THZQuery(u, v)$.
    \begin{cclaim}\label{CC-hat-d-u-v-le-thzq}
        $\hat{d}(u,v) \le \THZQuery(u, v)$
    \end{cclaim}
    \begin{proof}
        The query of $\THZQuery(u, v)$ finds the minimum $i$ such that $p_i(u)\in B(v)$ or $p_i(v)\in B(u)$.
        Then it returns $h_i(u)+d(p_i(u),v)$ or $d(u,p_i(v))+h_i(v)$. Wlog assume that $p_i(v)\in B(u)$, then in our algorithm, in the final loop there is an iteration with $w=z=p_i(v)$, therefore after this iteration we have that $\hat{d}(u,v)\le d(u,p_i(v))+h_i(v)=\THZQuery(u,v)$, as required.
    \end{proof}
    
    Let $i$ be the maximum index for which $h_i(\tau)\le i\cdot t$. Such an  index must exist  since $h_0(\tau)=d(\tau,\tau)=0=0\cdot t$.    
    We divide the proof into three cases (See \Cref{fig:npsp_small_t} for an illustration of the $3$ cases):
    \begin{enumerate}
        \item $p_i(\tau)\in B(u)\cap B(v)$.\label{E-1}
        \item $p_i(\tau)\in B(v)\setminus B(u)$ or $p_i(\tau)\in B(u)\setminus B(v)$. \label{E-3}
        \item $p_i(\tau)\notin B(u)$ and $p_i(\tau)\notin B(v)$.\label{E-2}
    \end{enumerate}
    We start with case~\ref{E-1}. In this case $p_i(\tau) \in B(u) \cap B(v)$. Therefore, in the final loop of the algorithm, there is an iteration in which $w = z = p_i(\tau)$ and the algorithm 
    sets $\hat{d}(u,v)$ to $\min(\hat{d}(u,v), d(u,p_i(\tau))+H(p_i(\tau),p_i(\tau))+d(p_i(\tau),v))$. 
    Therefore:
    \begin{align*}
        \hat{d}(u,v) &\le d(u,p_i(\tau))+H(p_i(\tau),p_i(\tau))+d(p_i(\tau),v) \stactri \le d(u,\tau)+h_i(\tau)+0+h_i(\tau)+d( \tau,v) 
        \\&\stackrel{h_i(\tau)\le it}\le d(u,v)+ 2it \stackrel{d(u,v)\le 2t}\le (2i+2)t \stackrel{i\le k-1}\le 2kt \le 2(\ceil{4k/3}-1)t,
    \end{align*}
    as required.
    
    For cases~\ref{E-3} and~\ref{E-2}, we will show that $\hat{d}(u, v) \le (4i + 6)t$ and that $\min(h_{i+1}(u), h_{i+1}(v)) \le (i + 1)t$. 

    \begin{figure}[t]
    \centering
    \tikzset{every picture/.style={line width=0.75pt}} 
    \input{figures/mathcha_1}

    \caption{The three cases of \Cref{L-warmup-4k/3-unweighted-correctness}.  $i=\max\{j \mid h_j(\tau)\le j\cdot t\}$.}
    \label{fig:npsp_small_t}
\end{figure}

    From \Cref{CC-hat-d-u-v-le-thzq} we have that $\hat{d}(u, v) \le \THZQuery(u, v)$, and we can apply \Cref{L-THZ05-Q-Correctness-private} with parameter $i+1$ to obtain:
    \begin{align*}
        \hat{d}(u,v) &\le (2k-2(i+1)-1)d(u,v)+2\min(h_{i+1}(u),h_{i+1}(v)) \\&\stackrel{\min(h_{i+1}(u),h_{i+1}(v))\le (i+1)t}{\le} (2k-2i-3)d(u,v)+2(i+1)t \stackrel{d(u,v)\le2t}\le (4k-2i-4)t
    \end{align*}
    By combining this bound with the fact that $\hat{d}(u,v) \le (4i+6)t$ we get that: 
    $$\hat{d}(u,v) \le \min(4k-2i-4,4i+6)t\le (\ceil{4k/3}-1)2t,$$
    as required.

    Therefore, to complete the proof of the lemma, it remains to show that in cases~\ref{E-3} and~\ref{E-2}, we have that $\hat{d}(u,v) \le (4i+6)t$ and $\min(h_{i+1}(u),h_{i+1}(v))\le (i+1)t$.
    To prove cases~\ref{E-3} and~\ref{E-2}, we first prove the following useful claim.
    \begin{cclaim}\label{SC-pi+1inptau-0}
        If $p_i(\tau)\notin B(u)$ then $p_{i+1}(u)\in B(\tau)$ and $h_{i+1}(u)\le (i+1)t$, for every $0\le i\le k-2$.
    \end{cclaim}
    \begin{proof}
        Since $p_i(\tau)\notin B(u)$, we have that 
        $$d(u,p_{i+1}(u)) \le d(u,p_i(\tau)) \stactri\le d(u,\tau)+d(\tau,p_i(\tau)) \stackrel{h_i(\tau)\le it}\le t+it=(i+1)t.$$

        Since $i$ is the maximum index such that $h_i(\tau) \le it$, we have that $h_{i+2}(\tau) > (i+2)t$. From the triangle inequality we have: $$d(\tau, p_{i+1}(u)) \le d(\tau,u)+d(u,p_{i+1}(u)) \le t+(i+1)t=(i+2)t < h_{i+2}(\tau),$$ and therefore $p_{i+1}(u)\in B_{i+1}(\tau)$, as required.
    \end{proof}
    We are now ready to prove cases \ref{E-3} and \ref{E-2}. In case \ref{E-3} $p_i(\tau)\in B(v)\setminus B(u)$ or $p_i(\tau)\in B(u)\setminus B(v)$. Wlog, we assume that $p_i(\tau)\in B(v)\setminus B(u)$. Since $p_i(\tau)\notin B(u)$, from \Cref{SC-pi+1inptau-0} we have that $p_{i+1}(u)\in B(\tau)$ and $h_{i+1}(u)\le (i+1)t$. 
    Since $p_i(\tau),p_{i+1}(u)\in B(\tau)$ we have that $H(p_{i+1}(u), p_i(\tau)) \le d(p_{i+1}(u), \tau)+h_i(\tau)$.
    Since $p_{i+1}(u) \in B(u)$ and $p_i(\tau)\in B(v)$, in the final loop of the algorithm, we have an iteration in which $w=p_{i+1}(u)$ and $z=p_i(\tau)$. Therefore:
    \begin{align*}
        \hat{d}(u,v) &\le d(u,p_{i+1}(u))+H(p_{i+1}(u), p_i(\tau))+d(p_i(\tau),v)
        \\&\le d(u,p_{i+1}(u)) + d(p_{i+1}(u),\tau) + d(\tau,p_i(\tau)) + d(p_i(\tau),v)
        \\&\stactri\le d(u,v) + 2h_{i+1}(u) + 2h_i(\tau) \stackrel{h_{i+1}(u)\le (i+1)t}\le d(u,v)+2(i+1)t+2it \stackrel{d(u,v)\le 2t}\le (4i+4)t,
    \end{align*}
    as required.

    Finally, we complete the proof by proving case \ref{E-2}. In case \ref{E-2} we have $p_i(\tau)\notin B(u)$ and $p_i(\tau)\notin B(v)$. From \Cref{SC-pi+1inptau-0} we have that $p_{i+1}(u), p_{i+1}(v)\in B(\tau)$ and $h_{i+1}(u),h_{i+1}(v)\le (i+1)t$. Since $p_{i+1}(u), p_{i+1}(v)\in B(\tau)$ we have that $H(p_{i+1}(u),p_{i+1}(v)) \le d(p_{i+1}(u),\tau)+d(\tau,p_{i+1}(v))$. 
    Since $p_{i+1}(u)\in B(u)$ and $p_{i+1}(v)\in B(v)$, in the final loop of the algorithm we have an iteration in which $w=p_{i+1}(u)$, and $z=p_{i+1}(v)$. Therefore:
    \begin{align*}
        \hat{d}(u,v) &\le d(u,p_{i+1}(u))+H(p_{i+1}(u),p_{i+1}(v))+d(p_{i+1}(v),v)
        \\&\le d(u,p_{i+1}(u)) + d(p_{i+1}(u),\tau)+d(\tau,p_{i+1}(v)) + d(p_{i+1}(v),v) \\&\stactri\le d(u,v) + 2h_{i+1}(u) + 2h_{i+1}(v) 
        \\&\le d(u,v)+4(i+1)t\stackrel{d(u,v)\le 2t}\le (4i+6)t,
    \end{align*} 
    as required.
\end{proof}

Next, we show that the running time of the algorithm is $\tilde{O}(mn^{1/k} + n^{1+2/k})$.
\begin{lemma}\label{L-npsp-unweighted-running-time}
    The algorithm takes $\tilde{O}(mn^{1/k} + n^{1+2/k})$ time.
\end{lemma}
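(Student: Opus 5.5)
The plan is to bound the cost of each of the three phases of \Cref{A-npsp-unweighted-4k/3} separately, and then use the high-probability bunch-size bound $|B(u)|=\Ot(n^{1/k})$ (Lemma 3.5 of~\cite{DBLP:journals/jacm/ThorupZ05}) together with expected $O(1)$-time hash table operations.

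\textbf{Bunch computation.} By \Cref{L-THZ-construction}, computing $B(u)$ together with the distances $d(u,w)$ for every $w\in B(u)$ and every $u\in V$ takes $\Ot(mn^{1/k})$ time. We store these distances in a hash table keyed by the pair $(u,w)$, so that every later lookup of $d(u,w)$ with $w\in B(u)$ costs $O(1)$.

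\textbf{Filling $H$.} For each $u\in V$ the loop ranges over all pairs $v,w\in B(u)$. That is $|B(u)|^2=\Ot(n^{2/k})$ pairs, and each does $O(1)$ work: two distance lookups and one hash update. Summing over the $n$ vertices gives $\Ot(n^{1+2/k})$. This also bounds the number of entries of $H$ by $\Ot(n^{1+2/k})$, so the space usage matches.

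\textbf{Answering the queries.} For each of the $n$ pairs $\pair{u,v}\in I$ we iterate over $w\in B(u)$ and $z\in B(v)$. That is $\Ot(n^{2/k})$ iterations, each doing $O(1)$ lookups of $d(u,w)$, $H(w,z)$ and $d(z,v)$. A missing entry of $H$ is treated as $\infty$, and such a lookup is also $O(1)$. The total is $\Ot(n\cdot n^{2/k})$.

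Adding the three phases gives $\Ot(mn^{1/k}+n^{1+2/k})$.

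\textbf{Main subtlety.} The bunch bound holds only with high probability, and hashing is only expected $O(1)$. The bound is therefore in expectation / with high probability, which matches how the theorem is stated for \Cref{T-nPSP-unweighted-Sparse}. If a worst-case guarantee is wanted, we can resample the hierarchy $A_i$ whenever some bunch exceeds the bound. Equivalently, we can use the expected-size version of the bunch bound, which gives $\EE[\sum_u |B(u)|^2]$ to within polylogarithmic factors of $n^{1+2/k}$.
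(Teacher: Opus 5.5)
Your proposal is correct and matches the paper's proof: the same three-phase accounting, namely $\Ot(mn^{1/k})$ for computing the bunches, $O(\sum_{u\in V}|B(u)|^2)$ for filling $H$, and $O(\sum_{\pair{u,v}\in I}|B(u)|\cdot|B(v)|)$ for the queries, with the last two each bounded by $\Ot(n^{1+2/k})$ via $|B(u)|=\Ot(n^{1/k})$. Your remarks on $O(1)$ hash operations and on high-probability versus expected bunch sizes make explicit details that the paper leaves implicit.
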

\begin{proof}
    Computing $B(u)$ for every $u \in V$ takes $\tilde{O}(mn^{1/k})$ time.
    Constructing $H$ requires $O\left(\sum_{u \in V} |B(u)|^2\right) = \tilde{O}(n^{1 + 2/k})$.
    Estimating the distances takes $O\left(\sum_{(u,v) \in I} |B(u)| \cdot |B(v)|\right) = \tilde{O}(n^{1 + 2/k})$. Therefore, the running time is $\tilde{O}(mn^{1/k} + n^{1+2/k})$, as required.
\end{proof}

\Cref{T-nPSP-unweighted-Sparse} follows from \Cref{L-warmup-4k/3-unweighted-correctness} and \Cref{L-npsp-unweighted-running-time}.

\section{A $1.622k$-approximation algorithm for \nPSP in weighted graphs}\label{S-nPSP-weighted}
In this section, we turn our attention to the \nPSP problem in weighted undirected graphs. 
We start by extending \Cref{T-nPSP-unweighted-Sparse} to weighted graphs by bounding the additive error with $(\ceil{4k/3}-1)W_{uv}$ instead of $(\ceil{4k/3}-1)$ while keeping the multiplicative error the same.\footnote{Recall that $W_{uv}=\max_{(w,z)\in P(u,v)}\wt(w,z)$.}

\Reminder{T-nPSP-weighted-Sparse-Wuv}

\begin{proof}
    To prove the theorem we analyze \Cref{A-npsp-unweighted-4k/3} when the input graph is weighted  and show that $d(u,v) \le \hat{d}(u,v) \le(\ceil{\tfrac{4k}{3}}-1) d(u,v)+(\ceil{\tfrac{4k}{3}}-1)W_{uv}$.  
    Let \\$t=\min_{\tau\in P(u,v)}(\max(d(u,\tau),d(v,\tau)))$. Notice that in unweighted graphs \\$\min_{\tau\in P(u,v)}(\max(d(u,\tau),d(v,\tau)))=d(u,v)/2+0.5\ODD$.
    This generalized definition of $t$ ensures that there exists $\tau \in P(u, v)$ such that $d(u, \tau), d(v, \tau) \le t$, and therefore we can apply \Cref{L-warmup-4k/3-unweighted-correctness} to get that $\hat{d}(u,v) \le (\ceil{\tfrac{4k}{3}}-1)2t$.

    We now bound $t$ in terms of $d(u, v)$ and $W_{uv}$. Consider an edge $(\tau_u, \tau_v)$ along the shortest path $P(u, v)$ such that $d(u, \tau_u) \le d(u, v)/2$ and $d(u, \tau_v) > d(u, v)/2$. (See \Cref{fig:tau_u_tau_v} for an illustration.) Since $d(u, \tau_v) + d(v, \tau_u) = d(u, v) + \wt(\tau_u, \tau_v)$, it follows that: 
    $$t\le \min(d(u,\tau_v),d(v,\tau_u))\le 
    (d(u, \tau_v) + d(v, \tau_u))/2
    \le (d(u,v)+\wt(\tau_u,\tau_v))/2 \le d(u,v)/2+W_{uv}/2,$$
     therefore $t\le d(u,v)/2+W_{uv}/2$ and we get that: 
    $$\hat{d}(u,v)\le (\ceil{\tfrac{4k}{3}}-1)2t \le (\ceil{\tfrac{4k}{3}}-1)d(u,v)+(\ceil{\tfrac{4k}{3}}-1)W_{uv},$$
    as required. From~\Cref{L-npsp-unweighted-running-time} it follows that the running time is $\Ot(mn^{1/k}+n^{1+2/k})$.
    \begin{figure}[t]
    \centering
    \tikzset{every picture/.style={line width=0.75pt}} 
    \tikzset{every picture/.style={line width=0.75pt}} 

\begin{tikzpicture}[x=0.75pt,y=0.75pt,yscale=-1,xscale=1]

\draw    (97.67,111) -- (272.67,110) ;
\draw  [fill={rgb, 255:red, 0; green, 0; blue, 0 }  ,fill opacity=1 ] (92.83,111) .. controls (92.83,108.33) and (95,106.17) .. (97.67,106.17) .. controls (100.34,106.17) and (102.5,108.33) .. (102.5,111) .. controls (102.5,113.67) and (100.34,115.83) .. (97.67,115.83) .. controls (95,115.83) and (92.83,113.67) .. (92.83,111) -- cycle ;
\draw  [fill={rgb, 255:red, 0; green, 0; blue, 0 }  ,fill opacity=1 ] (267.83,110) .. controls (267.83,107.33) and (270,105.17) .. (272.67,105.17) .. controls (275.34,105.17) and (277.5,107.33) .. (277.5,110) .. controls (277.5,112.67) and (275.34,114.83) .. (272.67,114.83) .. controls (270,114.83) and (267.83,112.67) .. (267.83,110) -- cycle ;
\draw  [fill={rgb, 255:red, 0; green, 0; blue, 0 }  ,fill opacity=1 ] (157.83,111) .. controls (157.83,108.33) and (160,106.17) .. (162.67,106.17) .. controls (165.34,106.17) and (167.5,108.33) .. (167.5,111) .. controls (167.5,113.67) and (165.34,115.83) .. (162.67,115.83) .. controls (160,115.83) and (157.83,113.67) .. (157.83,111) -- cycle ;
\draw  [fill={rgb, 255:red, 0; green, 0; blue, 0 }  ,fill opacity=1 ] (200.83,111) .. controls (200.83,108.33) and (203,106.17) .. (205.67,106.17) .. controls (208.34,106.17) and (210.5,108.33) .. (210.5,111) .. controls (210.5,113.67) and (208.34,115.83) .. (205.67,115.83) .. controls (203,115.83) and (200.83,113.67) .. (200.83,111) -- cycle ;
\draw   (205.67,109) .. controls (205.62,104.33) and (203.27,102.02) .. (198.6,102.07) -- (162.1,102.41) .. controls (155.43,102.47) and (152.08,100.17) .. (152.03,95.5) .. controls (152.08,100.17) and (148.77,102.53) .. (142.1,102.6)(145.1,102.57) -- (105.6,102.94) .. controls (100.93,102.99) and (98.62,105.34) .. (98.67,110.01) ;
\draw   (161.67,111) .. controls (161.71,115.67) and (164.06,117.98) .. (168.73,117.94) -- (207.23,117.59) .. controls (213.9,117.53) and (217.25,119.83) .. (217.29,124.5) .. controls (217.25,119.83) and (220.56,117.47) .. (227.23,117.41)(224.23,117.43) -- (265.73,117.06) .. controls (270.4,117.01) and (272.71,114.66) .. (272.67,109.99) ;

\draw (99.67,114) node [anchor=north west][inner sep=0.75pt]  [font=\small] [align=left] {$\displaystyle u$};
\draw (274.67,113) node [anchor=north west][inner sep=0.75pt]  [font=\small] [align=left] {$\displaystyle v$};
\draw (146.67,113) node [anchor=north west][inner sep=0.75pt]  [font=\small] [align=left] {$\displaystyle \tau _{u}$};
\draw (207.67,98) node [anchor=north west][inner sep=0.75pt]  [font=\small] [align=left] {$\displaystyle \tau _{v}$};
\draw (61,164) node [anchor=north west][inner sep=0.75pt]   [align=left] {$ $};
\draw (131,80) node [anchor=north west][inner sep=0.75pt]  [font=\footnotesize] [align=left] {$\displaystyle d( u,\tau _{v})$};
\draw (197,125) node [anchor=north west][inner sep=0.75pt]  [font=\footnotesize] [align=left] {$\displaystyle t=d( v,\tau _{u})$};

\end{tikzpicture}
    \caption{Illustration of $P(u,v)$, where $(\tau_u,\tau_v)\in P(u,v)$ and $d(u,\tau_u)\le d(u,v)/2$, $d(u,\tau_v)>d(u,v)/2$.}\label{fig:tau_u_tau_v}
\end{figure}

\end{proof}

Next, we present the main result of this paper: the first truly $(2-\alpha)k$, for $\alpha>0$, approximation \nPSP algorithm for weighted undirected graphs with non-negative real edge weights. 
We prove:

\Reminder{T-nPSP-Weighted-Sparse}

The algorithm works as follows. First, similar to \Cref{S-4k/3-unweighted}, the algorithm initializes two empty hash tables $H,\hat{d}$, and for every $u\in V$ computes $B(u)$ using \Cref{L-THZ-construction}.
For every $u\in V$, and for every $v,w\in B(u)$ the algorithm sets $H(v,w)$ to $\min(H(v,w), d(u,v)+d(u,w))$. 

The new insight comes from the following step (the heavy-edge technique): For every edge $(u,v)\in E$ and for every $i\in [k]$, the algorithm sets $H(p_i(u),w)$ to $\min(H(p_i(u),w), h_i(u)+\ell(u,v)+d(v,w))$, for every $w\in B(v)$.

Next, the algorithm approximates the distances for every pair $\pair{u,v}\in I$, the same as \Cref{S-4k/3-unweighted}:
For every $w\in B(u)$ and $z\in B(v)$, the algorithm sets $\hat{d}(u,v)$ to $\min(\hat{d}(u,v), d(u,w)+H(w,z)+d(z,v))$.
The algorithm returns $\hat{d}(u,v)$ for every $\pair{u,v}\in I$.
A pseudocode for the algorithm is presented in \Cref{A-npsp-weighted-1.622}.
\begin{algorithm2e}
    \caption{$1.622k$-approximation for \nPSP}\label{A-npsp-weighted-1.622}
    $H \gets HashTable(); \hat{d}(u,v) \gets HashTable()$ \\
    Compute $B(u)$ for every $u\in V$ [\Cref{L-THZ-construction}]\\
    \For{$u\in V$} {
        \For{$v,w\in B(u)$} {
            $H(v,w) = \min(H(v,w), d(u,v)+d(u,w))$}
    }
    \For{$(u,v)\in E$} {
        \For{$i\in [k]$ and $w\in B(v)$} {
            $H(p_i(u),w) = \min(H(p_i(u),w), h_i(u)+\ell(u,v)+d(v,w))$}
    }
    \For{$\pair{u,v}\in I$} {
        \For{$w\in B(u)$ and $z\in B(v)$} {
            $\hat{d}(u,v) = \min(\hat{d}(u,v), d(u,w)+H(w,z)+d(z,v))$
        }
    } 
    \Return $\{\hat{d}(u,v) \mid \pair{u,v}\in I \}$
\end{algorithm2e}

We note that \Cref{A-npsp-weighted-1.622} differs from \Cref{A-npsp-unweighted-4k/3} in the loop on lines 6--8, which iterates over every $(u,v) \in E$, $i \in [k]$ and $w \in B(v)$ and sets $H(p_i(u),w)$ to $\min(H(p_i(u),w), h_i(u)+\ell(u,v)+d(v,w))$. Notice that this adaptation takes $O(\sum_{(u,v)\in E} k \cdot |B(v)|) = \Ot( m \cdot n^{1/k})$, and therefore it does not increase our running time.
This modification helps us bound $\hat{d}(u,v)$ in the case where $W_{uv}$ is large (hence we call it the heavy-edge technique), and obtain an approximation that does not depend on $W_{uv}$, as shown in the following lemma. Let $\pair{u,v}\in I$ be a vertex pair from the input.
\begin{lemma}\label{L-stretch-1.622-weighted}
    $\hat{d}(u,v) \le 1.622k d(u,v)$
\end{lemma}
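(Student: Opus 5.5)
Write $\delta=d(u,v)$, $W=W_{uv}$, and let $(u',v')$ be the heaviest edge of $P(u,v)$, oriented so that $u'$ is the endpoint closer to $u$. Thus $d(u,u')+W+d(v',v)=\delta$. Set $a=d(u,u')$ and $b=d(v',v)$, so $a+b=\delta-W$. Let $\beta=W/\delta$.

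The proof will take the minimum of two estimates. Since \Cref{A-npsp-weighted-1.622} only adds entries to $H$ on top of \Cref{A-npsp-unweighted-4k/3}, the proof of \Cref{T-nPSP-weighted-Sparse-Wuv} still applies. It gives
\[
\hat d(u,v)\le (\lceil 4k/3\rceil-1)(\delta+W)\approx \tfrac{4k}{3}(1+\beta)\delta .
\]
This is good when $\beta$ is small. The heavy-edge loop gives a second bound that improves as $\beta$ grows, and balancing the two bounds should yield the constant $1.622$.

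For the heavy-edge bound I would run a Thorup--Zwick style walk on the two sides $(u,u')$ and $(v',v)$ jointly. Let $i$ be the smallest index with $p_i(u')\in B(u)$, and let $j$ be the smallest index with $p_j(v)\in B(v')$.

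Take $w=p_i(u')\in B(u)$ and $z=p_j(v)$. The entry $H(p_i(u'),\cdot)$ was relaxed through the edge $(u',v')$, but only for targets in $B(v')$, and $z\in B(v')$. Hence
\[
H(w,z)\le h_i(u')+W+d(v',z).
\]
Also $z\in B(v)$ holds trivially. The query loop therefore gives
\[
\hat d(u,v)\le d(u,w)+h_i(u')+W+d(v',z)+d(z,v)\le a+2h_i(u')+W+b+2h_j(v).
\]
Now bound $h_i(u')$ by iterating \Cref{L-Bound-delta-2d}, as in \Cref{L-Bound-p-i-d}, with the roles chosen so that the failures $p_{l}(u')\notin B(u)$ are what drive the recursion. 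This gives $h_i(u')\le 2ia$, and symmetrically $h_j(v)\le 2jb$. If $i$ or $j$ is large, I would instead fall back on \Cref{L-THZ05-Q-Correctness-private}: one has $\hat d\le \THZQuery(u,v)$ by \Cref{CC-hat-d-u-v-le-thzq}, and this is small because $h_l(u)$ is controlled through $a$ (recall $a\le \delta-W$).

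A cleaner route is to combine the heavy edge with the midpoint argument of \Cref{L-warmup-4k/3-unweighted-correctness}, applied separately on each side. Do a case split on the level $i=\max\{l: h_l(u')\le l\cdot a\}$, analogous to the $\tau$ argument, treating $u'$ as a "midpoint" between $u$ and the edge. This should yield roughly
\[
\hat d\le \min\bigl((4i+c)\,a,\ (2k-2i)\,\delta\bigr)+W+\dots,
\]
and the total is then of order $\tfrac{2k}{3}\cdot 2(\delta-W)+W$.

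Finally, optimize over $\beta$, the index choices, and the split $a$ versus $b$. The worst case should be a quadratic equation in $\beta$ whose root gives $1.622\approx(1+\sqrt5)/2\cdot\,$const; note that $1.618$ is the golden ratio, which hints at a quadratic balance. \textbf{Main obstacle:} obtaining a heavy-edge bound whose coefficient on $(\delta-W)$ is strictly less than $2k$. The naive TZ bound gives $(2k-1)(\delta-W)+W$, which by itself beats $\tfrac{4k}{3}(1+\beta)$ only once $\beta\ge 1/2$. Bridging the intermediate range $\beta<1/2$ requires the refined bunch-intersection case analysis, interleaving the two sides with the entries $H(p_i(u'),\cdot)$. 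I expect that careful three-way balancing to be the hard step.
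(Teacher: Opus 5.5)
Your overall architecture is right: use the midpoint bound of \Cref{T-nPSP-weighted-Sparse-Wuv} when the heavy edge is light, use the heavy-edge walk otherwise, then balance. Your walk inequality $\hat d(u,v)\le a+2h_i(u')+W+b+2h_j(v)$ with $h_i(u')\le 2ia$ is also correct. But the proposal stops exactly where the lemma's content lies. You never produce a heavy-regime bound below $2k\,\delta$, as you acknowledge yourself.

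\textbf{The missing three-way bound.} What is missing is the minimum of three bounds. Write $(\tau,\tau_v)$ for the heavy edge, $a=d(u,\tau)$, $b=d(\tau_v,v)$. Let $j_u$ be the first index with $p_{j_u}(\tau)\in B(u)$, and let $j_v$ be the first index with $p_{j_v}(\tau_v)\in B(v)$ \emph{or} $p_{j_v}(v)\in B(\tau_v)$. Then
\[
\hat d(u,v)\le \min\bigl\{\delta+4j_ua+2j_vb,\ (2k-2j_v-1)\delta+2j_vb,\ (2k-2j_u-1)\delta+2(2j_u-1)a\bigr\}.
\]
The last term (for $j_u\ge 1$) uses $h_{j_u}(u)\le d(u,p_{j_u-1}(\tau))\le a+h_{j_u-1}(\tau)$. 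This holds because $p_{j_u-1}(\tau)\notin B(u)$.

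\textbf{Two features of your setup block this.}
\begin{enumerate}
\item Your $j$ is one-sided ($p_j(v)\in B(v')$), so it only yields $h_j(v)\le 2jb$. The heavy-edge entries $H(p_i(u'),z)$ exist for every $z\in B(v')$, including $z=p_j(v')$. So you can use the two-sided index and \Cref{L-Bound-p-i-d} to get $jb$. Carrying the extra factor $2$ through the same balancing gives only about $1.69k$, not $1.622k$.
\item The $\THZQuery$ fallbacks only improve with the index when the relevant side is shorter than $\delta/2$; for the third bound you need $a<\delta/2$. Your ``$a\le\delta-W$'' does not give that. The paper secures $a,b\le(1/2-c)\delta$ by splitting on $t=\min_{w\in P(u,v)}\max(d(u,w),d(w,v))$ and using the edge that straddles the midpoint, not the heaviest edge. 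The heaviest edge can be made to work only if you orient it so the one-sided index sits on the shorter side, and then check (by convexity) that the symmetric placement is worst. You do neither.
\end{enumerate}

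\textbf{The rest of the sketch is unsound as written.}
\begin{enumerate}
\item The ``cleaner route'' of running the midpoint argument of \Cref{L-warmup-4k/3-unweighted-correctness} on each side has no support in the algorithm. The only $H$ entries that cross an edge start at a pivot $p_i(u')$ of its endpoint, so two midpoint arguments cannot be chained through it. The claimed total $\tfrac{2k}{3}\cdot 2(\delta-W)+W$ is unjustified; it would already give roughly $\tfrac{4k}{3}$ overall, far more than the lemma claims.
\item The ``naive'' bound $(2k-1)(\delta-W)+W$ is not realized by the table either. When the $u$-side query stops because $p_i(u)\in B(u')$, there is no entry starting at $p_i(u)$. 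This is exactly why the paper's two indices are asymmetric. Your crossover there is also miscomputed: it is at $\beta=1/5$, not $1/2$.
\item $1.622$ has nothing to do with the golden ratio. With straddling weight $2c\delta$, the worst case of the three-way minimum over $(j_u,j_v)$ is about $\frac{2(1+2c-8c^2)}{1+4c-4c^2}\,k\delta$. Equating this with $\tfrac43(1+2c)k\delta$ gives the cubic $16c^3-32c^2-6c+1=0$. Its root is $c\approx0.108$, so the constant is $\tfrac43(1+2c)\approx1.6211$.
\end{enumerate}
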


\begin{proof}
    Let $t=\min_{w\in P(u,v)}(\max(d(u,w),d(w,v)))$. This implies that $d(u,v)/2 \le t \le d(u,v)$. Let $\tau\in P(u,v)$ be the vertex, for which $\max(d(u,\tau), d(\tau,v))=t$.
    The proof is divided into two cases, according to the value of $t$. The case that $t \le (1/2+c)d(u,v)$, and the case that $t > (1/2+c)d(u,v)$, for some constant $0<c<1/2$, to be determined later. 
   
    At a high level, if $t \le (1/2 + c)\cdot d(u,v)$, then there exists a vertex on the path $P(u,v)$ that lies close to its midpoint. 
    In this situation, we can directly apply the bound established for the unweighted case. 
    On the other hand, if $t > (1/2+c)d(u,v)$, then the next edge $(\tau,\tau_v) \in P(u,v)$ must have relatively large weight, in this case we use the heavy-edge technique, which is the main technical contribution of this lemma. 
    Since $(\tau,\tau_v)$ is considered in the algorithm explicitly, the approximation for its length is $1$ (exact weight), since it is a heavy edge in the path, we get an overall less than $2k$-approximation for the entire path.
   
    In the proof, we have two bounds according to the value of $c$, and then we choose the optimal $c$ that minimizes the maximum of these bounds.
    Formally, the analysis yields two different upper bounds on the approximation, each corresponding to one of the two cases and parameterized by $c$. The final step of the proof selects the value of $c$ that minimizes the worst case of these two bounds, thereby optimizing the overall guarantee.
    
    Consider first the case that  $t \le (1/2+c)d(u,v)$. Since $d(u,\tau), d(v,\tau)\le t$ it follows from \Cref{L-warmup-4k/3-unweighted-correctness} that: $$\hat{d}(u,v)\le (\ceil{\tfrac{4k}{3}}-1)\cdot 2t \le (4k/3)(1+2c)d(u,v)$$
    
    Next, we consider the case where $t > (1/2+c)d(u,v)$.
    Assume, wlog, that $t=d(\tau,v) > d(u,\tau)$. 
    Since $d(u,\tau)+d(\tau,v)=d(u,v)$, and $t \ge (1/2+c)d(u,v)$ we get that $d(u,\tau) = d(u,v) - t \le (1/2-c)d(u,v)$.
    Let $\tau_v\in P(\tau,v)$ be the first vertex from $\tau$ to $v$, i.e., $d(v,\tau)=d(v,\tau_v)+\ell(\tau_v,\tau)$. From the minimality of $t$, we know that $d(u,\tau_v) \ge t \ge (1/2+c)d(u,v)$, and therefore $d(v,\tau_v) = d(u,v)-d(u,\tau_v) \le d(u,v)-t\le (1/2-c)d(u,v)$. We conclude that 
    \begin{equation}\label{e-u-tau-v-tau_v}
        d(u,\tau), d(v,\tau_v) \le (1/2-c)d(u,v)
    \end{equation}
    
    Let $j_u$ be the first index such that $p_{j_u}(\tau)\in B(u)$. Since $p_j(\tau)\notin B(u)$ for every $j<j_u$, we can apply \Cref{L-Bound-delta-2d} to obtain:
    \begin{equation}\label{e-bound-h-ju}
        h_{j_u}(\tau) \le {j_u}\cdot 2d(u,\tau) \stackrel{\ref{e-u-tau-v-tau_v}}\le (1-2c)j_u d(u,v)
    \end{equation}
    
    Let $j_v$ be the first index such that $p_{j_v}(\tau_v)\in B(v)$ or $p_{j_v}(v)\in B(\tau_v)$. Since $p_j(\tau_v)\notin B(v)$ and $p_j(v)\notin B(\tau_v)$ for every $j<j_v$, we can apply \Cref{L-Bound-p-i-d} to obtain:
    \begin{equation}\label{e-bound-h-jv}
        h_{j_v}(\tau_v),h_{j_v}(v) \le j_v d(\tau_v,v) \stackrel{\ref{e-u-tau-v-tau_v}}\le (1/2-c)j_v d(u,v)
    \end{equation}
    
    Next, we prove  the following three bounds on $\hat{d}(u,v)$:\begin{figure}[t]
    \centering
    \tikzset{every picture/.style={line width=0.75pt}} 
    \input{figures/matcha_big_c}

    \caption{The two cases of \Cref{C-eq:1}. $t \ge (1/2+c)\delta$. $j_u=\min(i\mid p_i(\tau)\in B(u))$,\\ $j_v=\min(i\mid p_i(v)\in B(\tau_v) \vee p_i(\tau_v)\in B(v))$.}
    \label{fig:npsp_big_t}
\end{figure}

    \begin{itemize}
        \item[] \ref{C-eq:1}. $\hat{d}(u,v) \le (1+2(1-2c)j_u+(1-2c)j_v)d(u,v)$ 
        \item[] \ref{C-eq:2}. $\hat{d}(u,v) \le (2k-j_v-1-2cj_v)d(u,v)$ 
        \item[] \ref{C-eq:3}. $\hat{d}(u,v) \le (2k-4j_uc+2c-3)d(u,v)$ 
    \end{itemize}
    
    \begin{cclaim}\label{C-eq:1}
        $\hat{d}(u,v) \le d(u,v)+2(1-2c)j_u d(u,v)+(1-2c)j_v d(u,v)$
    \end{cclaim}
    \begin{proof}
    We divide the proof into two cases. The case that $p_{j_v}(v)\in B(\tau_v)$, and the case that $p_{j_v}(\tau_v)\in B(v)$. (See \Cref{fig:npsp_big_t} for an illustration.)

    Consider first, the case that $p_{j_v}(v) \in B(\tau_v)$. Since $p_{j_v}(v) \in B(\tau_v)$, after the edge $(\tau, \tau_v)$ is considered in the loop of lines 6--8, with $i = j_u$ and $w = p_{j_v}(v)$, it is guaranteed that:
    \begin{equation}\label{eq:bound-h-pjv-v}
        H(p_{j_u}(\tau), p_{j_v}(v)) \le h_{j_u}(\tau)+\ell(\tau,\tau_v)+d(\tau_v,p_{j_v}(v))
    \end{equation}
    Since $p_{j_u}(\tau)\in B(u)$ and $p_{j_v}(v)\in B(v)$, in the final loop of the algorithm there is an iteration in which $w=p_{j_u}(\tau)$ and $z=p_{j_v}(v)$.  Therefore:
    \begin{align*}
        \hat{d}(u,v) &\le d(u,p_{j_u}(\tau))+H(p_{j_u}(\tau), p_{j_v}(v))+h_{j_v}(v) \\&\stackrel{\ref{eq:bound-h-pjv-v}}\le d(u,p_{j_u}(\tau))+(h_{j_u}(\tau)+\ell(\tau,\tau_v)+d(\tau_v,p_{j_v}(v))) + h_{j_v}(v) 
        \\& \stactri \le d(u,\tau)+h_{j_u}(\tau)+h_{j_u}(\tau)+\ell(\tau,\tau_v)+d(\tau_v,v)+h_{j_v}(v)+h_{j_v}(v)
        \\&=d(u,v)+2h_{j_u}(\tau)+2h_{j_v}(v) 
        \stackrel{\ref{e-bound-h-ju},\ref{e-bound-h-jv}}\le d(u,v)+2(1-2c)j_u d(u,v)+(1-2c)j_v d(u,v),
    \end{align*}       
    as required.
    We consider now the case that $p_{j_v}(\tau_v)\in B(v)$. Since 
    $p_{j_v}(\tau_v)\in B(\tau_v)$, after the edge $(\tau, \tau_v)$ is considered in the loop of lines 6--8, with $i = j_u$ and $w = p_{j_v}(\tau_v)$, we get:
    \begin{equation}\label{eq:bound-h-pjtauv-v}
        H(p_{j_u}(\tau), p_{j_v}(\tau_v)) \le h_{j_u}(\tau)+\ell(\tau,\tau_v)+h_{j_v}(\tau_v)
    \end{equation}
    
    Since $p_{j_u}(\tau)\in B(u)$ and $p_{j_v}(\tau_v)\in B(v)$, in the final loop of the algorithm there is an iteration in which $w=p_{j_u}(\tau)$ and $z=p_{j_v}(\tau_v)$.  Therefore:
    \begin{align*}
        \hat{d}(u,v) &\le d(u,p_{j_u}(\tau))+H(p_{j_u}(\tau),p_{j_v}(\tau_v))+d(p_{j_v}(\tau_v),v) 
        \\&\stackrel{\ref{eq:bound-h-pjtauv-v}}\le d(u,p_{j_u}(\tau)) + (h_{j_u}(\tau)+\ell(\tau,\tau_v)+h_{j_v}(\tau_v)) + d(p_{j_v}(\tau_v),v)
        \\&\stactri \le d(u,\tau)+h_{j_u}(\tau)+h_{j_u}(\tau)+\ell(\tau,\tau_v)+h_{j_v}(\tau_v) + h_{j_v}(\tau_v) + d(\tau_v,v) 
        \\&= d(u,v)+2h_{j_u}(\tau)+2h_{j_v}(\tau_v) 
        \stackrel{\ref{e-bound-h-ju},\ref{e-bound-h-jv}}\le d(u,v)+2(1-2c)j_u d(u,v)+(1-2c)j_v d(u,v),
    \end{align*}
    as required.
    \end{proof}
    \begin{cclaim}\label{C-eq:2}
        $\hat{d}(u,v) \le (2k-j_v-1-2cj_v)d(u,v)$
    \end{cclaim}
    \begin{proof}
        From \Cref{CC-hat-d-u-v-le-thzq} we have that $\hat{d}(u,v) \le \THZQuery(u,v)$. Therefore, applying \Cref{L-THZ05-Q-Correctness-private} with $j_v$, we get that:
        \begin{align*}
        \hat{d}(u,v) &\le (2k-2j_v-1)d(u,v) + 2h_{j_v}(v) \stackrel{\ref{e-bound-h-jv}}\le (2k-2j_v-1)d(u,v) + (1-2c)j_vd(u,v)
        \\&=(2k-j_v-1-2cj_v)d(u,v),
        \end{align*}
        as required.
    \end{proof}
    \begin{cclaim}\label{C-eq:3}
        $\hat{d}(u,v) \le (2k-4j_uc+2c-3)d(u,v)$
    \end{cclaim}
    \begin{proof}
        From \Cref{L-Bound-delta-2d} we have that 
        \begin{equation}\label{eqq:7}
            h_{j_u-1}(\tau) \le (j_u-1)\cdot 2d(u,\tau) 
        \end{equation}
        From the minimality of $j_u$ it follows that  $p_{j_u-1}(\tau)\notin B(u)$. Thus: 
        \begin{equation}\label{e-h-j-u-u-special}
            h_{j_u}(u) \le d(u, p_{j_u-1}(\tau)) \stactri \le d(u,\tau)+h_{j_u-1}(\tau) \stackrel{\ref{eqq:7}} \le d(u,\tau) + (2j_u-1)d(u,\tau) \stackrel{\ref{e-u-tau-v-tau_v}}\le (1/2-c)(2j_u-1)d(u,v)
        \end{equation}

        From \Cref{CC-hat-d-u-v-le-thzq} we have that $\hat{d}(u,v) \le \THZQuery(u,v)$. Therefore, from \Cref{L-THZ05-Q-Correctness-private} with parameter $j_u$, we get that 
        \begin{align*}
            \hat{d}(u,v) &\le (2k-2j_u-1)d(u,v) + 2h_{j_u}(u) \stackrel{\ref{e-h-j-u-u-special}}\le (2k-2j_u-1)d(u,v) + 2(1/2-c)(2j_u-1)d(u,v) 
            \\&= (2k-4j_uc+2c-3)d(u,v), \text{ as required.}
        \end{align*}
    \end{proof}
    
    Using the three bounds on $\hat{d}(u,v)$ we get that:
    $$\hat{d}(u,v) \le \min\big( 2k-j_v-1-2cj_v, 2k-4j_uc+2c-3, 1+2(1-2c)j_u+(1-2c)j_v\big)\cdot d(u,v)$$
    From 
     \href{https://www.wolframalpha.com/input?i=2k-y-1-2cy+%3D+2k-4xc%2B2c-3%2C++2k-4xc%2B2c-3+%3D+1%2B2%281-2c%29x%2B%281-2c%29y}{\footnotemark}
     \footnotetext{https://www.wolframalpha.com/input?i=2k-y-1-2cy+\%3D+2k-4xc\%2B2c-3\%2C++2k-4xc\%2B2c-3+\%3D+1\%2B2\%281-2c\%29x\%2B\%281-2c\%29y, where $j_u=x$ and $j_v=y$} we have that:
    $$\hat{d}(u,v) \le (2k-4c\frac{k+2ck-3}{1+4c-4c^2}+2c-3)\cdot d(u,v)$$

    Therefore, we obtained that if $t\ge (1/2+c)d(u,v)$ then $\hat{d}(u,v) \le (2k-4c\frac{k+2ck-3}{1+4c-4c^2}+2c-3)\cdot d(u,v)$, and if $t \le (1/2+c)d(u,v)$ then $\hat{d}(u,v)\le (4k/3)(1+2c)d(u,v)$. 
    We want to choose the optimal value of $c$ such that the maximum of these two bounds is minimal. Since when $c$ increases, one term increases and the other decreases, we want to have that 
    \begin{align*}
        (2k-4c\frac{k+2ck-3}{1+4c-4c^2}+2c-3) &= (4k/3)(1+2c) \\
        2  - \frac{4 c (2 c + 1)}{1 + 4 c -4 c^2} &= (4/3)(1+2c) \\
        \frac{16 c^3 - 32 c^2 - 6 c + 1}{4 c^2 - 4 c - 1} &= 0
    \end{align*}
    From \href{https://www.wolframalpha.com/input?i=%281%2B2x%29%5Cfrac%7B4k%7D%7B3%7D+%3D+2k-4x%5Cfrac%7Bk%2B2xk%7D%7B1%2B4x-4x%5E2%7D}{\footnotemark}
     \footnotetext{https://www.wolframalpha.com/input?i=\%281\%2B2x\%29\%5Cfrac\%7B4k\%7D\%7B3\%7D+\%3D+2k-4x\%5Cfrac\%7Bk\%2B2xk\%7D\%7B1\%2B4x-4x\%5E2\%7D,  where $x=c$}
    we have that $c\approx 0.107912$ is the solution for this example, and therefore we have that:
    $$\hat{d}(u,v) \le \frac{4k}{3}d(u,v)(1+2c) \le (1+2\cdot 0.108)\frac{4k}{3}d(u,v) < 1.622kd(u,v),$$
    as required. \footnote{Notice that for $c\approx 0.107912$ we have that $2c-3$ is negative, and therefore removing this term from the first line does not harm the upper bound.}
\end{proof}

Next, we bound the running time of the algorithm.
\begin{lemma}\label{L-1.622-runtime}
    The running time of the algorithm is $\Ot(mn^{1/k}+n^{1+2/k})$
\end{lemma}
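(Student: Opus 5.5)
We bound the running time of \Cref{A-npsp-weighted-1.622} by going through its three phases in order and charging each phase to one of the two terms $\Ot(mn^{1/k})$ or $\Ot(n^{1+2/k})$. Throughout we use the Thorup--Zwick bound $|B(u)|=\Ot(n^{1/k})$ for every $u\in V$ (with high probability). We also assume that each hash table operation costs $O(1)$ expected time, or $O(\log n)$ with a balanced search tree, which is absorbed in the $\Ot$.

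\textbf{Preprocessing.} By \Cref{L-THZ-construction}, computing $B(u)$ together with the distances $d(u,w)$ for all $w\in B(u)$ takes $\Ot(mn^{1/k})$ time. The same computation also yields, for every $u$ and every $i$, the pivot $p_i(u)$ and $h_i(u)$: these come from the $k$ single-source computations from the sets $A_i$, which cost $O(km)$ in total. In addition, $p_i(u)\in B(u)$ by definition of the bunch.

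\textbf{Bunch-pair loop (lines 3--5).} For each $u$ the loop performs $|B(u)|^2$ updates, so the total is $\sum_{u}|B(u)|^2=\Ot(n^{1+2/k})$. This is exactly as in \Cref{L-npsp-unweighted-running-time}.

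\textbf{Heavy-edge loop (lines 6--8).} Each undirected edge is processed in both orientations. For each orientation $(u,v)$ and each $i\in[k]$ and $w\in B(v)$, the update needs $p_i(u)$, $h_i(u)$, $\ell(u,v)$ and $d(v,w)$, all of which are already stored, so each update costs $O(1)$. The total work is $O\bigl(\sum_{(u,v)\in E}k|B(v)|\bigr)=O(m\cdot k\cdot \Ot(n^{1/k}))=\Ot(mn^{1/k})$, since $k$ is absorbed in the $\Ot$ (or $k\le\log n$ without loss of generality).

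\textbf{Query loop (lines 9--12).} For each of the $n$ pairs $\pair{u,v}\in I$, the loop performs $|B(u)||B(v)|=\Ot(n^{2/k})$ lookups, for a total of $\Ot(n^{1+2/k})$.

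Summing the four contributions gives $\Ot(mn^{1/k}+n^{1+2/k})$. The only mildly delicate point is the heavy-edge loop: it relies on $p_i(u)$ and $h_i(u)$ being available in $O(1)$ time per lookup, which holds because they are computed once during preprocessing.
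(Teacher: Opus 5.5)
Your proposal is correct and follows the same argument as the paper. You charge the bunch computation to \Cref{L-THZ-construction}, the bunch-pair loop to $\sum_u |B(u)|^2$, the heavy-edge loop to $\sum_{(u,v)\in E} k|B(v)|$, and the query loop to $\sum_{\pair{u,v}\in I}|B(u)||B(v)|$; your remarks that $p_i(u),h_i(u)$ are precomputed and that each edge is processed in both orientations are details the paper leaves implicit, and they do not change the bound.
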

\begin{proof}
    Computing $B(u)$ for every $u \in V$ takes $\tilde{O}(mn^{1/k})$ time.
    Constructing $H$ requires \\ $O\left(\sum_{u \in V} |B(u)|^2 + \sum_{(u,v)\in E}|\{p_i(u) \mid i\in [k]\}\times B(v)|\right) = \tilde{O}(n^{1 + 2/k}+mn^{1/k})$.
    Estimating the distances takes $O\left(\sum_{(u,v) \in I} |B(u)| \cdot |B(v)|\right) = \tilde{O}(n^{1 + 2/k})$. Therefore, the running time is $\tilde{O}(mn^{1/k} + n^{1+2/k})$, as required.
\end{proof}

\Cref{T-nPSP-Weighted-Sparse} follows from \Cref{L-stretch-1.622-weighted,L-1.622-runtime}.
\ifC
\bibliography{bibliography}
\appendix
\else
\fi
\section{$(k^2+2k)$-approximation for \tPSP in $O(m+n^{1+2/k}+t)$ time}\label{S-dense}
In this section, we consider the \tPSP problem in dense weighted graphs, and present a near linear time algorithm for graphs with $m=\Omega(n^{1+2/k})$ edges that has a $(k^2+2k)$-approximation. We prove:

\Reminder{T-nPSP-dense-k^2+2}

Our algorithm relies on the following fast $(2k-1)$-emulator construction, which is a simplified variant of the $(2k - 1)$-spanner construction of Baswana and Sen~\cite{DBLP:journals/rsa/BaswanaS07}.
A graph $H$ is called an $\alpha$-emulator of $G$ if  
$d_G(u, v) \le d_H(u, v) \le \alpha \cdot d_G(u, v)$, for every pair of vertices $u, v \in V$.

\begin{lemma}\label{L-weak-emulator-construction}
    Let $G$ be a weighted graph. For any integer $k \ge 1$, a $(2k - 1)$-emulator with $\Ot(n^{1 + 1/k})$ edges can be constructed in $\Ot(m)$ time.
\end{lemma}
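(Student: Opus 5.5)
We will follow the Baswana--Sen clustering scheme and keep only the clustering part, dropping the per-edge bookkeeping that makes a spanner. Because the goal is an \emph{emulator}, we may add edges that are not in $G$, provided each added edge $(x,y)$ gets weight at least $d_G(x,y)$.

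\textbf{Construction.} Use the hierarchy $V=A_0\supset A_1\supset\dots\supset A_k=\emptyset$, where each $A_{i+1}$ keeps every vertex of $A_i$ independently with probability $n^{-1/k}$. At each level $i$, every vertex is either clustered around some center in $A_i$ or has already left the process. In round $i+1$, consider a vertex $x$ that is still present, together with the clusters adjacent to it in the current (pruned) graph $G_i$. For each such cluster, let $e(x,C)$ be the lightest edge from $x$ to $C$.
\begin{itemize}
\item If some adjacent cluster has a center in $A_{i+1}$, take the lightest edge $(x,y)$ into such a cluster. Attach $x$ to the center $c(y)$ by adding the emulator edge $(x,c(y))$ with weight equal to the actual path length through $y$. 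Also add the lightest edge into every adjacent cluster that is strictly lighter.
\item Otherwise, add the lightest edge into every adjacent cluster, and remove $x$ from the process.
\end{itemize}
In both cases, delete the edges from $x$ to the clusters that were handled. The last round joins every pair of adjacent surviving level-$(k-1)$ clusters. Each round is a single pass over the remaining edges, using bucketing by cluster id, so the total time is $O(km)=\Ot(m)$. The standard Baswana--Sen expectation argument shows that each vertex adds $O(n^{1/k})$ edges per round in expectation, so the emulator has $\Ot(n^{1+1/k})$ edges.

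\textbf{Stretch.} The lower bound $d_H\ge d_G$ is immediate, since every emulator edge has weight equal to the length of a real path in $G$. For the upper bound, it suffices to show that every edge $(x,y)\in E$ satisfies $d_H(x,y)\le(2k-1)\ell(x,y)$. The claim then follows by summing this bound over the edges of a shortest path.

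We prove the per-edge bound by the standard invariant: when an edge $(x,y)$ is deleted in round $i$, we have $d_H(x,y)\le(2i-1)\ell(x,y)$. We will also reuse this refined invariant in the next theorem. Fix $x$ and the cluster $C\ni y$ at the moment of deletion. The edge $(x,y)$ is deleted only because some edge $(x,y')$ with $y'\in C$ and $\ell(x,y')\le\ell(x,y)$ was added, either explicitly or as the lightest edge realizing an attachment.

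Next we bound the cluster radius. Every cluster at level $i$ has radius at most $i\cdot\ell(x,y)$ in $H$, measured from its center. The reason is that clustering edges are added in order of weight relative to the deleted edge, as in Baswana--Sen's proof that heavier edges into the same cluster are discarded.

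Combining these, we get $d_H(x,y)\le \ell(x,y')+d_H(y',c)+d_H(c,y)\le \ell(x,y)+2(i-1)\ell(x,y)$, which is the invariant. Since the process has $k$ rounds, this gives stretch $2k-1$.

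\textbf{Main obstacle.} The hardest step is the cluster-radius bound in the weighted setting. It needs the careful ``lighter edges only'' pruning rule, so that the radius is bounded by multiples of the \emph{deleted} edge's weight and not by arbitrary edge weights. We will replicate Baswana--Sen's argument directly: weights along a cluster path to $y$ are each at most $\ell(x,y)$, because a heavier edge would have been deleted earlier in favor of a lighter one. This is exactly where the emulator shortcut edges $(x,c(y))$ simplify things, since they make the radius argument a one-step induction.
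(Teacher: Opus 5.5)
Your outline is sound, but it uses a different construction from the paper's. You run Baswana--Sen clustering proper: clusters are grown round by round, a vertex handles the clusters that are strictly lighter than its nearest sampled cluster, and handled edges are deleted. The only change is that you replace tree paths by shortcut edges to the centers. Stretch then rests on the weighted invariant that every edge on a vertex's attachment path is no heavier than any edge still incident to it. Size is the standard per-vertex geometric count. (The unmodified Baswana--Sen spanner already satisfies the lemma, so the shortcuts are a convenience, not a necessity.)

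The paper grows no clusters. It computes the global pivots $p_i(u)$ of the hierarchy with $k$ Dijkstra runs and adds the edges $(u,p_i(u))$. Then, for each $(u,v)\in E$, it adds one edge $(p_i(u),v)$, where $i=i_{(u,v)}$ is the first index with $d(u,p_{i+1}(u))>(i+1)\ell(u,v)$. Its stretch proof is a single triangle inequality, $d_H(u,v)\le 2d(u,p_i(u))+\ell(u,v)\le(2i_{(u,v)}+1)\ell(u,v)$, with no invariant to maintain. The same index also certifies $d(u,A_i)\le i\,\ell(u,v)$ for the global sets $A_i$ that are later passed to $\ADO_P(H,k-i,A_i)$ in the $(k^2+2k)$ analysis. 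Your construction gives the analogous coupling with an index shift: an edge $(w,z)$ still present in round $i$ has $w$ within $(i-1)\ell(w,z)$ of its center in $A_{i-1}$.

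What your route buys is the size bound. Because it is Baswana--Sen proper, it inherits their $O(kn^{1+1/k})$ expected-size argument. In the paper's construction, by contrast, a vertex $v$ can receive edges from the pivots of many different neighbours. So the per-vertex Baswana--Sen count does not transfer verbatim, and the edge count is the step the citation has to carry.

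One wording fix is needed. The final round must be vertex-to-cluster: every surviving vertex adds its lightest edge into each adjacent level-$(k-1)$ cluster. This is just your second bullet applied with $A_k=\emptyset$, and it is what your stretch computation assumes. Joining pairs of adjacent clusters would only give $4k-3$ by the same argument.
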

\begin{proof}
    First, the algorithm computes a vertex hierarchy $V=A_0,A_1,\dots,A_k=\emptyset$, such that $A_i$ contains every vertex from $A_{i-1}$ with probability $n^{-1/k}$, for every $0 < i < k$. Then, the algorithm computes $d(A_i,u)$, adds $(u,p_i(u))$ with weight $d(u,p_i(u))$ to $H$, for every $0<i<k$, and $u\in V$ using $k$ calls to $\Dijkstra$'s algorithm.

    Then, for every edge $(u,v)\in E$, let $i_{(u,v)}$ be the first index such that $d(u,p_{i_{(u,v)}+1}(u)) > (i_{(u,v)}+1)\cdot \ell(u,v)$, and add the edge $(p_{i_{(u,v)}}(u),v)$ with weight $d(u,p_{i_{(u,v)}}(u))+\ell(u,v)$.

    It is straightforward to see that the running time is $\Ot(m)$, and the proof that $|H|=O(n^{1+1/k})$ follows from \cite{DBLP:journals/rsa/BaswanaS07}.

    Next, we move to bound the approximation of the emulator.
    \begin{cclaim}\label{Cd-specific-corr}
        Let $(u,v)\in E$, it holds that $d_H(u,v) \le (2i_{(u,v)}+1)\ell(u,v)$.
    \end{cclaim}
    \begin{proof}
        By the definition of $i_{(u,v)}$ we have that $d_G(u,p_{i_{(u,v)}}(u))\le i\ell(u,v)$.
        Since the edge $(p_i(u),v)$ with weight $d_G(u,p_{i_{(u,v)}}(u))+\ell(u,v)$ is added to $H$, together with the edge $(u,p_i(u))$ of weight $d(u,p_i(u))$ we get that:
        $$d_H(u,v) \stactri\le d_G(u,p_{i_{(u,v)}}(u)) + d_G(u,p_{i_{(u,v)}}(u))+\ell(u,v) \le (2i+1)\ell(u,v),$$
        as required.
    \end{proof}
    
    Since $i_{(u,v)}\le k-1$ for every edge $(u,v)\in E$, we get from \Cref{Cd-specific-corr} that $d_H(u,v) \le (2k-1)\ell(u,v)$, for every edge $(u,v)\in E$, as required.
\end{proof}
We let $\Emulator(G,k)$ be the emulator of \Cref{L-weak-emulator-construction} on the graph $G$ with parameter $k$.
In addition, our algorithm uses the parameterized distance oracle of Kadria and Roditty~\cite{kadria2025fasteralgorithms2k1stretchdistance}, which we denote with $\ADO_P(G,k, S)$.
\begin{lemma}[\cite{kadria2025fasteralgorithms2k1stretchdistance}] \label{L-ADO-G_S}
    Let $k\ge 1$ and let $S\subseteq V$. There is an $O(n|S|^{\frac{1}{k}})$-space distance oracle that given two vertices $u,v\in V$, returns in $O(k)$ time an estimation $\ADO_P(G,k,S)\Query(u,v)$ such that $$d(u,v)\le \ADO_P(G,k,S)\Query(u,v) \le 2\min(d(u,S),d(v,S))+(2k-1)d(u,v)$$ The distance oracle is constructed in $O(m|S|^{\frac{1}{k}})$ time.
\end{lemma}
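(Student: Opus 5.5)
We plan to prove the lemma by running the Thorup--Zwick construction on a vertex hierarchy that starts at $S$ instead of at $V$. Sample $S=A'_0\supseteq A'_1\supseteq\cdots\supseteq A'_k=\emptyset$, where each vertex of $A'_i$ is kept in $A'_{i+1}$ independently with probability $|S|^{-1/k}$. We force $A'_k=\emptyset$ as in the Thorup--Zwick construction, and $A'_{k-1}$ is nonempty with high probability. For every $u\in V$ and $0\le i\le k-1$ we store the following:
\begin{itemize}
\item the pivot $p'_i(u)=p(u,A'_i)$ and the distance $d(u,A'_i)$;
\item the bunch $B'_i(u)=B(u,A'_{i+1},A'_i)$ together with the exact distances to its members, kept in a hash table.
\end{itemize}
Note that $u$ itself need not lie in $S$. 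In particular $p'_0(u)$ is the vertex of $S$ nearest to $u$, and $d(u,p'_0(u))=d(u,S)$.

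\textbf{Space.} We bound $|B'_i(u)|$ by the standard argument of Lemma~3.5 of~\cite{DBLP:journals/jacm/ThorupZ05}. Sort $A'_i$ by distance from $u$. Only vertices preceding the first vertex that was sampled into $A'_{i+1}$ enter $B'_i(u)$. Hence $|B'_i(u)|$ is dominated by a geometric random variable with mean $|S|^{1/k}$. This holds for every $u\in V$, including $u\notin S$, because the argument only uses the ordering of $A'_i$ around $u$. Summing over $n$ vertices and $k$ levels gives expected space $O(kn|S|^{1/k})$, which is $O(n|S|^{1/k})$ for constant $k$. If worst-case space is needed, we resample in the usual way.

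\textbf{Construction time.} We build the clusters $C(w)=\{x : d(x,w)<d(x,A'_{i+1})\}$ for $w\in A'_i$ using the truncated Dijkstra of Thorup--Zwick. Each vertex $x$ lies in $|\{w : x\in C(w)\}|=|B'_i(x)|$ clusters. The expected work is therefore $\sum_x \deg(x)\,\EE|B'_i(x)|=O(m|S|^{1/k})$ per level. The distances $d(\cdot,A'_i)$ are obtained by one Dijkstra per level from a super-source. We ignore the logarithmic heap overhead, or remove it with a linear-time SSSP for integer weights, matching how the lemma is stated.

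\textbf{Query.} Assume without loss of generality that $\Delta=d(u,S)\le d(v,S)$; the query compares the two stored values $d(u,A'_0)$ and $d(v,A'_0)$ to decide the order. Run the Thorup--Zwick query starting at level $0$:
\begin{enumerate}
\item Set $w=p'_0(u)$ and $i=0$.
\item While $w\notin B'(v)$: set $i\gets i+1$, swap the roles of $u$ and $v$, and set $w=p'_i(u)$.
\item Return $d(u,w)+d(w,v)$.
\end{enumerate}

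\textbf{Stretch.} The invariant we prove by induction is $d(u,w)\le \Delta+i\,d(u,v)$ after $i$ iterations. The base case holds because $d(u,p'_0(u))=\Delta$. For the step, suppose $w=p'_i(u)\notin B'_i(v)$. Then $d(v,p'_{i+1}(v))\le d(v,w)\le d(u,w)+d(u,v)$, so the bound grows by exactly $d(u,v)$. Since $A'_{k-1}\subseteq B'_{k-1}(x)$ for every $x$, the loop stops with $i\le k-1$. The returned value is therefore at most
\[
2d(u,w)+d(u,v)\le 2\Delta+(2k-1)d(u,v).
\]
The lower bound $d(u,v)\le$ estimate is immediate, because the estimate is the length of an actual walk from $u$ to $v$. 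The query performs $O(k)$ hash lookups, so it runs in $O(k)$ time.

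\textbf{Main obstacle.} The delicate point is level $0$. The classical analysis has $p_0(u)=u$, whereas here the query must start at a pivot outside $u$ at distance $\Delta$. The invariant has to carry the additive $\Delta$ through every swap without doubling it at each step, and the argument above shows that it is added only once. For the bunch-size bound, the only thing to check is that the geometric argument remains valid for bunches of vertices outside $S$, which it does.
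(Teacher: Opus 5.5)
Your proof is correct. The paper gives no proof of this lemma; it imports it from Kadria and Roditty. Your argument is the natural construction behind that result: the Thorup--Zwick hierarchy rooted at $A'_0=S$ with bunches stored for every vertex of $V$, and a query started from $p'_0$ of the endpoint closer to $S$. The invariant $d(u,w)\le \Delta+i\cdot d(u,v)$ then charges $d(u,S)$ only once.

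Two harmless nits:
\begin{itemize}
\item $A'_{k-1}\neq\emptyset$ does not hold with high probability when $|S|$ is polylogarithmic. This does not matter: if $A'_j$ is the last nonempty level, then $B'_j(x)=A'_j$ for every $x$, and the loop stops at some $i\le j\le k-1$.
\item The logarithmic heap overhead cannot be removed by integer-weight SSSP, since the weights here are real. This is immaterial, because the paper uses the lemma only inside $\Ot(\cdot)$ bounds.
\end{itemize}
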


The \tPSP algorithm works as follows. 
Let $H$ be a $(2k-1)$-emulator of $G$ constructed with \Cref{L-weak-emulator-construction}.
For every $1\le i \le k-1$, let $A_i$ be the set used in the construction of $H$. For every $i$, we construct $\ADO_P(H,k-i,A_i)$. 
Then for every $\pair{u,v}\in I$ the algorithm returns $\min_i(\ADO_P(H,k'_i,A_i)\Query(u,v))$ as the distance estimation.
A pseudo-code for our algorithm is presented in~\Cref{A-Fast-APSP-2}.

\begin{algorithm2e}
\caption{\tPSP($G,k,I$)}\label{A-Fast-APSP-2}
$H \gets \Emulator(G, k)$ [\Cref{L-weak-emulator-construction}] \\
\For{$i\gets 1$ to $k-1$} {
    Construct $\ADO_P(H, k-i, A_i)$ [\Cref{L-ADO-G_S}]
}
\Return $\{\min_{i\in[k]}(\ADO_P(H,k-i,A_i)\Query(u,v)) \mid \pair{u,v}\in I\}$
\end{algorithm2e}

In the next lemma, we bound the value $\hat{d}(u,v)$ computed by \tPSP($G,k,I$), for any pair $\pair{u,v}\in I$.
\begin{lemma}\label{L-Stretch-APSP-Dense-k^2}
    $\hat{d}(u,v) \le (k^2+2k)\cdot d(u,v)$
\end{lemma}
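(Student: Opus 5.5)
The approach is to decompose according to the emulator's internal indices. Fix $\pair{u,v}\in I$ and a shortest path $P_G(u,v)$. Each edge $e=(x,y)\in P_G(u,v)$ has an index $i_e$, defined in the proof of \Cref{L-weak-emulator-construction}. Let $i=\max_{e\in P_G(u,v)} i_e$, and let $(x,y)$ be an edge attaining it, oriented so that $x$ is the endpoint with $d(x,p_i(x))\le i\,\ell(x,y)$; this inequality follows from the definition of $i_{(x,y)}$. Write $a=d_G(u,x)$ and $b=d_G(y,v)$, so that $a+\ell(x,y)+b=d(u,v)$. The idea is to query the oracle $\ADO_P(H,k-i,A_i)$ at exactly this $i$: a path whose edges all have small indices is stretched little in $H$, and the single edge with the largest index certifies that $A_i$ is close to the path.

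First I bound $d_H(u,v)$. By \Cref{Cd-specific-corr}, every edge $e$ of the path satisfies $d_H(e)\le(2i_e+1)\ell(e)\le(2i+1)\ell(e)$. Summing over the edges gives $d_H(u,v)\le(2i+1)d(u,v)$. The same summation applied to the subpaths gives $d_H(u,x)\le(2i+1)a$ and $d_H(v,y)\le(2i+1)b$.

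Next I bound $\min(d_H(u,A_i),d_H(v,A_i))$. The emulator contains the edge $(x,p_i(x))$ with weight $d(x,p_i(x))\le i\,\ell(x,y)$. Hence
\[
d_H(u,A_i)\le(2i+1)a+i\,\ell(x,y).
\]
Going through $y$ and then the edge $(x,y)$, whose $H$-length is at most $(2i+1)\ell(x,y)$ by \Cref{Cd-specific-corr}, gives a comparable bound for $v$. Since the query bound of \Cref{L-ADO-G_S} takes a minimum over the two endpoints, I will use the side with the shorter subpath. Because $\min(a,b)\le(d(u,v)-\ell(x,y))/2$, I expect to obtain $2\min(d_H(u,A_i),d_H(v,A_i))\le(2i+1)d(u,v)$, up to constant slack. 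I expect this to be the main obstacle: the contribution of the edge $(x,y)$ must be charged carefully so that the additive term stays at $(2i+1)d$ and does not grow to $2(2i+1)d$. If the charging through $y$ turns out to be too lossy for $v$, I will first try always routing through $x$ and absorb the loss into the final optimisation.

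With these bounds, \Cref{L-ADO-G_S} gives
\[
\hat d(u,v)\le (2i+1)d(u,v)+(2(k-i)-1)(2i+1)d(u,v)=2(2i+1)(k-i)\,d(u,v).
\]
Setting $s=2i+1$, this equals $s(2k+1-s)\,d(u,v)\le\frac{(2k+1)^2}{4}\,d(u,v)=(k^2+k+\tfrac14)\,d(u,v)$. This is at most $(k^2+2k)\,d(u,v)$, and the spare slack of about $k$ absorbs any constant losses in the previous step. The degenerate case $i=0$ has no oracle in the algorithm. There every edge of the path is preserved exactly, so $d_H(u,v)=d(u,v)$, and I will instead use the $i=1$ oracle, bounding $d_H(u,A_1)$ directly. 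Finally, the lower bound $d(u,v)\le\hat d(u,v)$ holds because $H$ is an emulator, so $d_H\ge d_G$, and each oracle estimate is at least the true distance in $H$.
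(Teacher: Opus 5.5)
Your skeleton matches the paper's: the same $i=\max_e i_e$, the same bound $d_H(u,v)\le(2i+1)d(u,v)$, the same oracle $\ADO_P(H,k-i,A_i)$ and the same quadratic optimisation. However, the step you flag as the main obstacle is left as an expectation, and your set-up cannot deliver it. Routing inside $H$ along the path pays the emulator stretch on the subpath. On the $v$ side it also pays that stretch on $(x,y)$ itself. Take $y=v$ and $a=\ell(x,y)=d(u,v)/2$: both sides of your bound on $2\min(d_H(u,A_i),d_H(v,A_i))$ equal $(3i+1)d(u,v)$. So the loss over $(2i+1)d(u,v)$ is $\Theta(i)\,d(u,v)$, not a constant.

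Your fallback of always routing from $u$ through $x$ is worse. It gives only $2(2i+1)d(u,v)$, hence a total of $(2i+1)(2k-2i+1)d(u,v)$. At $i=k/2$ (for even $k$) this is $(k+1)^2d(u,v)>(k^2+2k)d(u,v)$.

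The missing observation is that $H$ contains the edge $(u,p_i(u))$ of weight $d_G(u,p_i(u))$ for \emph{every} vertex $u$, not only for $x$. Hence $d_H(u,A_i)\le d_G(u,p_i(x))\le a+i\,\ell(x,y)$, with no stretch paid on the subpath. Since $\ell(x,y)\le d(x,v)=d(u,v)-a$, this is at most $i\,d(u,v)$ for $i\ge1$. Then $\hat d(u,v)\le 2i\,d(u,v)+2(k-i)(2i+1)d(u,v)=(4ik+2k-4i^2)d(u,v)\le(k^2+2k)d(u,v)$, which is the paper's proof.

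Your $H$-routing can also be rescued, but only by doing the computation rather than appealing to slack. Averaging the two sides gives $2\min(d_H(u,A_i),d_H(v,A_i))\le(2i+1)d(u,v)+2i\,\ell(x,y)\le(4i+1)d(u,v)$. Then $(4i+1)+(2k-2i-1)(2i+1)=4ik+2k-4i^2$, which uses up all of your ``spare slack of about $k$'' at $i=k/2$.

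Your handling of $i=0$ also fails. When $i=0$ you only know $d(x,p_1(x))>\ell(x,y)$ for the path edges, which is a \emph{lower} bound on pivot distances. Nothing bounds $d_H(u,A_1)$ or $d_H(v,A_1)$ in terms of $d(u,v)$; take $u,v$ adjacent with $A_1$ far from both. So the guarantee of the $A_1$-oracle is useless in this case. What is needed is the oracle with $A_0=V$, i.e.\ $\ADO_P(H,k,A_0)$. The paper's argument uses it implicitly when $i=0$, since the returned minimum ranges over $i\in[k]$. There $d(u,A_0)=0$ and $d_H(u,v)=d(u,v)$, so the query is at most $(2k-1)d(u,v)$, and this oracle costs only $\Ot(n^{1+2/k})$ to build.
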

\begin{proof}
For every edge $(x,y)\in E$, we denote with $i_{(x,y)}$   the first index such that $d(x,p_{i_{(x,y)}+1}(x)) > (i_{(x,y)}+1)\cdot \ell(x,y)$.
Let $i$ be $\max_{(x,y)\in P(u,v)}(i_{(x,y)})$. Let $(w,z)\in P(u,v)$ be an edge such that $i_{(w,z)}=i$. 
Therefore, we have that $d(w, p_i(w)) \le i\cdot \ell(w,z)$. 

Wlog, assume that the edge $(w,z)$ is the first edge in $P(w,v)$, therefore we have that $\ell(w,z)\le d(w,v)$.
Since $d(w, p_i(w)) \le i\cdot \ell(w,z)$, we get that 
\begin{align}\label{eqd:bound-pi}
\begin{split}
    d(u,p_i(u)) &\le d(u,p_i(w)) \stactri\le d(u,w)+d(w, p_i(w)) \le d(u,w)+i\ell(w,z) 
    \\&\stackrel{\ell(w,z)\le d(w,v)}\le d(u,w)+d(w,v)+(i-1)d(w,v)\le i\cdot d(u,v),
\end{split}
\end{align}
where the last inequality follows from the fact that $d(u,w)+d(w,v)=d(u,v)$.

Since $i \ge i_{(x,y)}$ for every $(x,y)\in P(u,v)$, from \Cref{Cd-specific-corr} we have that $d_H(x,y) \le (2i+1)\ell(x,y)$, for every $(x,y)\in P(u,v)$. Therefore, for the entire path $P(u,v)$ we have that 
\begin{equation}\label{eqd:bound-d-u-v}
    d_H(u,v) \le (2i+1)d(u,v)
\end{equation}

In the algorithm, we have $\hat{d}(u,v) \le \ADO_P(H, k-i, A_i)\Query(u,v)$.
From \Cref{L-ADO-G_S}, we have that $\ADO_P(H, k-i, A_i)\Query(u,v) \le 2\min(d(u,A_i),d(v,A_i)) + 2(k-i)d_H(u,v)$. Thus:
\begin{align*}
    \hat{d}(u,v) &\le 2\min(d(u,p_i(u)),d(v,p_i(v))) + 2(k-i)d_H(u,v) \\&\stackrel{\ref{eqd:bound-pi},\ref{eqd:bound-d-u-v}}\le 2\cdot id(u,v) + 2(k-i)\cdot (2i+1)d(u,v) 
    \\&= (2(k-i)(2i+1)+2i)d(u,v) = (4ik+2k-4i^2)d(u,v) \le (k^2+2k)d(u,v),
\end{align*}
where the last inequality follows from the following discussion.
Let $f(i)=4ik+2k-4i^2$. We have that $f'(i)=4k-8i$, and since $f''(i)=-8$ is negative, we have that this function has a maximum when $f'(i)=0$, i.e., the maximum of the function is when $i=k/2$, therefore $f(i) \le f(k/2)=k^2+2k$, as required. 
\end{proof}

Next, in the following lemma, we bound the running time of \Cref{A-Fast-APSP-2}.
\begin{lemma}\label{L-runtime-APSP-Dense-k^2}
    The running time of \tPSP is $\Ot(m+n^{1+2/k})$
\end{lemma}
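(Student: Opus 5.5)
We bound the three phases of \Cref{A-Fast-APSP-2} separately: building the emulator, building the $k-1$ parameterized oracles on it, and answering the queries.

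\textbf{Emulator.} By \Cref{L-weak-emulator-construction}, $\Emulator(G,k)$ runs in $\Ot(m)$ time. It returns $H$ with $|E(H)|=\Ot(n^{1+1/k})$ edges, together with the hierarchy $A_1,\dots,A_{k-1}$.

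\textbf{Oracles.} For each $1\le i\le k-1$ we build $\ADO_P(H,k-i,A_i)$ on $H$. By \Cref{L-ADO-G_S} this costs $O(|E(H)|\cdot|A_i|^{1/(k-i)})$ time, so the main step is bounding $|A_i|^{1/(k-i)}$.
\begin{itemize}
\item Since each vertex survives to $A_i$ with probability $n^{-i/k}$, with high probability $|A_i|=\Ot(n^{1-i/k})=\Ot(n^{(k-i)/k})$.
\item Hence $|A_i|^{1/(k-i)}=\Ot(n^{1/k})$, independently of $i$. The polylogarithmic factor stays polylogarithmic after taking the $1/(k-i)$ root, because $k-i\ge 1$.
\item Each construction therefore takes $\Ot(n^{1+1/k}\cdot n^{1/k})=\Ot(n^{1+2/k})$ time, and summing over the $k-1$ values of $i$ loses only a factor $k$.
\end{itemize}
If one wants expected rather than w.h.p.\ bounds, use $\EE[|A_i|^{1/(k-i)}]\le (\EE|A_i|)^{1/(k-i)}$, which holds by Jensen's inequality since $x\mapsto x^{1/(k-i)}$ is concave. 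The bound $|E(H)|=\Ot(n^{1+1/k})$ is either w.h.p.\ or in expectation; it is independent enough of the size of $A_i$ that either form suffices, or we can simply condition on both high-probability events.

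\textbf{Queries.} Each query to an oracle takes $O(k)$ time by \Cref{L-ADO-G_S}. For each pair $\pair{u,v}\in I$ we query $k-1$ oracles, so a pair costs $O(k^2)$ time. For $|I|=n$ pairs (or $t$ pairs in general) this totals $O(k^2 n)$, which is within $\Ot(n^{1+2/k})$ (or adds $O(k^2t)$ in the \tPSP setting).

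Summing the three phases gives $\Ot(m+n^{1+2/k})$. The only nontrivial point is the exponent calculation showing that $|A_i|^{1/(k-i)}$ is $\Ot(n^{1/k})$ for every $i$.
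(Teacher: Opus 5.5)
Your proposal is correct and follows the paper's proof. The paper likewise charges $\Ot(m)$ for the emulator and $\Ot(n^{1+1/k}\cdot |A_i|^{1/(k-i)})=\Ot(n^{1+2/k})$ per oracle using $|A_i|\approx n^{(k-i)/k}$, then sums over $k-1$ values of $i$. Your accounting of the $O(k^2)$-per-pair query cost and the high-probability/expectation bookkeeping fills in details the paper leaves implicit.
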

\begin{proof}
    From \Cref{L-weak-emulator-construction} constructing $\Emulator(G, k)$ takes $\Ot(m)$ time.
    From \Cref{L-ADO-G_S} constructing $\ADO_P(H, k'_i, A_i)$ takes $$\Ot(|H||A_i|^{k'_i})=\Ot(n^{1+1/k}\cdot (n^{\frac{k-i}{k}})^{\frac{1}{k-i}})=\Ot(n^{1+1/k+\frac{k-i}{k\cdot (k-i)}})=\Ot(n^{1+2/k})$$

    Overall, we get that the running time is $\Ot(m+ k\cdot n^{1+2/k})=\Ot(m+n^{1+2/k})$, as required.
\end{proof}

\Cref{T-nPSP-dense-k^2+2} follows from \Cref{L-Stretch-APSP-Dense-k^2,L-runtime-APSP-Dense-k^2}.

\ifC
\else
\bibliography{bibliography}
\appendix
\fi
\section{\texorpdfstring{$\ceil{\tfrac{4k}{3}-\tfrac{5}{3}} d(u,v)$ approximation in $\Ot(mn^{1/k} + n^{1+2/k})$ time in unweighted graphs}{(4k/3-5/3) d(u,v) approximation in O(mn^(1/k) + n^(1+2/k)) time in unweighted graphs}}\label{S-k-stretch-small-k}

In this section, we present a tighter analysis of the algorithm from \Cref{S-4k/3-unweighted} (\Cref{A-npsp-unweighted-4k/3}), which yields an improved approximation bound.
We show:

\Reminder{T-k-stretch-k=4-5}

In the following lemma, we prove that the algorithm from \Cref{S-4k/3-unweighted} has a tighter guarantee on almost all the cases. To solve the remaining case (\Cref{sc-d=3i_v=0}) we set $\hat{d}(u,v)$ to $\min(\hat{d}(u,v), \min(d(u,w)+d(w,v) \mid w\in (\cup_{u_1\in B_0(u)\cup C(u,A_1)} B(u_1) \cup C(u_1,A_1)) \cap (B_0(v) \cup C_0(v))))$ in $\Ot(n^{1+2/k})$ time.
\begin{lemma}
    $\hat{d}(u,v) \le \ceil{\tfrac{4k}{3}-\tfrac{5}{3}} d(u,v)$.
\end{lemma}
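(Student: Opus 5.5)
Write $\delta=d(u,v)$. I would redo the analysis of \Cref{L-warmup-4k/3-unweighted-correctness} exactly, instead of passing through $2t$, so that the ODD term disappears. For odd $\delta$ there is no exact midpoint. I would take $\tau\in P(u,v)$ with $d(u,\tau)=\lceil\delta/2\rceil$ and $d(\tau,v)=\lfloor\delta/2\rfloor$, and keep the two distances $a=d(u,\tau)$ and $b=d(\tau,v)$ separate. As before, let $i$ be the maximal index with $h_i(\tau)\le i\cdot t$, with the threshold chosen per side. Redo the three cases with $a,b$ in place of $t$:
\begin{itemize}
\item case 1 gives $\hat d\le \delta+2h_i(\tau)$;
\item case 2 gives $\delta+2h_{i+1}(u)+2h_i(\tau)$ with $h_{i+1}(u)\le a+h_i(\tau)$;
\item case 3 gives $\delta+2h_{i+1}(u)+2h_{i+1}(v)$;
\item the TZ bound (\Cref{L-THZ05-Q-Correctness-private} via \Cref{CC-hat-d-u-v-le-thzq}) gives $(2k-2i-3)\delta+2\min(h_{i+1}(u),h_{i+1}(v))$, now using the smaller side $b$ where possible.
\end{itemize}
Since distances are integers in the unweighted case, every bound $h_j(\cdot)\le \cdot$ can be rounded. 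Moreover, the maximality of $i$ gives the strict inequality $h_{i+1}(\tau)>(i+1)t$, which becomes $\ge (i+1)t+1$. I expect these integer gains to shave the constant from $\lceil 4k/3\rceil-1$ to $\lceil 4k/3-5/3\rceil$.

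The key steps, in order, are as follows.

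First, obtain refined versions of \Cref{SC-pi+1inptau-0}: if $p_i(\tau)\notin B(u)$ then $h_{i+1}(u)\le a+i t$, and symmetrically for $v$ with $b$. Also bound $h_{i+1}(v)$ via \Cref{L-Bound-delta-2d} when appropriate.

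Second, in case 3, note that the route through $\tau$ actually costs $\delta+2h_{i+1}(u)+2h_{i+1}(v)$. Combine this with the TZ bound at parameter $i+1$ using the $\min$ of the two $h_{i+1}$ values. This gives roughly $\min\big((4i+4)\tfrac{\delta}{2}+\delta+O(1),\,(2k-2i-3)\delta+(i+1)\delta\big)$. Balancing over integer $i$ should give $\lceil\tfrac{4k}{3}-\tfrac53\rceil\delta$ for $\delta$ large relative to $k$.

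Third, cases 1 and 2 are dominated, so I would simply check them.

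Fourth, handle small $\delta$, particularly $\delta\le 3$, separately, since the $+O(1)$ terms there are not absorbed.

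The main obstacle is the tight residual configuration. The hint in the text points to the case $\delta=3$ and $i_v=0$, meaning $\tau$'s level-0 neighbourhood on the $v$ side gives nothing. In that case the general bound exceeds the target by one unit, so the extra step added to the algorithm is needed. It minimizes $d(u,w)+d(w,v)$ over $w\in\bigl(\bigcup_{u_1\in B_0(u)\cup C(u,A_1)}B(u_1)\cup C(u_1,A_1)\bigr)\cap(B_0(v)\cup C_0(v))$. I would argue that in this configuration the path vertex at distance $2$ from $u$ lies in this set, which yields $\hat d(u,v)=\delta$.

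Two things remain to check. The first is that this set can be enumerated in $\Ot(n^{1+2/k})$ total time; this follows because clusters of $A_0$ vertices relative to $A_1$ have expected size $n^{1/k}$, and summing over $u_1\in B_0(u)$ gives $n^{2/k}$ per pair. The second, and the part I expect to take the most care, is that every other parity/small-$\delta$ subcase closes with the rounding described above.
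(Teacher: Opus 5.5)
Your plan breaks at the balancing step.

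With the single index $i$ (maximal with $h_i(\tau)\le i\cdot t$), the two competing bounds are these. Your case-3 route $u\to p_{i+1}(u)\to\tau\to p_{i+1}(v)\to v$ costs $(2i+3)\delta+O(1)$. The TZ bound at level $i+1$ costs $(2k-i-2)\delta$. But $\max_{i\in\mathbb{Z}}\min(2i+3,\,2k-i-2)=\ceil{4k/3}-1$, which is exactly the warm-up constant, not $\ceil{\tfrac{4k}{3}-\tfrac{5}{3}}$. The two differ by one whenever $3\nmid k$. For $k=4$, $i=1$, both of your bounds equal $5\delta$ while the target is $4\delta$; for $k=5$, $i=2$, you get $6\delta$ against $5\delta$.

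The integrality gains you rely on (rounding, $h_{i+1}(\tau)\ge (i+1)t+1$, using $b=\lfloor\delta/2\rfloor$ on one side) only move additive constants. For large $\delta$ they cannot remove a whole $\delta$. For even $\delta$, your refinement of \Cref{SC-pi+1inptau-0} is literally the original claim. The residual case ``$i_v=0$'' that you take from the hint also has no meaning in a single-index framework.

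The missing idea is to replace the index measured at $\tau$ by the two Thorup--Zwick meeting indices of the half-pairs. Let $i_u$ be the first index with $p_{i_u}(u)\in B(\tau)$ or $p_{i_u}(\tau)\in B(u)$, and define $i_v$ likewise for $(v,\tau)$. By \Cref{L-Bound-p-i-d}, $h_{i_u}(u),h_{i_u}(\tau)\le i_u t$ and $h_{i_v}(v),h_{i_v}(\tau)\le i_v(t-1\ODD)$. Routing through $B(\tau)$ in $H$ via these level-$i_u$ and level-$i_v$ pivots gives $\hat{d}(u,v)\le\delta+2i_ut+2i_v(t-1\ODD)\le(1+2i_u)\delta$ when $i_u\ge i_v$.

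For the complementary bound, there are two cases. If $p_{i_u}(\tau)\in B(u)\cap B(v)$, you get about $(1+i_u)\delta\le k\delta$. Otherwise $\min(h_{i_u+1}(u),h_{i_u+1}(v))\le t+h_{i_u}(\tau)\approx(i_u+1)\delta/2$, and the TZ bound at level $i_u+1$ gives about $(2k-2-i_u)\delta$.

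The route is thus evaluated one level below the TZ level, instead of at the same level $i+1$ as in your case 3. This saves $2\delta$, and $\max_{i}\min(2i+1,\,2k-2-i)=\lfloor(4k-3)/3\rfloor=\ceil{\tfrac{4k}{3}-\tfrac{5}{3}}$. Only after this do the parity terms, integrality, and the small-$\delta$ subcases (ending in $\delta=3$, $i_v=0$) need the care you describe.

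Separately, your running-time justification for the extra step is wrong as stated. A single cluster $C(w,A_1)$ can have size $\Theta(n)$ (e.g.\ an unsampled star centre); only the total size of all clusters is $\Ot(n^{1+1/k})$.
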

\begin{proof}
    Let 
    \begin{equation}\label{eqq:t=d-u-v-/2+1}
        t=d(u,v)/2+0.5\ODD,
    \end{equation}
    and let $\tau\in P(u,v)$ be such that $d(u,\tau)=t$ and $d(v,\tau)=t-1\ODD$.
    Let $i_u$ be the first index such that $p_{i_u}(u)\in B(\tau)$ or $p_{i_u}(\tau)\in B(u)$, similarly, let $i_v$ be the first index such that $p_{i_v}(v)\in B(\tau)$ or $p_{i_v}(\tau)\in B(v)$.
    
    From the minimality of $i_u$, we have that for every $0\le i<i_u$ it holds that $p_{i}(u)\notin B(\tau)$ and $p_{i}(\tau)\notin B(u)$. (The same holds also for $v$.) Therefore, we can apply \Cref{L-Bound-p-i-d} to get: 
    \begin{equation}\label{eq2:hiu-iut}
        h_{i_u}(u),h_{i_u}(\tau) \le i_u\cdot d(u,\tau)=i_u\cdot t
    \end{equation}
    Similarly, we get that:
    \begin{equation}\label{eq2:hiv-ivt-1}
        h_{i_v}(v),h_{i_v}(\tau) \le i_v\cdot d(v,\tau)=i_v\cdot (t-1\ODD)
    \end{equation}
    
    Next we prove the first bound on $\hat{d}(u,v)$ and show that $\hat{d}(u,v) \le d(u,v) + 2i_u t + 2i_v(t-1\ODD)$. 
    \begin{cclaim}\label{eq:bound-du-v-p-iu-piv}
        $\hat{d}(u,v) \le d(u,v) + 2i_u t + 2i_v(t-1\ODD)$
    \end{cclaim}
    \begin{proof}
        Assume, wlog, that $p_{i_u}(u)\in B(\tau)$ and $p_{i_v}(\tau)\in B(v)$ (the other cases follow using similar methods).
        Since $p_{i_u}(u), p_{i_v}(\tau)\in B(\tau)$ we have that: 
        \begin{equation}\label{eq2:bound-H}
            H(p_{i_u}(u), p_{i_v}(\tau)) \le d(p_{i_u}(u),\tau)+h_{i_v}(\tau)
        \end{equation}
        Since $p_{i_u}(u)\in B(u)$ and $p_{i_v}(\tau)\in B(v)$ in the final loop of the algorithm there is an iteration in which $w=p_{i_u}(u)$ and $z=p_{i_v}(\tau)$. Thus, after this iteration, we have that:
        \begin{align*} 
        \begin{split}
            \hat{d}(u,v) &\le d(u,p_{i_u}(u))+H(p_{i_u}(u), p_{i_v}(\tau))+d(p_{i_v}(\tau),v) \\&\stackrel{\ref{eq2:bound-H}}\le d(u,v) + (d(p_{i_u}(u),\tau)+h_{i_v}(\tau)) +d(p_{i_v}(\tau),v) \stactri\le d(u,v) + 2h_{i_u}(u) + 2h_{i_v}(\tau)
            \\&\stackrel{\ref{eq2:hiu-iut},\ref{eq2:hiv-ivt-1}}\le d(u,v) + 2i_u t + 2i_v(t-1\ODD)
        \end{split}
        \end{align*}
    \end{proof}

    Notice that since $\hat{d}(u,v) \le d(u,v) + 2i_u t + 2i_v(t-1\ODD)$, the case that $i_u \ge i_v$ is harder than the case that $i_v < i_u$. Therefore, for the rest of the proof, we assume that $i_u\ge i_v$. Let $c\ge0 $ be the value such that $i_v=i_u-c$. 
    From \Cref{eq:bound-du-v-p-iu-piv} we get that:
    \begin{align}\label{e-1.1}
        \begin{split}
        \hat{d}(u,v) &\stackrel{\ref{eq:bound-du-v-p-iu-piv}}\le d(u,v) + 2i_u t + 2i_v(t-1\ODD) \stackrel{i_v=i_u-c}=d(u,v) + 2i_u t + 2i_u (t-1\ODD) -c(t-1\ODD)
        \\&\stackrel{\ref{eqq:t=d-u-v-/2+1}}=d(u,v)(1+2i_u-c/2)+(c/2)\ODD
        \end{split}
    \end{align}
    
    Next, we prove the following claim that either bounds $\min(h_{i_u+1}(u), h_{i_u+1}(v))$ or bounds $\hat{d}(u,v)$.
    \begin{cclaim}\label{C-bound-h-i+1}
        One of the following conditions holds:
        \begin{itemize}
            \item $\min(h_{i_u+1}(u), h_{i_u+1}(v)) \le (i_u+1)d(u,v)/2 + (c/2-i_v+0.5)\cdot 1\ODD$
            \item $\hat{d}(u,v) \le kd(u,v)+(c-2i_v)\ODD$
        \end{itemize}
    \end{cclaim}
    \begin{proof}
    From the definition of $i_u$ it follows that for every $i_v< i< i_u$ it holds that $p_i(\tau)\notin B(u)$ and $p_i(u)\notin B(\tau)$. Therefore we can apply \Cref{L-Bound-p-i-d} and get that:
    \begin{align}\label{eq_g:h_i(u)h_i(v)}
    \begin{split}
        h_{i_u}(u),h_{i_u}(\tau) &\le (i_u-i_v)d(u,\tau)+h_{i_v}(\tau) \stackrel{\ref{eq2:hiv-ivt-1}}\le (i_u-i_v)t+i_v\cdot (t-1\ODD)
        \\&\stackrel{i_v=i_u-c}=ct+(i_u-c)(t-1\ODD)
        \\&\stackrel{\ref{eqq:t=d-u-v-/2+1}}
        =i_u\cdot d(u,v)/2+(c/2-i_v)\cdot 1\ODD
    \end{split}
    \end{align}

    If $p_{i_u}(\tau)\in B(u)\cap B(v)$, then in the final loop in the algorithm there is an iteration in which $w=z=p_{i_u}(\tau)$, and after this iteration we have that:
    \begin{equation*}
        \hat{d}(u,v) \le d(u,v) + 2h_{i_u}(\tau) \stackrel{\ref{eq_g:h_i(u)h_i(v)}}\le d(u,v)+ 2(i_u\cdot d(u,v)/2+(c/2-i_v)\cdot 1\ODD) \stackrel{i_u\le k-1}\le kd(u,v)+(c-2i_v)\ODD,
    \end{equation*}
    as required. 
    Otherwise, we have that $p_{i_u}(\tau)\notin B(u)$ or $p_{i_u}(\tau)\notin B(v)$, and therefore:
    \begin{align*}\label{eqq:min-h_i_v+1}
        \min(h_{i_u+1}(u), h_{i_u+1}(v)) &\le h_{i_u}(\tau)+t \stackrel{\ref{eq_g:h_i(u)h_i(v)}}\le i_u\cdot d(u,v)/2+(c/2-i_v)\cdot 1\ODD + t
        \\&\stackrel{\ref{eqq:t=d-u-v-/2+1}} = (i_u+1)d(u,v)/2 + (c/2-i_v+0.5)\cdot 1\ODD,
    \end{align*}
    as required.
    \end{proof}

    Armed with bounds \Cref{eq:bound-du-v-p-iu-piv} and \Cref{C-bound-h-i+1} we are ready to prove our lemma. We divide the proof of the lemma into two cases: the case where $c/2-i_v \le -1$ and the case where $c/2-i_v\ge 0$.
    We start by proving the case where $c/2-i_v \le -1$ in the following claim.
    \begin{cclaim}
        If $c/2-i_v\le -1$ then $\hat{d}(u,v) \le \ceil{\tfrac{4k}{3}-\tfrac{5}{3}}d(u,v)$
    \end{cclaim}
    \begin{proof}
    Notice, that if $i_u \le \ceil{\tfrac{2k}{3} - \tfrac{4}{3}}$, then we get that 
    \begin{align*}
        \hat{d}(u,v) \stackrel{\ref{e-1.1}}\le (1+2i_u)d(u,v) -c(t-1\ODD) \stackrel{c\ge 0}\le (1+2i_u)d(u,v) 
        \\\stackrel{i_u \le \ceil{\tfrac{2k}{3} - \tfrac{4}{3}}}\le (1+2\ceil{\tfrac{2k}{3} - \tfrac{4}{3}})d(u,v) \le \ceil{\tfrac{4k}{3}-\tfrac{5}{3}}d(u,v),
    \end{align*}
    as required.
    Therefore, for the rest of the proof of this case, we assume that $i_u > \ceil{\tfrac{2k}{3} - \tfrac{4}{3}}$, since $i_u$ and $\ceil{\tfrac{2k}{3} - \tfrac{4}{3}}$ are integers, we get that:
    \begin{equation}\label{eqq:i_u-big-1}
        i_u \ge \ceil{\tfrac{2k}{3} - \tfrac{4}{3}} + 1=\ceil{\tfrac{2k}{3} - \tfrac{1}{3}}
    \end{equation}
    
    From \Cref{CC-hat-d-u-v-le-thzq} we have that $\hat{d}(u,v) \le \THZQuery(u,v)$, and therefore from \Cref{L-THZ05-Q-Correctness-private} with parameter $(i_u+1)$ we get that
    \begin{align*}
        \hat{d}(u,v) &\le d(u,v)+2(k-1-i_u-1)d(u,v)+2\min(h_{i_u+1}(u), h_{i_u+1}(v))
        \\&\stackrel{C\ref{C-bound-h-i+1}}\le (1+2k-2-2i_u-2)d(u,v)+2((i_u+1)d(u,v)/2 + (c/2-i_v+0.5)\cdot 1\ODD)
        \\&=(2k-2-i_u)d(u,v) + (c-2i_v+1)\cdot 1\ODD \stackrel{c/2-i_v\le -1}\le (2k-2-i_u)d(u,v)
        \\&\stackrel{\ref{eqq:i_u-big-1}}\le (2k-2-\ceil{\tfrac{2k}{3} - \tfrac{1}{3}})d(u,v) \le \ceil{\tfrac{4k}{3}-\tfrac{5}{3}}d(u,v),
    \end{align*}
    as required. 
    Notice that from \Cref{C-bound-h-i+1} we might have that $\hat{d}(u,v) \le kd(u,v)+(c-2i_v)\ODD$, but since $c/2-i_v<0$ we get that in this case $\hat{d}(u,v) \le kd(u,v) \le \ceil{\tfrac{4k}{3}-\tfrac{5}{3}}d(u,v)$, as required.
    \end{proof}
    Next, we consider the case that $c/2-i_v\ge 0$.
    \begin{cclaim}
        If $c/2-i_v\ge 0$ then $\hat{d}(u,v) \le \ceil{\tfrac{4k}{3}-\tfrac{5}{3}}d(u,v)$
    \end{cclaim}
    \begin{proof}
         Notice, that if $i_u \le 2k/3-4/3+c/4 + (c/4)\cdot 1\ODD/d(u,v)$, then:
    \begin{align*}
    \begin{split}
        \hat{d}(u,v) &\stackrel{\ref{e-1.1}}\le (1+2i_u)d(u,v)-c(t-1\ODD)=(1+2i_u-c/2)d(u,v)-(c/2)\cdot 1\ODD 
        \\&\le (1+2(\tfrac{2k}{3}-\frac{4}{3}+c/4 + (c/4)\cdot 1\ODD/d(u,v))-c/2)d(u,v)-(c/2)\cdot 1\ODD
        \\&=(\tfrac{4k}{3}-\tfrac{5}{3})d(u,v),
    \end{split}
    \end{align*}
    as required. Therefore, for the rest of the proof, we can assume that:
    \begin{equation}\label{eqqqqqqq:i_u-big}
        i_u > \tfrac{2k}{3}-\tfrac{4}{3}+c/4 + c/4\cdot 1\ODD/d(u,v)
    \end{equation}
    
    From \Cref{CC-hat-d-u-v-le-thzq} we have that $\hat{d}(u,v) \le \THZQuery(u,v)$, and therefore from \Cref{L-THZ05-Q-Correctness-private} with parameter $(i_u+1)$ we get that:
    \begin{align}\label{eqaqa-bound-hatduv}
    \begin{split}
    \hat{d}(u,v) &\le d(u,v)+2(k-1-i_u-1)d(u,v)+2\min(h_{i_u+1}(u), h_{i_u+1}(v))
    \\&\stackrel{C\ref{C-bound-h-i+1}}\le (1+2k-2-2i_u-2)d(u,v) + 2((i_u+1)d(u,v)/2+(c/2-i_v+0.5)\cdot 1\ODD)
    \\&\stackrel{i_v=i_u-c}= (2k-2-i_u)d(u,v) + (3c-2i_u+1)\cdot 1\ODD
    \end{split}
    \end{align}

    Notice that in \Cref{C-bound-h-i+1} we might have that $\hat{d}(u,v) \le kd(u,v)+(c-2i_v)\ODD$, to have that $kd(u,v)+(c-2i_v)\ODD \le \ceil{\tfrac{4k}{3}-\tfrac{5}{3}}d(u,v)$, we need to have that either $d(u,v) \ge 4$ or $c<i_u$. Therefore, we divide the proof into three cases. The case that $d(u,v) \ge 4$, the case that $d(u,v)\le 3$ and $c<i_u$, and the case that $d(u,v)\le 3$ and $c=i_u$.
    \begin{subclaim}
        If $d(u,v) \ge 4$ then $\hat{d}(u,v)\le \ceil{\tfrac{4k}{3}-\tfrac{5}{3}}d(u,v)$
    \end{subclaim}
    \begin{proof}
    Since $d(u,v) \ge 4$, we have that $1\ODD \le d(u,v)/5$. Therefore:
    \begin{align*}
        \hat{d}(u,v) &\le (2k-2-i_u)d(u,v) + (3c-2i_u+1)\cdot 1\ODD
        \\&\stackrel{1\ODD \le d(u,v)/5}\le (2k-2-i_u+11c/20+1/5-2i_u/5)d(u,v) + (c/4)\cdot 1\ODD
        \\&=(2k-9/5-7i_u/5+11c/20)d(u,v) + (c/4)\cdot 1\ODD
        \\&\stackrel{c\le i_u}\le (2k-9/5-i_u+3c/20)d(u,v)
        \\&\stackrel{\ref{eqqqqqqq:i_u-big}}\le (2k-9/5-(\tfrac{2k}{3}-\tfrac{4}{3}+c/4 + c/4\cdot 1\ODD/d(u,v))+3c/20)d(u,v)  + (c/4)\cdot 1\ODD \\&\le (\tfrac{4k}{3}-\tfrac{7}{15}-c/10)
        \le \tfrac{4k}{3}-\tfrac{7}{15}
    \end{align*}
    Since $\hat{d}(u,v)$ is an integer, we get that $\hat{d}(u,v) \le \floor{\tfrac{4k}{3}-\tfrac{7}{15}}$, since $k$ is an integer we get that 
    $\floor{\tfrac{4k}{3}-\tfrac{7}{15}}\le \tfrac{4k}{3}-\tfrac{5}{3}$, and overall we get that $\hat{d}(u,v)\le \tfrac{4k}{3}-\tfrac{5}{3}$, as required.
    Notice that in \Cref{C-bound-h-i+1} we might have that:
    $$\hat{d}(u,v) \le kd(u,v)+(c-2i_v)\ODD \stackrel{c\le k-1}\le kd(u,v)+(k-1)\ODD \stackrel{1\ODD \le d(u,v)/5} \le 6/5kd(u,v),$$ as required.       
    \end{proof}
    
    Next, we consider the case where $d(u,v) \le 3$ and $c<i_u$.
    \begin{subclaim}
        If $d(u,v) \le 3$ and $c<i_u$ then $\hat{d}(u,v) \le \ceil{\tfrac{4k}{3}-\tfrac{5}{3}}d(u,v)$
    \end{subclaim}
    \begin{proof}
    We can solve the case that $d(u,v)=1$ in $O(n)$ time by checking whether $(u,v)\in E$. If $d(u,v)=2$ then we have that $1\ODD=0$, and the claim holds using the same methods. 
    If $d(u,v)=3$ we get that:
    \begin{align}\label{eqq3:temp3:3}
    \begin{split}
    \hat{d}(u,v)&\stackrel{\ref{eqaqa-bound-hatduv}}\le (2k-2-i_u)d(u,v) + (3c-2i_u+1)\cdot 1\ODD
    \\&\stackrel{d(u,v)=3}=6k-6-3i_u+3c-2i_u+1=6k-5i_u+3c-5
    \end{split}
    \end{align}
    In this case, if $i_u \le (4k-5)/6+c/3$, then:
    \begin{align*}
        \hat{d}(u,v)&\stackrel{\ref{e-1.1}}\le (1+2i_u-c/2)d(u,v)-(c/2)\cdot 1\ODD\stackrel{d(u,v)=3}=3+6i_u-2c \stackrel{i_u\le (4k-5)/6+c/3}\le (4k-5)
        \\&\stackrel{d(u,v)=3}=(\tfrac{4k}{3}-\tfrac{5}{3})\cdot d(u,v), \text{ as required.}
    \end{align*}
    Therefore, for the rest of the proof, we assume that $i_u > (4k-5)/6+c/3$. Thus:
    \begin{align*}
    \hat{d}(u,v) &\le 6k-5i_u+3c-5 < 6k-3((4k-5)/6+c/3)-2i_u+3c-5 \le 4k-2i_u+2c-2.5
    \\&\stackrel{c\le i_u-1}\le 4k-4.5
    \end{align*}
    Since $k$ and $\hat{d}(u,v)$ are integers, having $\hat{d}(u,v) < 4k-4.5$ guarantees that $\hat{d}(u,v) \le 4k-5$, as required since $(\tfrac{4k}{3}-\tfrac{5}{3})d(u,v)=4k-5$.
    \end{proof}
    
    Next, we complete the proof by considering the special case where $d(u,v)=3$ and $c=i_u$.
    \begin{subclaim}\label{sc-d=3i_v=0}
        If $d(u,v)=3$ and $c=i_u$ then $\hat{d}(u,v) =d(u,v) \le \ceil{\tfrac{4k}{3}-\tfrac{5}{3}}d(u,v)$
    \end{subclaim}
    \begin{proof}
        From the definition of $c$, in this case, we have that $i_v=0$.
        Let $P(u,v)=(u,u_1,v_1,v)$, since $i_v=0$, we have that $v_1\in B_0(v)\cup C(v,A_1)$, and from symmetry we can also get that $u_1\in B_0(u)\cup C(u,A_1)$, and using similar arguments we get that $v_1\in B(u_1)\cup C(u_1,A_1)$.
        
        Therefore, we get that $\min(d(u,w)+d(w,v) \mid w\in (\cup_{u_1\in B_0(u)\cup C(u,A_1)} B(u_1) \cup C(u_1,A_1)) \cap (B_0(v) \cup C_0(v)))=d(u,v)$, and therefore $\hat{d}(u,v)=d(u,v)$, as required.
    \end{proof}
    \end{proof}
\end{proof}

\end{document}